\let\footnote=\endnote
\newcommand{\E}{\mathbbm{E}}
\newcommand{\N}{\mathbbm{N}}
\newcommand{\R}{\mathbbm{R}}
\newcommand{\1}{\mathbbm{1}}
\newcommand{\Chi}{\mathcal{X}}
\newcommand{\la}{\lambda^{\ast}}
\newcommand{\Rex}{\bar{\R}}
\newcommand{\ubar}[1]{\underaccent{\bar}{#1}}
\newcommand{\diag}{\text{diag}}
\providecommand{\keywords}[1]
{
	\small	
	\textbf{\textit{Keywords---}} #1
}
\newtheorem{theorem}{Theorem}[section]
\newtheorem{definition}[theorem]{Definition}
\newtheorem{corollary}[theorem]{Corollary}
\newtheorem{lemma}[theorem]{Lemma}
\newtheorem{remark}[theorem]{Remark}
\newtheorem*{theorem*}{Theorem}
\newtheorem*{lemma*}{Lemma}
\newtheorem*{corrolary*}{Corollary}
\newenvironment{condition}[1]
  {\innercondition}
  {\endinnercondition}
\title{Portfolio Optimization with Allocation Constraints and Stochastic Factor Market Dynamics}
\author{Marcos Escobar-Anel \and Michel Kschonnek \and Rudi Zagst}
\date{\today}
\begin{document}

\maketitle
\begin{abstract}
We study the expected utility portfolio optimization problem in an incomplete financial market where the risky asset dynamics depend on stochastic factors and the portfolio allocation is constrained to lie within a given convex set. We employ fundamental duality results from real constrained optimization to formally derive a dual representation of the associated HJB PDE. Using this representation, we provide a condition on the market dynamics and the allocation constraints, which ensures that the solution to the HJB PDE is exponentially affine and separable. This condition is used to derive an explicit expression for the optimal allocation-constrained portfolio up to a deterministic minimizer and the solution to a system of Riccati ODEs in a market with CIR volatility and in a market with multi-factor OU short rate. 
\end{abstract}
\keywords{Portfolio Optimization, Allocation Constraints, Stochastic Factor, Convex Duality, Exponential Affine Separability, Incomplete Markets, HJB}

\section{Introduction}\label{sec: Introduction}
In this paper, we consider a portfolio optimization problem of an investor who trades in continuous time and seeks to maximize his utility from terminal  wealth at the end of a finite investment horizon. The investor is assumed to be risk-averse and his risk-preferences are modelled by a power utility function. Our problem setting differs from the classic problem formulated in \cite{merton1969} with respect to two main aspects:\\

{\emph{(i) Market coefficients dependent on a stochastic factor.}}\\

Modelling stochastic market coefficients as a function of an additional stochastic factor is a natural extension to the classic Black-Scholes model which can capture some of the stylized facts observed in the financial market. One of the earliest discussions of such models in a portfolio optimization context was in \cite{Zariphopoulou2001}, where the author was able to characterize the solution to the associated HJB equation for a power-utiltiy function in terms of a linear parabolic PDE. Further, under the assumption of a global Lipschitz-condition on the market coefficients, a verification result was proven. However, explicit closed-form expressions for the opitmal allocation were only given when the stochastic factor is completely uncorrelated with the financial market, i.e. when the optimal allocation is myopic.  If the stochastic factor correlates with the financial market, closed-form expressions for the optimal allocation were recovered on individual occasions, e.g. in \cite{Deelstra2000} and \cite{Kraft2002} for financial markets with stochastic short rate and in \cite{Kraft2005} for financial markets with stochastic volatility. These advances required the solvability of certain underlying Riccati ODEs. The seminal work of \cite{Liu2006} unified these approaches by introducing a class of models where the asset returns have a quadratic dependence on the stochastic factor. Within such quadratic models, the author directly characterizes the HJB PDE as an exponentially quadratic function of the stochastic factor with coefficients determined by the solution to a system of Riccati ODEs. When reducing the framework of \cite{Liu2006} to an affine dependence on the stochastic factor, the results are closely related to affine term structure models of \cite{Duffie1996}. This affine reduction proved to be particularly fruitful for portfolio optimization applications, see e.g. \cite{Kallsen2010a}, \cite{baeuerle2013} and \cite{Rubtsov2017}. In addition, an extensive overview of related literature is given in \cite{Zariphopoulou2009}.\\
	 
{\emph{(ii) Convex constraints on relative portfolio allocation.}}\\
	
	In most practical applications investors need to follow allocation constraints either due to regulatory requirements or due to client preferences. \cite{Xu1990} and \cite{karatzas1991} first presented a duality approach for portfolio optimization problems with constraints on short-selling and trading of individual assets. Their duality approach was later generalized for general convex allocation constraints in \cite{cvitanic1992}, who derived a dual optimal control problem which seeks the least favorable market coefficients among a suitable set of \enquote{dual} stochastic processes. Employing martingale techniques, the equivalence of their dual problem to the considered primal constrained portfolio optimization problem was established for a general class of complete financial markets with suitably bounded stochastic market coefficients. Moreover, closed-form formulas for the optimal allocation were provided for deterministic market coefficients and a power utility function with risk aversion coefficient between $0$ and $1.$ Independently, \cite{Zariph1994} characterized the  value function for constraints on the wealth invested into risky assets via viscosity solutions to the associated HJB PDE, but did not provide explicit closed-form formulas for the associated optimal allocation. Allocation constraints have been considered in a variety of different contexts since (see e.g. \cite{Cuoco1997}, \cite{Bouchard2004}, \cite{Zheng2011}, \cite{Nutz2012}, \cite{Larsen2013}, \cite{Dong2018}, \cite{Dong2020}), but explicit solutions and theoretical guarantees for the corresponding optimal allocation (especially in the non-myopic case) have remained scarce and solutions need to be estimated by suitable numerical schemes for many advanced models (e.g. \cite{Zheng2021}, \cite{Zhu2022}). \\
	
Aspects (i) and (ii) result in the investor not being able to replicate all measurable payoffs at the end of the investment horizon and therefore standard martingale techniques cannot be employed to characterize the value function of the optimization problem. Further, it is unclear if  the Hamilton-Jacobi-Bellman PDE (\enquote{HJB} PDE) of the optimization problem admits a smooth solution due to pointwise constraints on the optimal relative portfolio allocation. Both aspects have only been studied simultaneously on rare occasions. \cite{Pham2002} used a logarithmic transformation to characterize the solution to the constrained HJB PDE through the solution to a semilinear PDE. Assuming Lipschitz- and non-degeneracy conditions on the stochastic factor, the existence of a smooth solution to the transformed PDE can be guaranteed and closed-form expressions for the optimal portfolio can be given if the stochastic factor is uncorrelated with the financial market. In \cite{Detemple2005}, semi-closed-form expressions are provided for the optimal allocation for a market with generalized Vasicek short rate and bounds on the portfolio allocation to a hedging instrument for interest-rate risk. \cite{Mnif2007} and \cite{Mnif2011} develop numerical schemes for allocation constrained portfolio optimization problems in jump-diffusion models, where the asset volatilities and jumps depend on an external stochastic factor.\\

Our contribution to this literature is threefold. First, we present an approach to constrained portfolio optimization that transforms the HJB PDE associated with the constrained portfolio optimization problem into an equivalent dual PDE, which is the HJBI PDE associated with a dual minimzation problem akin to Condition (C) in \cite{cvitanic1992}. However, unlike in \cite{cvitanic1992}, the validity of this method is not tied to the completeness of the underlying financial market and can thus be applied in a broader context. Secondly, in the spirit of \cite{Liu2006}, we derive a condition on the dynamics of the financial market and the allocation constraints, which ensures that the value function of the optimal investment problem is exponentially affine. Lastly, we provide expressions for the allocation constrained optimal allocation in a market with multi-factor stochastic volatility of CIR-Type and multi-factor short rate of OU-type. These expressions are explicit up to a deterministic minimizer and the solution of a system of Riccati ODEs, which leads to a non-myopic optimal allocation if the stochastic factor correlates with the financial market. In particular, the optimal allocation is generally non-myopic. \\

The remainder of this paper is structured as follows: The financial market model and the portfolio optimization problem $\mathbf{(P)}$ are introduced in Section \ref{sec: Prerequisites}. Afterwards, in Section \ref{sec: The Dual HJBI PDE}, we use a result from real constrained optimization to show that the constrained HJB PDE associated with $\mathbf{(P)}$ is equivalent to the Hamilton-Jacobi-Bellman-Isaacs PDE (\enquote{HJBI PDE}) associated with a dual minimax problem and derive a condition under which the solution to both PDEs is exponentially affine. The versatility and use of the derived condition is illustrated in examples with deterministic market coefficients, stochastic volatility and stochastic short rate in Section \ref{sec: Examples}. Finally, Section \ref{sec: conclusion} concludes the paper. All proofs of Theorems, Lemmas and Corollaries in this paper can be found in the appendix.

\section{Prerequisites}\label{sec: Prerequisites}

We consider a finite time horizon $T>0$ and a complete, filtered probability space $(\Omega, \mathcal{F}_T,\mathbbm{F} = (\mathcal{F}_t)_{t \in [0,T]}, Q)$, where the filtration $\mathbbm{F}$ is generated by the independent $m$-dimensional Wiener process $W^z = \left(W^z(t)\right)_{t \in [0,T]}$ and  $d$-dimensional Wiener process $\hat{W} = \big(\hat{W}(t)\big)_{t\in [0,T]}$. On this probability space, we model a financial market consisting of $d$ assets, whose dynamics are dependent on the value of a another stochastic process $z$. To this end, we consider a constant $z_0 \in \R$ and the deterministic functions $\mu^z: [0,T]\times \R^m \rightarrow \R^m$ and $\Sigma^z: [0,T]\times \R^m \rightarrow \R^{m \times m}$, which are assumed to be sufficiently regular such that the SDE 
$$
dz(t) = \mu^z(t,z(t))dt + \Sigma^z(t,z(t))' dW^z(t), \quad z(0) = z_0, \ t\in[0,T]
$$
admits a solution $z = \left(z(t)\right)_{t\in [0,T]}$. In addition, we consider another deterministic function $\rho:[0,T]\times\R^m\rightarrow \R^{m\times d}$ with columns $\rho_i$, $i=1,...,d$ satisfying $\Vert \rho_i(t,x)\Vert \leq 1$ for all $x\in \R^m$.\footnote{We use $\Vert \cdot \Vert$ to denote the standard Euclidian norm.} The function $\rho$ is used to control the correlation between the diffusion driving $z$ and the diffusion driving the financial market. To this end, we define the stochastic processes $W_i = \left(W_i(t)\right)_{t\in [0,T]}$ as
$$
W_i(t) = \rho_i(t,z(t))'W^z(t) + \sqrt{1-\Vert \rho_i(t,z(t))\Vert^2}\hat{W}_i(t), \quad i=1,...,d.
$$ 
Then, by construction $W = (W_1,...,W_d)'$ is a $d$-dimensional Wiener process such that
$$
d\langle W^z_i, W_j\rangle_t = \rho_{ij}(t,z(t))dt.
$$
Finally, we consider three additional deterministic functions $r: [0,T]\times \R^m \rightarrow \R$, $\mu:[0,T]\times\R^m \rightarrow \R^d$ and $\Sigma:[0,T]\times\R^m \rightarrow \R^{d\times d}$ and define our financial market model $\mathcal{M}$ consisting of one risk-free asset $P_0$ and $d$ risky assets $P = (P_1,...,P_d)'$ through the dynamics
$$
dP_0(t) = P_0(t) \cdot r(t,z(t)) dt, \qquad P_0(0) = 1 
$$
and
$$
dP(t)  = \diag(P(t)) \cdot \left[\mu(t,z(t))dt + \Sigma(t,z(t))dW(t)\right], \quad P(0)= \1 \in \R^d.
$$
Again, we assume that $r,\ \mu,$ and $\Sigma$ are sufficiently regular so that solutions to the above SDEs exist. Moreover, we assume that $\Sigma$ is chosen such that
$$
\Sigma(t,z(t)) \ \text{is} \ \mathcal{L}[0,T]\otimes Q-\text{a.e.} \ \text{non-singular.}
$$
If the context is unambiguous, we only write $\mu^z,$ $\Sigma^z,$ $\rho,$ $r,$ $\mu,$ $\Sigma$ instead of $\mu^z(t,z(t)),$ $\Sigma^z(t,z(t)),$ $\rho(t,z(t)),$ $r(t,z(t)),$ $\mu(t,z(t)),$ $\Sigma(t,z(t))$ to improve the clarity of presentation.
 
The wealth process $V^{v_0, \pi}$ of an investor with initial wealth $v_0>0$ and trading in $\mathcal{M}$ according to a $d$-dimensional relative portfolio process $\pi$ satisfies the usual SDE
\begin{align}\label{eq: SDE V original market}
V^{v_0, \pi}(0) &= v_0 \nonumber \\
dV^{v_0, \pi}(t) &= V^{v_0, \pi}(t) \left(\left[r + (\mu-r\1)'\pi(t) \right]dt + \pi(t)'\Sigma dW(t) \right)
\end{align}

In this context, the relative portfolio allocation process $\pi= \big(\pi(t)\big)_{t\in [0,T]}$ is a $d$-dimensional process, where $\pi_i(t)$ denotes the fraction of wealth invested in the risky asset $P_i$ at time $t$. The remaining fraction $1-\1'\pi(t)$ is invested in the risk-free asset $P_0$. We restrict our analysis to the portfolio processes $\pi$, which guarantee that a unique, strictly positive solution to (\ref{eq: SDE V original market}) exists i.e. to $\pi$ in 
\begin{align}\label{eq: def. admissible strategy Heston}
    \Lambda = \Big \{ \pi = \big(\pi(t)\big)_{t \in [0,T]} \ \text{progr. measurable} \  \Big | \  \int_0^T \Vert \Sigma'\pi(t) \Vert^2 dt < \infty \ Q-a.s.\Big \}
\end{align}
If $\pi \in \Lambda$, it is straightforward to show that the unique solution $V^{v_0, \pi}(t)$ to (\ref{eq: SDE V original market}) can be expressed in closed-form as 
\begin{align}\label{eq: wealth process explicitly original market}
V^{v_0, \pi}(t) = v_0 &\exp \Big( \int_0^t r+ (\mu - r\1)' \pi(s) - \frac{1}{2}\Vert \Sigma'\pi(s)\Vert ^2 ds  + \int_0^t \pi(s)'\Sigma dW(s)\Big). \nonumber
\end{align}

Below, we often work with so-called Markovian controls $\pi\in \Lambda$, which are defined in feedback-form
$$\pi(t) = \ubar{\pi}(t,V^{v_0,\pi}(t),z(t)),$$
for a deterministic measurable function $\ubar{\pi}:[0,T]\times (0,\infty) \times \R^m \rightarrow \R^d$ of the current state $(t,V^{v_0,\pi}(t),z(t))$ of the financial market. As these processes are uniquely defined through the function $\ubar{\pi}$, we use the process $\pi$ and the function $\ubar{\pi}$ interchangeably. When a process is specifically defined in feedback-form, we follow the notation in \cite{Soner2006} and denote this fact by a \enquote{lower bar}, i.e. $\ubar{\pi}$. The set of admissible Markovian controls is thus denoted by $\ubar{\Lambda}$ and the wealth-process corresponding to a Markovian control $\ubar{\pi}$ is denoted by $V^{v_0, \ubar{\pi}}$. The remaining notation will carry over analogously. \\

For a closed convex set $K\subset \left(\R\cup\{\infty, -\infty\}\right)^d=:\Rex^d$ with non-empty interior and CRRA-utility function $U(v)= \frac{1}{b}v^b$ with $b<1$ and $b\neq 0$, our investor faces an allocation constrained primal portfolio optimization $\mathbf{(P)}$ of the form
\begin{equation*}
\mathbf{(P)}
\begin{cases}
 \Phi(v_0) &= \underset{\pi \in \Lambda_K}{\sup} \mathbbm{E}\big[U(V^{v_0, \pi}(T)) \big] \\
 \Lambda_K &= \big \{ \pi(t) \in K \ \mathcal{L}[0,T]\otimes Q-\text{a.e.} \ \big | \ \pi \in \Lambda \big \}
\end{cases}
\end{equation*}

We approach $(\mathbf{P})$ using classic methods from stochastic optimal control. For this purpose, let us introduce the  generalized primal portfolio optimization problem $\mathbf{(P^{(t,v,z)})}$ as 
\begin{equation*}
	\mathbf{(P^{(t,v,z)})}
	\begin{cases}
		\Phi(t,v,z) &= \underset{\pi \in \Lambda_K(t)}{\sup} \mathbbm{E}\big[U(V^{v_0, \pi}(T))\  \big | \ V^{v_0, \pi}(t) = v, \ z(t) = z \big] \\
		\Lambda_K(t) &= \big \{ \big(\pi(s)\big)_{s\in [t,T]} \ \big | \ \pi \in \Lambda_K \big \}.
	\end{cases}
\end{equation*} 
Then, the Hamilton-Jacobi-Bellman equation (\enquote{HJB equation}) associated with $\mathbf{(P^{(t,v,z)})}$ is given by
\begin{alignat}{2}\label{eq: constr. HJB PDE Factor Model}
	0&= \sup_{\pi \in K}\Big \{  G_t + v\left[r+ (\mu - r \1)'\pi\right]G_v +  \frac{1}{2}v^2 \Vert \Sigma'\pi \Vert^2 G_{vv} + \left(\mu^z\right)'\left(\nabla_z G\right) \nonumber \\
	& \qquad \qquad + v\left(\Sigma^z \rho \Sigma'\pi\right)' \nabla_z \left(G_v\right)+\frac{1}{2}\text{Trace}\left[\Sigma^z\left( \Sigma^z\right) '\nabla^2_zG\right]\Big \} \nonumber \\
	&=  G_t + v r G_v + \left(\mu^z\right)'\left(\nabla_z G\right) + \frac{1}{2}\text{Trace}\left[\Sigma^z\left( \Sigma^z\right) '\nabla^2_zG\right] \nonumber \\*
	& \quad + v \sup_{\pi \in K}\Big \{  (\mu - r \1)'\pi G_v +  \left(\Sigma^z \rho \Sigma'\pi\right)' \nabla_z \left(G_v\right) +  \frac{1}{2}v \Vert \Sigma'\pi \Vert^2 G_{vv}       \Big \} \\
	G(T,v,z) &= U(v), \label{eq: terminal condition constr. HJB equation}
\end{alignat}
Any (sufficiently regular) solution $G$ to (\ref{eq: constr. HJB PDE Factor Model}) yields a candidate optimal Markovian control through the maximizing argument 
\begin{align}\label{eq: candidate optimal allocation}
\ubar{\pi}^{\ast}(t,v,z) = \underset{\pi \in K}{\text{argmax}}\Big \{  (\mu - r \1)'\pi G_v +  \left(\Sigma^z \rho \Sigma'\pi\right)' \nabla_z \left(G_v\right) +  \frac{1}{2}v \Vert \Sigma'\pi \Vert^2 G_{vv}\Big \}.
\end{align}
Even without the additional presence of allocation constraints, it is notoriously difficult to characterize the solution $G$ to the HJB PDE (\ref{eq: constr. HJB PDE Factor Model}) and even more challenging to determine an explicit expressions for $G$. For this reason, we devote the upcoming Section \ref{sec: The Dual HJBI PDE} to deriving an equivalent dual representation of (\ref{eq: constr. HJB PDE Factor Model}), which leads to a dual approach to solving the allocation constrained portfolio optimization problem $(\mathbf{P})$.

\section{The Dual HJBI PDE}\label{sec: The Dual HJBI PDE}


\subsection{Dual Approach to Constrained Optimization over $\R^d$}\label{subsec: Dual Approach to Constrained Optimization}
In this subsection, we summarize a selection of the duality results from \cite{Rockafellar1974} and prove a duality statement for general constrained quadratic maximization problems over $\R^d$ in Lemma \ref{lem: constrained real optimization}, which forms the basis of our dual approach to solving $\mathbf{(P)}$. We would like to emphasize that the results presented in this subsection constitute only a tiny fraction of the more general methodology presented in \cite{Rockafellar1974} and generalizations can be made easily.\\

Consider a proper\footnote{A proper concave function $f: \R^d \rightarrow \Rex$ is a concave function with $f(x)<\infty$ for all $x\in \R^d$ and there exists at least one $x_0 \in \R^d$ such that $f(x_0)> -\infty$.} real concave function $f:\R^d \rightarrow \Rex$ and a function ${F:\R^d \times \R^d\rightarrow \Rex}$ with $F(x,0) = f(x)$ for all $x\in \R^d$. The function $F(x,u)$ can be regarded as a pertubed version of $f(x)$, with pertubation parameter $u$.\footnote{It is important to emphasize that the ensuing duality relations as well as their applicability are dependent on the specific choice of the pertubation $F(x,u)$ of $f(x)$.} Moreover, we define the value function of a maximization problem over $F$, resp. $f$ as
\begin{align*}
    \Phi(u):= \sup_{x\in \R^d}F(x,u) \qquad \Phi_P:= \Phi(0) = \sup_{x\in \R^d}\underbrace{F(x,0)}_{=f(x)}=   \sup_{x\in \R^d}f(x).
\end{align*}
Moreover, let us introduce the (concave) conjugate and bi-conjugate of $F$ w.r.t. $u$ as 
\begin{itemize}
\item $F^{\ast}(x,\lambda)= \sup_{u \in \R^d}\big( F(x,u) -  \lambda' u  \big)$ for $(x,\lambda) \in \R^d\times \R^d$
\item $F^{\ast \ast}(x,u) = \inf_{\lambda \in \R^d}\big(  F^{\ast}(x,\lambda) + \lambda' u \big)$ for $(x,u)\in \R^d\times \R^d$.
\end{itemize}
 The concave conjugate and bi-conjugate naturally appear, when characterizing the u.s.c. concave hull of $F(x,\cdot)$ as the closure of the concave hull of the epigraph of $F(x,\cdot)$, which in turn can be obtained as the intersection of half spaces that contain it. It can be shown that $F^{\ast \ast}(x,\cdot)$ is the u.s.c. concave hull of $F(x,\cdot):\R^d\rightarrow \Rex$ (Theorem 5 in \cite{Rockafellar1974}). In particular, we have $F^{\ast \ast}(x,u) = F(x,u)$ for all $u\in \R^d$ if and only if $F(x,\cdot)$ is u.s.c. concave in $u$.\footnote{
In the convex analysis literature (see e.g. equation (3.25) in \cite{Rockafellar1974}), it is customary to define the conjugate of $F$ in the concave sense (w.r.t. $u$) as
$$F^{\ast}(x,\lambda) = - \sup_{u \in \R^d}\big( F(x,u) -  \lambda' u\big) = \inf_{u \in \R^d}\big( \lambda' u  - F(x,u) \big).$$
In contrast to our definition, this has the satisfying consequence that the bi-conjugate of $F(x,\cdot)$ is obtained by taking the conjugate of $F^{\ast}(x,\cdot)$, i.e. $F^{\ast \ast}(x,\cdot) = (F^{\ast}(x,\cdot))^{\ast}(x,\cdot)$. 
According to our definition, we have the (slightly) less elegant version $F^{\ast \ast}(x,\cdot) =-(-F^{\ast}(x,\cdot))^{\ast}(x,\cdot)$. However, in the mathematical finance literature, in particular in a portfolio optimization context, our definition is more prevalent. Since we are going to apply the above theory in a portfolio optimization context, we decided to adhere to the corresponding convention.
} Investigation of the latter relation between $F$ and its bi-conjugate $F^{\ast \ast}$ explicitly for $u=0$ leads to the duality framework presented in \cite{Rockafellar1974}. For this purpose, we define the Lagrangian $L$ as $L(x,\lambda) = F^{\ast}(x,\lambda)$ and the optimal value of the dual optimization problem as
$$\Psi_D := \inf_{\lambda \in \R^d} \sup_{x\in \R^d}\big( L(x,\lambda) \big).$$
Moreover, if $F(x,\cdot):\R^d\rightarrow \Rex$ is proper, u.s.c. and concave $\forall x\in \R^d$, then we may equivalently write
$$\Phi_P = \sup_{x\in \R^d}F(x,0) = \sup_{x\in \R^d}F^{\ast \ast}(x,0) = \sup_{x\in \R^d} \inf_{ \lambda \in \R^d}L(x,\lambda).$$
In particular, so-called weak duality always holds in this context
\begin{align*}
\Phi_P = \sup_{x\in \R^d} \inf_{ \lambda \in \R^d}L(x,\lambda) \leq \inf_{\lambda  \in \R^d} \sup_{x \in \R^d} L(x,\lambda) = \Psi_D,    
\end{align*}
which can even be strengthened to strong duality $(\Phi_P = \Psi_D)$ if we can find a so-called saddle-point $(x^{\ast},\lambda^{\ast})$, which satisfies
$$L(x,\lambda^{\ast}) \leq L(x^{\ast},\lambda^{\ast}) \leq L(x^{\ast},\lambda)\quad \forall (x,\lambda)\in \R^d\times \R^d.$$

\begin{lemma}\label{lem: saddle point implies strong duality}
Let $F(x^{\ast},\cdot)$ be u.s.c. and concave in $u$ for a specific $x^{\ast}\in \R^d$. Then, the following two statements are equivalent:
\begin{itemize}
    \item[(i)] $x^{\ast}$ is optimal for $\mathbf{(P)}$, $\lambda^{\ast}$ is optimal for $\mathbf{(D)}$ and $\Phi_P = \Psi_D$
    \item[(ii)] $(x^{\ast},\lambda^{\ast})$ is a saddle-point of the Lagrangian $L$.
\end{itemize}
\end{lemma}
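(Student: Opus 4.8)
The plan is to prove the two implications separately, relying on a small number of facts that are already in place: the always-valid weak duality inequality $\Phi_P \leq \Psi_D$ (which holds unconditionally, since $F(x,0)\leq F^{\ast\ast}(x,0)$ for every $x$ and then the minimax inequality applies); the identity $\inf_{\lambda \in \R^d} L(x,\lambda) = F^{\ast\ast}(x,0)$, obtained by setting $u=0$ in the definition of the bi-conjugate; and the hypothesis that at the distinguished point $x^{\ast}$ the section $F(x^{\ast},\cdot)$ is u.s.c.\ concave, so that $F^{\ast\ast}(x^{\ast},0) = F(x^{\ast},0) = f(x^{\ast})$. A fourth elementary bound, $L(x^{\ast},\lambda^{\ast}) = F^{\ast}(x^{\ast},\lambda^{\ast}) \geq F(x^{\ast},0) = f(x^{\ast})$, obtained by evaluating the supremum defining $F^{\ast}$ at $u=0$, will also be used.

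For the implication (ii) $\Rightarrow$ (i), I would assume $(x^{\ast},\lambda^{\ast})$ is a saddle point and unpack its two inequalities. The left one, $L(x,\lambda^{\ast}) \leq L(x^{\ast},\lambda^{\ast})$ for all $x$, yields $\sup_{x} L(x,\lambda^{\ast}) = L(x^{\ast},\lambda^{\ast})$; the right one, $L(x^{\ast},\lambda^{\ast}) \leq L(x^{\ast},\lambda)$ for all $\lambda$, yields $\inf_{\lambda} L(x^{\ast},\lambda) = L(x^{\ast},\lambda^{\ast})$. Combining the second equality with the bi-conjugate identity and the hypothesis at $x^{\ast}$ gives $L(x^{\ast},\lambda^{\ast}) = F^{\ast\ast}(x^{\ast},0) = f(x^{\ast})$. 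Then the chain $\Psi_D = \inf_{\lambda}\sup_{x} L \leq \sup_{x} L(x,\lambda^{\ast}) = L(x^{\ast},\lambda^{\ast}) = f(x^{\ast}) \leq \sup_{x} f(x) = \Phi_P$ forces equality throughout once it is read together with weak duality $\Phi_P \leq \Psi_D$, so $\Phi_P = \Psi_D = f(x^{\ast}) = \sup_{x} L(x,\lambda^{\ast})$. These three equalities are precisely the assertions that $x^{\ast}$ solves $\mathbf{(P)}$, that $\lambda^{\ast}$ solves $\mathbf{(D)}$, and that strong duality holds.

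For the converse (i) $\Rightarrow$ (ii), I would first pin down the value $L(x^{\ast},\lambda^{\ast})$. The elementary bound gives $L(x^{\ast},\lambda^{\ast}) \geq f(x^{\ast})$, while optimality of $\lambda^{\ast}$, strong duality, and optimality of $x^{\ast}$ give, in turn, $L(x^{\ast},\lambda^{\ast}) \leq \sup_{x} L(x,\lambda^{\ast}) = \Psi_D = \Phi_P = f(x^{\ast})$; hence $L(x^{\ast},\lambda^{\ast}) = f(x^{\ast}) = \Phi_P = \Psi_D$. The left saddle-point inequality is then immediate, since $L(x,\lambda^{\ast}) \leq \sup_{x'} L(x',\lambda^{\ast}) = \Psi_D = L(x^{\ast},\lambda^{\ast})$ for every $x$. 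The right one follows from $L(x^{\ast},\lambda) \geq \inf_{\lambda'} L(x^{\ast},\lambda') = F^{\ast\ast}(x^{\ast},0) = f(x^{\ast}) = L(x^{\ast},\lambda^{\ast})$ for every $\lambda$, where once more the hypothesis at $x^{\ast}$ is the decisive input that upgrades the bound to an equality.

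The main obstacle, and the reason the hypothesis is localized to the single point $x^{\ast}$ rather than imposed on every section, is that one cannot invoke the global identity $\Phi_P = \sup_{x}\inf_{\lambda} L(x,\lambda)$, which requires u.s.c.\ concavity of $F(x,\cdot)$ for all $x$; in general only the inequality $\Phi_P \leq \sup_{x}\inf_{\lambda} L(x,\lambda)$ is available, and the argument must be arranged so as never to use the missing equality. The localized hypothesis supplies exactly what is needed: it converts $\inf_{\lambda} L(x^{\ast},\lambda) \geq f(x^{\ast})$ into the equality $\inf_{\lambda} L(x^{\ast},\lambda) = f(x^{\ast})$, and this single equality is what closes both chains. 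I therefore expect no analytic estimates to be required; the only delicate part is the bookkeeping of which of the four facts enters at each inequality.
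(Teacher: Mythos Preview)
Your proposal is correct and follows essentially the same route as the paper's proof: both directions are obtained by sandwiching $L(x^{\ast},\lambda^{\ast})$ between $\Phi_P$ and $\Psi_D$ via the identity $\inf_{\lambda} L(x^{\ast},\lambda)=F^{\ast\ast}(x^{\ast},0)=f(x^{\ast})$, with weak duality closing the chain. Your organization is in fact slightly tidier than the paper's in the (i)$\Rightarrow$(ii) direction, since you pin down $L(x^{\ast},\lambda^{\ast})=f(x^{\ast})$ using only the elementary bound $F^{\ast}(x^{\ast},\lambda^{\ast})\geq F(x^{\ast},0)$ and defer the u.s.c.\ concavity hypothesis to exactly where it is needed (the right saddle-point inequality), whereas the paper invokes it at the outset of the chain.
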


Note however that the concept of a saddle-point only provides a sufficient, but not a necessary condition for strong duality to hold. In particular, there are several other sufficient conditions for strong duality - one of which is the Slater condition.
\begin{lemma}[Slater's condition]\label{lem: Slater's condition}
\textcolor{white}{1}\\
Assume $F$ is concave jointly in $(x,u)$ and there exists an $\hat{x}\in \R^d$ such that $F(\hat{x}, u)$ is bounded below on a neighborhood of $u=0$. Then, strong duality holds, i.e.
$$\Phi_P = \Psi_D.$$
\end{lemma}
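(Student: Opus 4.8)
The plan is to reduce the claimed identity $\Phi_P = \Psi_D$ to a single statement about the value function $\Phi(u) = \sup_{x\in\R^d}F(x,u)$ and its biconjugate at the unperturbed point, and then to extract that statement from Slater's boundedness hypothesis. First I would rewrite the dual value purely in terms of the conjugate of $\Phi$. Starting from $\Psi_D = \inf_{\lambda}\sup_{x}L(x,\lambda) = \inf_{\lambda}\sup_{x}F^{\ast}(x,\lambda)$ and unfolding $F^{\ast}(x,\lambda) = \sup_{u}\big(F(x,u)-\lambda'u\big)$, the two suprema over $x$ and $u$ range over a product set and may be exchanged freely, so that $\sup_{x}F^{\ast}(x,\lambda) = \sup_{u}\big(\Phi(u)-\lambda'u\big) =: \Phi^{\ast}(\lambda)$. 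Hence $\Psi_D = \inf_{\lambda}\Phi^{\ast}(\lambda) = \Phi^{\ast\ast}(0)$, i.e. the dual value is exactly the biconjugate of $\Phi$ evaluated at $u=0$. Since $\Phi_P = \Phi(0)$ by definition, the assertion becomes equivalent to $\Phi(0) = \Phi^{\ast\ast}(0)$.

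Second I would exploit joint concavity. Because $F$ is concave jointly in $(x,u)$, the partial supremum $\Phi$ is concave in $u$: for near-optimal $x_1,x_2$ at $u_1,u_2$, the convex combination of the $x_i$ is admissible at the convex combination of the $u_i$, and passing to the supremum yields concavity. By the biconjugation result quoted earlier (Theorem 5 in \cite{Rockafellar1974}), $\Phi^{\ast\ast}$ is the u.s.c. concave hull—equivalently the closure—of the already concave $\Phi$. Consequently $\Phi(0) = \Phi^{\ast\ast}(0)$ holds as soon as $\Phi$ coincides with its closure at $u=0$, which is guaranteed if $\Phi$ is upper semicontinuous (in particular, continuous) at $0$.

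Third, and this is where Slater's hypothesis enters, I would convert the boundedness assumption into local finiteness and continuity of $\Phi$ at $0$. Since $\Phi(u)\geq F(\hat{x},u)$, the function $\Phi$ is bounded below on a symmetric neighbourhood $B$ of $0$. If $\Phi(0)=+\infty$, then weak duality $\Phi_P\leq\Psi_D$ already forces $\Psi_D=+\infty=\Phi_P$ and we are done; otherwise $\Phi(0)<\infty$, and concavity together with the lower bound on $B$ rules out any $+\infty$ value there, since a single $\Phi(u_0)=+\infty$ with $u_0\in B$ would, via $\Phi(0)\geq\tfrac{1}{2}\Phi(u_0)+\tfrac{1}{2}\Phi(-u_0)$, force $\Phi(0)=+\infty$. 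Thus $\Phi$ is finite on $B$, so $-\Phi$ is a convex function bounded above on an open set; invoking the standard fact that such a convex function is continuous there, $\Phi$ is continuous at $0$, hence upper semicontinuous at $0$, and $\Phi(0)=\Phi^{\ast\ast}(0)$ follows, completing the argument.

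The main obstacle I anticipate is this third step. The interchange of suprema and the biconjugate identity of the first two steps are essentially bookkeeping, but the extended-real-valued subtleties—excluding $+\infty$ values of $\Phi$ near $0$ and correctly passing from local boundedness to continuity of a concave function at a point of its effective domain—require care, and it is precisely here that the Slater condition does its work.
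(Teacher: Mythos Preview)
Your argument is correct. It follows the standard route for Fenchel/Lagrangian duality: rewrite $\Psi_D$ as $\Phi^{\ast\ast}(0)$ by commuting the two suprema, use joint concavity of $F$ to make $\Phi$ concave, and then let the Slater hypothesis force continuity of $\Phi$ at $0$ so that $\Phi(0)=\Phi^{\ast\ast}(0)$. The extended-real-value bookkeeping in your third step (ruling out $\Phi(0)=+\infty$ first, then excluding $+\infty$ values on the symmetric neighbourhood via the midpoint inequality, and finally invoking local boundedness $\Rightarrow$ continuity for convex functions) is handled cleanly.

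By contrast, the paper does not argue any of this directly: its entire proof is the one-line citation ``Follows immediately by considering $-F$ in Theorem~18(a) in \cite{Rockafellar1974}.'' What you have written is, in effect, a self-contained proof of (the relevant part of) that theorem specialised to the present setup. Your version has the advantage of explaining \emph{why} the Slater-type interior condition does the work---it is exactly what makes the perturbation value function $\Phi$ continuous at $u=0$---whereas the paper's version is shorter but opaque unless one opens Rockafellar. Either is acceptable; yours is the more informative of the two.
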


By using Lemma \ref{lem: saddle point implies strong duality} and Lemma \ref{lem: Slater's condition}, we can prove the following statement about constrained quadratic optimization over $\R^d$. This result forms the basis for the duality approach presented in Subsection \ref{subsec: Dual Approach to Allocation-Constrained Portfolio Optimization}.

\begin{lemma}\label{lem: constrained real optimization}
	Let $K\subset \Rex^d$ be closed convex with non-empty interior and let $\delta_K(x)= -\inf_{y\in K}(x'y)$ be the support function of $K.$ Further, consider a proper u.s.c. concave function $\tilde{f}:\R^d\rightarrow \Rex$ which admits a unique maximizer $x^{\ast}$ of $\tilde{f}$ over $K.$
	Then,
	\begin{align*}
		\sup_{x\in K}\tilde{f}(x) = \inf_{\lambda \in \R^d}\sup_{x\in \R^d} \Big  \{ \tilde{f}(x) + \delta_K(\lambda) + x'\lambda \Big \}.
	\end{align*}
	Further, if $\la$ minimizes $\sup_{x\in \R^d} \big \{ \tilde{f}(x) + \delta_K(\lambda) + x'\lambda \big \}$ and $\tilde{f}(x)+x'\la$ admits a unique maximizer $x \in \R^d$, then
	\begin{align}
		&\sup_{x\in \R^d} \big \{\tilde{f}(x) + \delta_K(\lambda^{\ast}) + x'\lambda^{\ast} \big \} = \tilde{f}(x^{\ast}) + \delta_K(\lambda^{\ast}) + \left(x^{\ast}\right)'\lambda^{\ast} \label{eq: optimal constrained x as unconstrained maximizer of general dual 1} \\
		\Leftrightarrow \qquad & \tilde{f}(x^{\ast}) = \sup_{x\in K}\tilde{f}(x) \quad \text{and} \quad x^{\ast} \in K. \label{eq: optimal constrained x as unconstrained maximizer of general dual 2}
	\end{align}
\end{lemma}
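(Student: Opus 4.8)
The plan is to recast $\sup_{x\in K}\tilde f(x)$ in the perturbation framework recalled above and to read off the stated dual from it. I would introduce the concave indicator $\iota_K$ of $K$ (equal to $0$ on $K$ and to $-\infty$ off $K$), so that $\Phi_P:=\sup_{x\in\R^d}\big(\tilde f(x)+\iota_K(x)\big)=\sup_{x\in K}\tilde f(x)$, and take the perturbation $F(x,u)=\tilde f(x)+\iota_K(x+u)$, which satisfies $F(x,0)=\tilde f(x)+\iota_K(x)$. Because $\tilde f$ is concave and $K$ is convex, $F$ is jointly concave in $(x,u)$, and for each fixed $x$ the map $u\mapsto\iota_K(x+u)$ is the concave indicator of the closed convex set $K-x$, hence u.s.c. and concave. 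Computing the Lagrangian $L(x,\lambda)=F^{\ast}(x,\lambda)=\sup_{u}\big(F(x,u)-\lambda'u\big)$ via the substitution $w=x+u$, and using that the concave conjugate of $\iota_K$ is exactly the support function, $\iota_K^{\ast}(\lambda)=\sup_{y\in K}(-\lambda'y)=\delta_K(\lambda)$, yields $L(x,\lambda)=\tilde f(x)+x'\lambda+\delta_K(\lambda)$. This identifies the right-hand side of the claimed identity with $\Psi_D=\inf_{\lambda}\sup_{x}L(x,\lambda)$, so the first assertion is precisely strong duality. To establish it I would invoke Slater's condition (Lemma \ref{lem: Slater's condition}): $F$ is jointly concave, and choosing $\hat x$ in the (nonempty) interior of $K$ at which $\tilde f$ is finite gives $F(\hat x,u)=\tilde f(\hat x)$ for all $u$ in a neighborhood of $0$, which is bounded below; hence $\Phi_P=\Psi_D$.

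For the equivalence I would combine this with the saddle-point characterization of Lemma \ref{lem: saddle point implies strong duality}. Write $x^{\ast}$ for the unique maximizer of $\tilde f$ over $K$ supplied by the hypotheses. Since strong duality now holds, $x^{\ast}$ is primal-optimal and $\lambda^{\ast}$ is dual-optimal, Lemma \ref{lem: saddle point implies strong duality} --- whose hypothesis that $F(x^{\ast},\cdot)$ be u.s.c. concave is exactly what was checked above --- shows that $(x^{\ast},\lambda^{\ast})$ is a saddle point of $L$. The left saddle inequality says that $x^{\ast}$ maximizes $L(\cdot,\lambda^{\ast})=\tilde f(\cdot)+(\cdot)'\lambda^{\ast}+\delta_K(\lambda^{\ast})$ over $\R^d$, so $x^{\ast}$ realizes the supremum in (\ref{eq: optimal constrained x as unconstrained maximizer of general dual 1}); conversely, since the maximizer of $\tilde f(\cdot)+(\cdot)'\lambda^{\ast}$ is assumed unique, any point attaining that supremum must coincide with $x^{\ast}$, and $x^{\ast}$ satisfies (\ref{eq: optimal constrained x as unconstrained maximizer of general dual 2}) by its very definition. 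This yields both implications. Along the way the right saddle inequality, evaluated against the explicit form of $L(x^{\ast},\cdot)$, produces the complementary-slackness identity $(x^{\ast})'\lambda^{\ast}+\delta_K(\lambda^{\ast})=0$, which is what reconciles the optimal value $\tilde f(x^{\ast})$ of the constrained problem with the value $\tilde f(x^{\ast})+(x^{\ast})'\lambda^{\ast}+\delta_K(\lambda^{\ast})$ of the Lagrangian.

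The step I expect to be the main obstacle is the verification of Slater's hypothesis, because it requires a point in the interior of $K$ at which $\tilde f$ is finite. This is immediate when $\tilde f$ is real-valued on a neighborhood of $K$ --- as in the quadratic objectives arising from the HJB problem --- but in the general extended-real-valued setting one must argue its existence from properness of $\tilde f$ together with the existence of the constrained maximizer, and take care that $\delta_K$ may equal $+\infty$ on part of $\R^d$ (so that the dual infimum effectively ranges over $\mathrm{dom}\,\delta_K$) since $K$ is only assumed to be a closed convex subset of $\Rex^d$. The remaining difficulty is purely one of bookkeeping: keeping the two roles of $x^{\ast}$ --- as the constrained maximizer and as the unique unconstrained maximizer of the Lagrangian at $\lambda^{\ast}$ --- aligned, which is precisely what the uniqueness assumption and the saddle-point identity are there to guarantee.
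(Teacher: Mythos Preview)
Your proposal is correct and follows essentially the same route as the paper's own proof: the same perturbation $F(x,u)=\tilde f(x)+\iota_K(x+u)$, the same substitution to compute the Lagrangian $L(x,\lambda)=\tilde f(x)+x'\lambda+\delta_K(\lambda)$, Slater's condition (Lemma~\ref{lem: Slater's condition}) via an interior point of $K$ for strong duality, and then the saddle-point characterization (Lemma~\ref{lem: saddle point implies strong duality}) combined with the two uniqueness hypotheses to identify the constrained maximizer $x^{\ast}$ with the unconstrained maximizer of $L(\cdot,\lambda^{\ast})$. Your caveat about needing $\tilde f(\hat x)>-\infty$ at the interior point is well observed; the paper glosses over this (it even writes ``$f$ is continuous'', which is not quite right), but as you note, and as Remark~\ref{rem: constrained quadratic optimization} makes explicit, the intended applications have $\tilde f$ quadratic and hence finite everywhere, so the gap is harmless in context.
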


\begin{remark}\label{rem: constrained quadratic optimization}
In particular, any quadratic function $\tilde{f}(x) = -x'Ax + b'x + c,$ with symmetric, positive definite matrix $A\in \R^{d\times d},$ $b\in \R^d$ and $c\in \R$ satisfies the requirements of Lemma \ref{lem: constrained real optimization}, due to the strict concavity of $\tilde{f}(x)+x'\lambda$ and $\tilde{f}(x)+x'\lambda\rightarrow -\infty$ for $\Vert x \Vert \rightarrow \infty$ and any $\lambda \in \R^d.$
\end{remark}

\begin{remark}\label{rem: slackness condition}
Combining equations (\ref{eq: optimal constrained x as unconstrained maximizer of general dual 1}) and (\ref{eq: optimal constrained x as unconstrained maximizer of general dual 2}) yields the complementary slackness condition 
\begin{align*}
	\tilde{f}(x^{\ast}) &= \sup_{x\in K}\tilde{f}(x) = \inf_{\lambda \in \R^d}\sup_{x\in \R^d} \Big  \{ \tilde{f}(x) + \delta_K(\lambda) + x'\lambda \Big \} \\
	&= \sup_{x\in \R^d} \big \{\tilde{f}(x) + \delta_K(\lambda^{\ast}) + x'\lambda^{\ast} \big \} = \tilde{f}(x^{\ast}) + \delta_K(\lambda^{\ast}) + \left(x^{\ast}\right)'\lambda^{\ast} \\
	\Leftrightarrow 0 &= \delta_K(\lambda^{\ast}) + \left(x^{\ast}\right)'\lambda^{\ast}.
\end{align*}
\end{remark}


\subsection{Dual Approach to Allocation Constrained Portfolio Optimization}\label{subsec: Dual Approach to Allocation-Constrained Portfolio Optimization}
In this subsection, we use Lemma \ref{lem: constrained real optimization} to derive an equivalent dual representation of the HJB equation (\ref{eq: constr. HJB PDE Factor Model}), which can be regarded as the Hamilton-Jacobi-Bellman-Isaacs equation (\enquote{HJBI equation}) associated with a dual optimization problem over a certain class of stochastic processes. This dual optimization problem as well as our solution approach to constrained portfolio optimization will closely resemble optimality Condition (C) from \cite{cvitanic1992}. However, unlike \cite{cvitanic1992}, we arrive at this approach applying duality arguments directly to the pointwise optimization on the level of the HJB PDE (\ref{eq: constr. HJB PDE Factor Model}) rather than on the level of stochastic processes. This reduces the level of technicality involved and removes the necessity for market completeness as a central underlying assumption.

\begin{lemma}\label{lem: dual HJB PDE}{(Dual HJBI PDE)} 
\textcolor{white}{1}\\
Let  $G \in C^{(1,2,2)}([0,T]\times (0,\infty)\times \R^m)$ be strictly concave and strictly increasing in the second component $v.$ Then, $G$ is a solution to (\ref{eq: constr. HJB PDE Factor Model}) if and only if $G(T,v,z)=U(v)$ and 
\begin{align}\label{eq: dual HJB PDE general}
    0= G_t &+ v r G_v + \left(\mu^z\right)'\left(\nabla_z G\right) + \frac{1}{2}\text{Trace}\left[\Sigma^z\left( \Sigma^z\right) '\nabla^2_zG\right] \nonumber \\
    & + v \inf_{\lambda \in \R^d}\sup_{\pi \in \R^d}\Big \{ \left[ \delta_K(\lambda) +(\mu + \lambda - r \1)'\pi \right]G_v   +  \left(\Sigma^z \rho \Sigma'\pi\right)' \nabla_z \left(G_v\right) +  \frac{1}{2}v \Vert \Sigma'\pi \Vert^2 G_{vv} \Big \}  \\
   = G_t &+ v r G_v + \left(\mu^z\right)'\left(\nabla_z G\right) + \frac{1}{2}\text{Trace}\left[\Sigma^z\left( \Sigma^z\right) '\nabla^2_zG\right] \nonumber \\
   	& + v \inf_{\lambda \in \R^d}\Big \{ \delta_K (\lambda) G_v- \frac{1}{2}\frac{1}{vG_{vv}}\Vert \Sigma^{-1}\left[\mu + \lambda - r\1\right]G_v + \left(\Sigma^z\rho\right)'\nabla_z\left(G_v\right)\Vert^2 \Big \}. \nonumber
\end{align}
\end{lemma}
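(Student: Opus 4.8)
\emph{Plan of proof.} The key structural observation is that the constrained HJB PDE (\ref{eq: constr. HJB PDE Factor Model}) and the candidate dual PDE (\ref{eq: dual HJB PDE general}) agree in their terminal condition and in every term except the pointwise optimization over the allocation: both contain the identical block $G_t + vrG_v + (\mu^z)'(\nabla_z G) + \tfrac{1}{2}\text{Trace}[\Sigma^z(\Sigma^z)'\nabla^2_z G]$, while the remaining term is $v$ times a supremum over $\pi$. Consequently the asserted ``if and only if'' collapses to a single pointwise identity that must hold for each fixed $(t,v,z)\in[0,T]\times(0,\infty)\times\R^m$, namely that the constrained supremum $\sup_{\pi\in K}\tilde f(\pi)$ equals the inner $\inf_{\lambda}\sup_{\pi}$ expression of (\ref{eq: dual HJB PDE general}), where $\tilde f(\pi):=(\mu-r\1)'\pi\,G_v+(\Sigma^z\rho\Sigma'\pi)'\nabla_z(G_v)+\tfrac{1}{2} v\Vert\Sigma'\pi\Vert^2G_{vv}$ denotes the objective of the constrained maximization in (\ref{eq: constr. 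HJB PDE Factor Model}). Since all other terms and the terminal condition carry over verbatim, establishing this scalar identity proves both directions at once.

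Next I would verify that $\tilde f$ meets the hypotheses of Lemma~\ref{lem: constrained real optimization}. Writing $\tilde f(\pi)=-\pi'A\pi+b'\pi$ with $A=-\tfrac{1}{2} vG_{vv}\Sigma\Sigma'$ and $b=(\mu-r\1)G_v+\Sigma\rho'(\Sigma^z)'\nabla_z(G_v)$, the strict monotonicity and strict concavity of $G$ in $v$ yield $G_v>0$ and $G_{vv}<0$; together with $v>0$ and the non-singularity of $\Sigma$ this makes $A$ symmetric positive definite. Hence $\tilde f$ is a strictly concave quadratic of the type covered by Remark~\ref{rem: constrained quadratic optimization}, so it admits a unique maximizer over $K$ and Lemma~\ref{lem: constrained real optimization} applies, giving $\sup_{\pi\in K}\tilde f(\pi)=\inf_{\lambda\in\R^d}\sup_{\pi\in\R^d}\{\tilde f(\pi)+\delta_K(\lambda)+\pi'\lambda\}$.

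The step I expect to be the main (if minor) obstacle is reconciling this representation with the precise form in (\ref{eq: dual HJB PDE general}), where the support-function and multiplier terms appear multiplied by $G_v$. I would resolve this via the substitution $\lambda\mapsto G_v\lambda$, which is a bijection of $\R^d$ because $G_v>0$, together with the positive homogeneity of the support function, $\delta_K(G_v\lambda)=G_v\delta_K(\lambda)$. This turns $\delta_K(\lambda)+\pi'\lambda$ into $G_v[\delta_K(\lambda)+\pi'\lambda]$ and, after regrouping the $G_v$-terms, recasts the objective exactly as $[\delta_K(\lambda)+(\mu+\lambda-r\1)'\pi]G_v+(\Sigma^z\rho\Sigma'\pi)'\nabla_z(G_v)+\tfrac{1}{2} v\Vert\Sigma'\pi\Vert^2G_{vv}$, which is the inner expression on the first line of (\ref{eq: dual HJB PDE general}).

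Finally, to obtain the explicit second line of (\ref{eq: dual HJB PDE general}), I would evaluate the inner unconstrained supremum over $\pi\in\R^d$ by completing the square: since $vG_{vv}<0$ and $\Sigma\Sigma'$ is positive definite, the objective is a strictly concave quadratic in $\pi$ with a unique maximizer, and its maximal value equals $\delta_K(\lambda)G_v-\tfrac{1}{2}\tfrac{1}{vG_{vv}}\Vert\Sigma^{-1}[\mu+\lambda-r\1]G_v+(\Sigma^z\rho)'\nabla_z(G_v)\Vert^2$, where I use $(\Sigma\Sigma')^{-1}=(\Sigma^{-1})'\Sigma^{-1}$ and $(\Sigma^z\rho)'=\rho'(\Sigma^z)'$ to rewrite the quadratic form as a squared Euclidean norm. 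Substituting back gives the stated dual HJBI PDE and completes the equivalence.
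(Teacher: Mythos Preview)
Your proposal is correct and follows essentially the same approach as the paper: both reduce the equivalence to a pointwise identity for the supremum term, invoke Lemma~\ref{lem: constrained real optimization} via Remark~\ref{rem: constrained quadratic optimization} (using $G_v>0$, $G_{vv}<0$ and non-singularity of $\Sigma$), apply the bijective change of variable $\lambda\mapsto G_v\lambda$ together with positive homogeneity of $\delta_K$, and then evaluate the inner unconstrained quadratic maximum explicitly. The only cosmetic difference is that the paper writes out the first-order condition for $\pi_\lambda$ and substitutes, whereas you phrase the same computation as completing the square.
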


The dual PDE (\ref{eq: dual HJB PDE general}) is the  HJBI PDE {(see e.g. Section 4.2 in \cite{Isaacs1999}) associated with the minimax stochastic control problem 
\begin{align}\label{eq: dual optimization problem (general)}
    \inf_{\lambda \in \mathcal{D}}\sup_{\pi \in \Lambda} \mathbbm{E}\Big[ U\Big(\underbrace{V^{v_0,\pi}(T)\cdot \exp\Big(\int_0^T \lambda(t)'\pi(t)+\delta_K(\lambda(t))dt\Big)}_{=: V^{v_0,\pi}_{\lambda}(T)}\Big)\Big],
\end{align}
where the dual control $\lambda = (\lambda(t))_{0\leq t \leq T}$ is taken from a suitable space of progressively measurable processes
\begin{align*}
    \mathcal{D} =  \Big \{ &\lambda = \big(\lambda(t)\big)_{t \in [0,T]} \ \text{progr. measurable} \  \Big |  \  \int_0^T \Vert\lambda(t)\Vert^2+\delta_K(\lambda(t))  dt < \infty \ Q-a.s.\Big \}
\end{align*}
Just as with portfolio processes, we refer to dual processes defined in feedback form $\lambda(t) = \ubar{\lambda}(t,V^{v_0,\pi}_{\lambda}(t),z(t))$, for a deterministic measurable function $\ubar{\lambda}$, as Markovian dual controls. Analogously, when specifically referring to Markovian dual controls, we write \enquote{$\ubar{\lambda}$} and collect all admissible Markovian controls in $\ubar{\mathcal{D}}.$ \\

Although intuitively appealing, the relationship between HJB(I) PDEs and the associated optimization problems still requires formal mathematical justification via verification theorems. Due to the generality of the setting considered in this work, we can thus only provide general verification theorems under additional assumptions on the candidate optimal controls $\ubar{\pi}^{\ast}$ (and $\ubar{\lambda}^{\ast}),$ the solution $G$ to the HJB(I) PDE and the financial market $\mathcal{M}$. Verifying such conditions is typically only feasible in more narrowly focussed settings (see e.g. Corollary \ref{cor: uniform integrability B.S. model} and \ref{cor: uniform integrability OU model} in Section \ref{sec: Examples}).  \\
Here, we make the relation between the PDE (\ref{eq: dual HJB PDE general}) and the dual control problem (\ref{eq: dual optimization problem (general)}) more precise, by proving a verification theorem, which relies on an additional uniform integrability condition \ref{cond: uniform integrability condition for dual} (compare to e.g. Definition 4.2 in \cite{Kraft2005}). 

\begin{condition}{($\text{UI}_{\lambda}$)}\label{cond: uniform integrability condition for dual}
For given $n\in \N$, $t\in [0,T]$,  $G \in C^{(1,2,2)}([0,T]\times (0,\infty)\times \R^m)$, $\lambda \in \mathcal{D}$, $\pi \in \Lambda$ we define the stopping time $\tau^{\lambda}_{n,t}=\min(T,\hat{\tau}^{\lambda}_{n,t})$, with
\begin{align*}
	\hat{\tau}^{\lambda}_{n,t} = \inf \Big \{t\leq u \leq T \ \Big | \ &\int_t^u \left(V_{\lambda}^{v_0, \pi}(s)\cdot \Vert \Sigma(s,z(s))'\pi(s)\Vert\cdot G_v(s,V_{\lambda}^{v_0, \pi}(s),z(s))\right)^2ds \geq n, \\ 
	&\int_t^u \Vert \left(\Sigma^z(s,z(s)\right)'\nabla_z\left(G\right)(s,V_{\lambda}^{v_0, \pi}(s),z(s))\Vert^2ds \geq n \Big \}.
\end{align*}
We say that $G,$ $\pi,$ and $\lambda$ satisfy condition \ref{cond: uniform integrability condition for dual} if for every $t\in [0,T]$, the sequence $\big(G(\tau^{\lambda}_{n,t},V_{\lambda}^{v_0, \pi}(\tau^{\lambda}_{n,t}),z(\tau^{\lambda}_{n,t}))\big)_{n\in \N}$ is uniformly integrable.
\end{condition}

\begin{remark}\label{rem: consequance condition UI}
	According to Theorem 4.5.4 in \cite{Bogachev2007}, if $G,$ $\pi,$ and $\lambda$ satisfy Condition \ref{cond: uniform integrability condition for dual}, then we have for every $t\in [0,T]$ that $\tau_{n,t}\rightarrow T$ $Q$-a.s., as $n\rightarrow \infty$ and 
	\begin{align}\label{eq: dominated convergence from condition UI}
		\mathbbm{E} \Big[G(T,V_{\lambda}^{v_0, \pi}(T),z(T)) \ \Big |  \mathcal{F}_t \Big] &= \mathbbm{E} \Big[  \lim_{n\rightarrow \infty}G(\tau_{n,t},V_{\lambda}^{v_0, \pi}(\tau_{n,t}),z(\tau_{n,t})) \ \Big |  \mathcal{F}_t \Big] \nonumber \\
		&=\lim_{n\rightarrow \infty}\mathbbm{E} \Big[G(\tau_{n,t},V_{\lambda}^{v_0, \pi}(\tau_{n,t}),z(\tau_{n,t})) \ \Big | \mathcal{F}_t \Big].
	\end{align}
\end{remark}

\begin{lemma}[Verification Theorem Dual Control Problem]\label{lem: weak verification theorem dual HJB PDE}
\textcolor{white}{1}\\
Let  $G \in C^{(1,2,2)}([0,T]\times (0,\infty)\times \R^m)$ be a solution to the dual HJB equation (\ref{eq: dual HJB PDE general}), be non-negative,  strictly concave and increasing in $v$. Let the feedback controls $\ubar{\lambda}^{\ast}(t,v,z)$, $\ubar{\pi}^{\ast}(t,v,z)$ be such that for all $(t,v,z)\in [0,T]\times (0,\infty)\times \R^m$
\begin{align*}
	&\inf_{\lambda \in \R^d}\sup_{\pi \in \R^d}\Big \{ \left[ \delta_K(\lambda) +(\mu + \lambda - r \1)'\pi \right]G_v  +  \left(\Sigma^z \rho \Sigma'\pi\right)' \nabla_z \left(G_v\right) +  \frac{1}{2}v \Vert \Sigma'\pi \Vert^2 G_{vv} \Big \} \\
	= &  \left[\delta_K(\ubar{\lambda}^{\ast}(t,v,z)) +(\mu + \ubar{\lambda}^{\ast}(t,v,z) - r \1)'\ubar{\pi}^{\ast}(t,v,z)\right]G_v  \\
	&  \qquad \qquad \qquad \quad  +  \left(\Sigma^z \rho \Sigma'\ubar{\pi}^{\ast}(t,v,z)\right)' \nabla_z \left(G_v\right) +  \frac{1}{2}v \Vert \Sigma'\ubar{\pi}^{\ast}(t,v,z) \Vert^2 G_{vv}.
\end{align*}
Then the following holds $\forall (t,v,z)\in [0,T]\times (0,\infty)\times \R^m$:
\begin{itemize}
    \item[(i)] If $(\ubar{\lambda}^{\ast},\pi)\in \ubar{\mathcal{D}}\times \Lambda$ satisfy condition \ref{cond: uniform integrability condition for dual}, then 
\begin{align}\label{eq: dual optimality la}
  G(t,v,z) \geq  \mathbbm{E}\big[U(V_{\ubar{\lambda}^{\ast}}^{v_0, \pi}(T))\  \big | \ V_{\ubar{\lambda}^{\ast}}^{v_0, \pi}(t) = v, \ z(t) =z \big].  
\end{align}
    \item[(ii)] If $(\lambda,\ubar{\pi}^{\ast})\in \mathcal{D}\times \ubar{\Lambda}$ satisfy condition \ref{cond: uniform integrability condition for dual}, then 
\begin{align}\label{eq: dual optimality pi ast}
G(t,v,z) \leq  \mathbbm{E}\big[U(V_{\lambda}^{v_0,\ubar{\pi}^{\ast}}(T))\  \big | \ V_{\lambda}^{v_0,\ubar{\pi}^{\ast}}(t) = v, \ z(t) =z \big].
\end{align}
    \item[(iii)] If $(\ubar{\lambda}^{\ast},\ubar{\pi}^{\ast})\in \ubar{\mathcal{D}}\times \ubar{\Lambda}$ satisfy condition \ref{cond: uniform integrability condition for dual}, then 
\begin{align}\label{eq: dual equality la and pi}
G(t,v,z) =  \mathbbm{E}\big[U(V_{\ubar{\lambda}^{\ast}}^{v_0,\ubar{\pi}^{\ast}}(T))\  \big | \ V_{\ubar{\lambda}^{\ast}}^{v_0,\ubar{\pi}^{\ast}}(t) = v, \ z(t) =z \big].
\end{align}
\end{itemize}
\end{lemma}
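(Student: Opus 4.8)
The plan is to apply It\^o's formula to the process $s\mapsto G(s,V_{\lambda}^{v_0,\pi}(s),z(s))$ along the dual wealth dynamics and the factor dynamics, and to show that its drift coincides, up to the strictly positive factor $v$, with the gap between the Hamiltonian evaluated at the chosen controls and the saddle value $\inf_{\lambda}\sup_{\pi}$ appearing in (\ref{eq: dual HJB PDE general}). Since $G$ solves that PDE, this gap will be non-positive, non-negative, or zero according to which of the two controls is frozen at its feedback-optimal value, and the corresponding super-/sub-/true-martingale property of $G$ will yield (i)--(iii) after taking conditional expectations.

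First I would record the dynamics of the dual wealth process. From the closed form of $V^{v_0,\pi}$ and the definition of $V^{v_0,\pi}_{\lambda}$ one obtains
\begin{align*}
 dV^{v_0,\pi}_{\lambda}(s)=V^{v_0,\pi}_{\lambda}(s)\Big(\big[r+(\mu+\lambda-r\1)'\pi+\delta_K(\lambda)\big]ds+\pi'\Sigma\,dW(s)\Big).
\end{align*}
Writing $\mathcal{L}_0 G:=G_t+vrG_v+(\mu^z)'\nabla_z G+\tfrac12\text{Trace}[\Sigma^z(\Sigma^z)'\nabla^2_z G]$ and
\begin{align*}
 \mathcal{H}(\pi,\lambda):=\big[\delta_K(\lambda)+(\mu+\lambda-r\1)'\pi\big]G_v+(\Sigma^z\rho\Sigma'\pi)'\nabla_z(G_v)+\tfrac12 v\Vert\Sigma'\pi\Vert^2 G_{vv},
\end{align*}
It\^o's formula (with the correlation $d\langle W^z_i,W_j\rangle_s=\rho_{ij}\,ds$ producing the cross term $v(\Sigma^z\rho\Sigma'\pi)'\nabla_z(G_v)$) shows that the drift of $G(s,V^{v_0,\pi}_{\lambda},z)$ equals $\mathcal{L}_0 G+v\,\mathcal{H}(\pi,\lambda)$, while its martingale part is $\int G_v V^{v_0,\pi}_{\lambda}\pi'\Sigma\,dW+\int(\nabla_z G)'(\Sigma^z)'\,dW^z$. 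Because $G$ solves (\ref{eq: dual HJB PDE general}), $\mathcal{L}_0 G=-v\inf_{\lambda}\sup_{\pi}\mathcal{H}(\pi,\lambda)$, so the drift is
\begin{align*}
 v\Big(\mathcal{H}(\pi,\lambda)-\inf_{\lambda'\in\R^d}\sup_{\pi'\in\R^d}\mathcal{H}(\pi',\lambda')\Big).
\end{align*}

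Next I would exploit the saddle structure. Since $G_{vv}<0$ (strict concavity in $v$) and $\Sigma$ is non-singular, $\mathcal{H}(\cdot,\lambda)$ is strictly concave and coercive in $\pi$, so Remark \ref{rem: constrained quadratic optimization} and Lemma \ref{lem: constrained real optimization} apply pointwise; the feedback pair $(\ubar{\pi}^{\ast},\ubar{\lambda}^{\ast})$ realizing the value is then a saddle point, i.e. $\mathcal{H}(\pi,\ubar{\lambda}^{\ast})\le\mathcal{H}(\ubar{\pi}^{\ast},\ubar{\lambda}^{\ast})=\inf\sup\mathcal{H}\le\mathcal{H}(\ubar{\pi}^{\ast},\lambda)$ for all $\pi,\lambda\in\R^d$. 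Consequently, in case (i) (freeze $\lambda=\ubar{\lambda}^{\ast}$, arbitrary $\pi$) the drift is $\le0$, in case (ii) (freeze $\pi=\ubar{\pi}^{\ast}$, arbitrary $\lambda$) it is $\ge0$, and in case (iii) it vanishes identically, making $G(\cdot,V_{\lambda}^{v_0,\pi},z)$ a local super-, sub-, or true martingale respectively.

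Finally I would localise with the stopping times $\tau^{\lambda}_{n,t}$ of Condition \ref{cond: uniform integrability condition for dual}. By their definition the two integrands of the martingale part are square-integrable on $[t,\tau^{\lambda}_{n,t}]$, so the stopped stochastic integrals are true martingales, and taking $\E[\,\cdot\mid\mathcal{F}_t]$ gives
\begin{align*}
 \E\big[G(\tau^{\lambda}_{n,t},V^{v_0,\pi}_{\lambda}(\tau^{\lambda}_{n,t}),z(\tau^{\lambda}_{n,t}))\mid\mathcal{F}_t\big]=G(t,v,z)+\E\Big[\textstyle\int_t^{\tau^{\lambda}_{n,t}}v\big(\mathcal{H}-\inf\sup\mathcal{H}\big)\,ds\ \Big|\ \mathcal{F}_t\Big].
\end{align*}
The sign of the last term yields the desired (in)equality at level $n$; letting $n\to\infty$ and invoking Remark \ref{rem: consequance condition UI}, which turns Condition \ref{cond: uniform integrability condition for dual} into the interchange (\ref{eq: dominated convergence from condition UI}), replaces the left-hand side by $\E[G(T,V^{v_0,\pi}_{\lambda}(T),z(T))\mid\mathcal{F}_t]=\E[U(V^{v_0,\pi}_{\lambda}(T))\mid\mathcal{F}_t]$ through the terminal condition $G(T,\cdot)=U$. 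Conditioning on $V^{v_0,\pi}_{\lambda}(t)=v,\ z(t)=z$ and using the Markov property then gives (i)--(iii). I expect the main obstacle to be this limit passage: verifying that the two quantities bounded in the definition of $\hat{\tau}^{\lambda}_{n,t}$ are exactly the integrands appearing after It\^o (so the stopped integrals truly have zero conditional expectation), and checking that Condition \ref{cond: uniform integrability condition for dual} upgrades the $n$-level (in)equalities to the limit rather than only giving one-sided Fatou bounds.
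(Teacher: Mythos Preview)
Your proposal is correct and follows essentially the same route as the paper: apply It\^o to $G(s,V^{v_0,\pi}_{\lambda}(s),z(s))$, use the saddle structure of the pointwise Hamiltonian to sign the drift in each of the three cases, localise with the stopping times $\tau^{\lambda}_{n,t}$ so the stochastic integrals vanish in conditional expectation, and then pass to the limit via Condition~\ref{cond: uniform integrability condition for dual} and Remark~\ref{rem: consequance condition UI}. Your explicit justification of the saddle-point inequalities $\mathcal{H}(\pi,\ubar{\lambda}^{\ast})\le\mathcal{H}(\ubar{\pi}^{\ast},\ubar{\lambda}^{\ast})\le\mathcal{H}(\ubar{\pi}^{\ast},\lambda)$ via the strict concavity in $\pi$ and Lemma~\ref{lem: constrained real optimization}/Remark~\ref{rem: constrained quadratic optimization} is a welcome addition that the paper leaves implicit.
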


\begin{remark}
If we restrict the minimization and maximization in the min-max optimization only to such $\lambda \in \mathcal{D}_{\text{UI}} \subset \mathcal{D}$, $\pi \in \Lambda_{\text{UI}}\subset \Lambda$ so that every pair $(\lambda, \pi)$, $(\ubar{\lambda}^{\ast},\pi)$, $(\la,\ubar{\pi}^{\ast})$ and $(\ubar{\lambda}^{\ast},\ubar{\pi}^{\ast})$ and the solution $G$ to the dual HJB PDE (\ref{eq: dual HJB PDE general}) satisfy Condition \ref{cond: uniform integrability condition for dual}, then we directly obtain from Lemma \ref{lem: weak verification theorem dual HJB PDE}
\begin{align*}G(t,v,z)&=  \mathbbm{E}\big[U(V_{\ubar{\lambda}^{\ast}}^{v_0,\ubar{\pi}^{\ast}}(T))\  \big | \ V_{\ubar{\lambda}^{\ast}}^{v_0,\ubar{\pi}^{\ast}}(t) = v, \ z(t) =z \big] = \inf_{\lambda \in \mathcal{D}_{\text{UI}}} \sup_{\pi \in \Lambda_{\text{UI}}} \mathbbm{E}\big[U(V_{\lambda}^{v_0,\pi}(T))\  \big | \ V_{\lambda}^{v_0,\pi}(t) = v, \ z(t) =z \big]. 
\end{align*}
\end{remark} 

Note that (\ref{eq: dual optimization problem (general)}) is the minimax control problem associated with condition (C) from \cite{cvitanic1992}. However, we arrived at the same optimization problems by applying convex duality results from real constraints directly to the pointwise optimization at the level of the HJB PDE, whereas \cite{cvitanic1992} apply martingale methods to the underlying stochastic processes. \cite{cvitanic1992} go on to prove that the optimal controls for the dual control problem (\ref{eq: dual optimization problem (general)}) lead to an optimal portfolio process for the allocation constrained portfolio optimization problem $\mathbf{(P)}$. In doing so, they use the so-called Legendre-Fenchel transformation to transform the dual control problem (\ref{eq: dual optimization problem (general)}) and derive \enquote{another} dual representation of $\mathbf{(P)}$. These arguments heavily rely on the completeness of the underlying financial market and are thus not available to us. 

\subsection{Exponential Affine Separability}\label{subsec: Exponential Affine Separability}
In this section, we derive a condition under which the solution $G$ to the dual HJB PDE (\ref{eq: dual HJB PDE general}) is of an exponentially affine and separable form, i.e. 
\begin{align}\label{eq: exp affine separable value function}
G(t,v,z) = \frac{1}{b}v^b\exp \left(A(T-t)+B(T-t)'z\right),
\end{align}
for some functions $A:[0,T]\rightarrow \R,$ and $B:[0,T]\rightarrow \R^m$ with $A(0) = 0$ and $B(0) = 0$. In  a setting without the presence of allocation constraints and time-independent market coefficients, \cite{Liu2006} provides such a condition which can be directly verified for any given market coefficients (see equations (9)-(11) and (13)-(17) in \cite{Liu2006}).\footnote{In fact, the result of \cite{Liu2006} even includes the more general case of an exponentially quadratic separation. The approach we present below can be extended to include quadratic separation in a natural manner. However, such an extension would complicate the involved notation and thus diminish the presentation of the core concepts involved. Moreover, we were not able to construct realistic working examples that require quadratic separation in an allocation constrained setting. Hence, we restrict our analysis in this work to exponentially affine separation.} Under the presence of additional constraints on allocation, we need to adapt this condition suitably. \\
To this end, for any $(t,z,B)\in [0,T]\times \R^m \times \R^m$, we define $\hat{\lambda}^{\ast}(t,z,B)$ as the minimizing argument 
\begin{align}\label{eq: minimization w.r.t. lambda}
 \hat{\lambda}^{\ast}(t,z,B) &= \underset{\lambda \in \R^d}{\text{argmin}}\left \{ 2(1-b)\delta_K(\lambda) + \left \Vert \Sigma^{-1}\left( \mu - r \1 + \lambda \right) + \left(\Sigma^z\rho\right)'B \right \Vert^2 \right \}\\
 &= \underset{\lambda \in \R^d}{\text{argmin}}\left \{ 2(1-b)\delta_K(\lambda) + 2\lambda'\left(\Sigma\cdot \Sigma'\right)^{-1}\left[ \mu - r \1 + (\Sigma^z\rho\Sigma')'B\right] + \left \Vert \Sigma^{-1} \lambda \right \Vert^2\right \} \nonumber.
\end{align}
Given $\hat{\la},$ we provide a condition which ensures that (\ref{eq: exp affine separable value function}) holds. This can be achieved by considering the corresponding condition from \cite{Liu2006} and augmenting the market coefficients by $\hat{\la}.$
\begin{condition}{$(\text{EAS})$}\label{cond: Exp Separability optimal lambda*}
\textcolor{white}{1}\\
We say that Condition \ref{cond: Exp Separability optimal lambda*} is satisfied if for any $(t,z,B) \in [0,T]\times \R^m \times \R^m$ the market coefficients and the minimizer $\hat{\lambda}^{\ast}$ satisfy
\begin{align*}
	&\mu^{z}(t,z) = k_0(t) + k_1(t)z\\
	&\Sigma^z(t,z)\Sigma^z(t,z)' = h_0(t) + h_1(t)[z]\\
	&\Sigma^z(t,z)\rho(t,z)\left(\Sigma^z(t,z) \rho(t,z)\right)'-\Sigma^z(t,z)\Sigma^z(t,z)'= l_0(t) + l_1(t)[z]\\
\Leftrightarrow \ &\Sigma^z(t,z)\rho(t,z)\left(\Sigma^z(t,z) \rho(t,z)\right)' = \underbrace{\left(l_0(t)+h_0(t)\right)}_{=:\hat{l}_0(t)}+\underbrace{\left(l_1(t)+h_1(t)\right)[z]}_{=:\hat{l}_1(t)[z]}\\
	&r(t,z) + \delta_K(\hat{\lambda}^{\ast}(t,z,B)) =  p_0(t,B) + p_1(t,B)'z\\
	&\left\Vert\Sigma^{-1}(t,z)\left(\mu(t,z)+\hat{\lambda}^{\ast}(t,z,B)-r(t,z)\1\right) \right \Vert^2 = q_0(t,B) + q_1(t,B)'z \\
	&\Sigma^z(t,z)\rho(t,z)\Sigma^{-1}(t,z)\left(\mu(t,z)+\hat{\lambda}^{\ast}(t,z,B)-r(t,z)\1\right) = g_0(t,B)+g_1(t,B)z,
\end{align*}
for some functions such that $p_0(t,B),$ $q_0(t,B)\in \R$, $k_0(t),$ $p_1(t,B),$ $q_1(t,B),$ $g_0(t,B)\in \R^m$ as well as  $k_1(t),$ $h_0(t),$ $l_0(t),$ $g_1(t,B)\in \R^{m\times m}$ and the functions $h_1(t)[\cdot],$ $l_1(t)[\cdot]:\R^m \rightarrow \R^{m\times m}$ are linear\footnote{The functions $h_1[\cdot]$ and $l_1[\cdot ]$ are three-dimensional tensors, which are a generalization of vectors and matrices to higher dimensions. In our context, we may think of $h_1$ and $l_1$ as being represented by matrices, whose entries $(h_1)_{ij}$ and $(l_1)_{ij}$ are $\R^m$-valued. Upon being evaluated at a $z\in \R^m,$ each entry of $h_1[z]$ and $l_1[z]$ is obtained by computing the scalar product $z'(h_1)_{ij}$ and $z'(l_1)_{ij}$. Hence, for any $x,y\in \R^m$ and applying the rules of ordinary vector-matrix multiplication,  the products $x'h_1[\cdot]y$ and $x'l_1[\cdot]y$ are vectors in $\R^m$. In particular, $x'h_1[z]y=z'\left(x'h_1[\cdot]y\right)$ and $x'l_1[z]y=z'\left(x'l_1[\cdot]y\right)$ for any $x,y,z\in \R^m$.} for every fixed $(t,B)\in [0,T]\times \R^m.$
\end{condition}

Provided that Condition \ref{cond: Exp Separability optimal lambda*} is satisfied, we can characterize the exponents $A$ and $B$ in (\ref{eq: exp affine separable value function}) through the system of ODEs\footnote{Here, $A_{\tau}$ and $B_{\tau}$ denote the derivatives of $A$ and $B$ with respect to $\tau\in [0,T]$.}
\begin{align}\label{eq: constr ODE for A}
	A_{\tau}(\tau) &= bp_0\left(T-\tau,B(\tau)\right) + k_0(T-\tau)'B(\tau) + \frac{1}{2}B(\tau)h_0(T-\tau)B(\tau) \nonumber \\
	& \qquad + \frac{1}{2}\frac{b}{1-b}\Big[q_0\left(T-\tau,B(\tau)\right) + 2g_0\left(T-\tau,B(\tau)\right)'B(\tau) + B(\tau)\left(l_0(T-\tau)+h_0(T-\tau)\right)B(\tau)\Big]
\end{align}    
\begin{align}\label{eq: constr ODE for B}
	B_{\tau}(\tau) &= bp_1\left(T-\tau,B(\tau)\right) + k_1'(T-\tau)B(\tau) +\frac{1}{2}B(\tau)'h_1(T-\tau)[\cdot]B(\tau) \nonumber \\ 
& \qquad +\frac{1}{2}\frac{b}{1-b}\Big[q_1\left(T-\tau,B(\tau)\right) + 2g_1\left(T-\tau,B(\tau)\right)B(\tau) +  B(\tau)'\left(l_1(T-\tau)[\cdot]+h_1(T-\tau)[\cdot]\right)B(\tau)\Big].
\end{align}

\begin{theorem}\label{thm: solution to dual HJB PDE given condition EAS}
Let Condition \ref{cond: Exp Separability optimal lambda*} be satisfied and let $A,$ $B$ be solutions to the ODEs (\ref{eq: constr ODE for A}) and (\ref{eq: constr ODE for B}) with initial condition $A(0)=0,$ $B(0)= 0.$ Then,
\begin{align*}
	G(t,v,z) = \frac{1}{b}v^b\exp \left(A(T-t) + B(T-t)'z\right)
\end{align*}
is a solution to the  primal HJB PDE (\ref{eq: constr. HJB PDE Factor Model}) and the dual HJBI PDE (\ref{eq: dual HJB PDE general}).
\end{theorem}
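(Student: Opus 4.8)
The plan is to exploit the equivalence established in Lemma \ref{lem: dual HJB PDE}: the candidate $G(t,v,z) = \frac{1}{b}v^b\exp(A(T-t)+B(T-t)'z)$ satisfies $G_v = v^{b-1}\exp(\cdots) > 0$ and $G_{vv} = (b-1)v^{b-2}\exp(\cdots) < 0$ for $v>0$ (using $b<1$, $b\neq 0$), so it is strictly increasing and strictly concave in $v$ and clearly lies in $C^{(1,2,2)}$. Hence it suffices to verify the terminal condition and that $G$ solves the dual HJBI PDE (\ref{eq: dual HJB PDE general}); Lemma \ref{lem: dual HJB PDE} then yields automatically that $G$ also solves the primal HJB PDE (\ref{eq: constr. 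HJB PDE Factor Model}). The terminal condition is immediate: at $t=T$ the argument $\tau = T-t$ equals $0$, and $A(0)=B(0)=0$ give $G(T,v,z) = \frac{1}{b}v^b = U(v)$.

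First I would compute the partials of the ansatz. Abbreviating $A = A(T-t)$, $B = B(T-t)$, the chain rule gives $G_t = -(A_\tau + B_\tau'z)G$, $vrG_v = brG$, $(\mu^z)'\nabla_z G = (\mu^z)'B\,G$, and $\frac{1}{2}\text{Trace}[\Sigma^z(\Sigma^z)'\nabla_z^2 G] = \frac{1}{2}B'\Sigma^z(\Sigma^z)'B\,G$, where every term carries the common non-vanishing factor $G$. The crucial observation concerns the inner minimax term, for which I would use the second (already sup-over-$\pi$-resolved) form of (\ref{eq: dual HJB PDE general}), valid here precisely because $G_{vv}<0$. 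Using $\nabla_z(G_v) = G_v B$ and $G_v/(vG_{vv}) = 1/(b-1)$, I would factor $G_v$ out of the squared norm and simplify $-\tfrac12\tfrac{1}{vG_{vv}}\Vert\cdots\Vert^2$ to $\tfrac{1}{2(1-b)}G_v\Vert\Sigma^{-1}(\mu+\lambda-r\1)+(\Sigma^z\rho)'B\Vert^2$, so that the whole minimax term equals $bG\cdot\inf_{\lambda}\{\delta_K(\lambda) + \tfrac{1}{2(1-b)}\Vert\Sigma^{-1}(\mu+\lambda-r\1)+(\Sigma^z\rho)'B\Vert^2\}$. Since $1-b>0$, multiplying the bracket by $2(1-b)$ shows the minimizer is exactly $\hat{\lambda}^{\ast}(t,z,B)$ as defined in (\ref{eq: minimization w.r.t. lambda}).

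Next I would substitute $\lambda = \hat{\lambda}^{\ast}(t,z,B)$, divide the entire equation by the non-zero factor $G$, and expand the squared norm as $\Vert\Sigma^{-1}(\mu+\hat{\lambda}^{\ast}-r\1)\Vert^2 + 2[\Sigma^z\rho\Sigma^{-1}(\mu+\hat{\lambda}^{\ast}-r\1)]'B + B'\Sigma^z\rho(\Sigma^z\rho)'B$. Each structural object in the resulting scalar identity is then replaced by its affine-in-$z$ representation from Condition \ref{cond: Exp Separability optimal lambda*}: the $br$-term combines with $b\delta_K(\hat{\lambda}^{\ast})$ so that $b(r+\delta_K(\hat{\lambda}^{\ast})) = b(p_0 + p_1'z)$, the three pieces of the norm split via $q_0+q_1'z$, $g_0+g_1z$ and $\hat{l}_0 + \hat{l}_1[z] = (l_0+h_0)+(l_1+h_1)[z]$, while $(\mu^z)'B$ and $\frac{1}{2}B'\Sigma^z(\Sigma^z)'B$ contribute through $k_0+k_1z$ and $h_0+h_1[z]$. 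Reading off the part of the identity that is constant in $z$ and the part that is linear in $z$ (using the tensor convention $x'h_1[z]y = z'(x'h_1[\cdot]y)$ for the quadratic forms) produces precisely the right-hand sides of (\ref{eq: constr ODE for A}) and (\ref{eq: constr ODE for B}) matched against $A_\tau$ and $B_\tau$. Because $A,B$ solve these ODEs by hypothesis, both parts vanish identically for every $(t,v,z)$, so $G$ solves the dual HJBI PDE, and the reduction above completes the proof.

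The main obstacle will be organizational rather than conceptual: carefully tracking the affine-in-$z$ decomposition of the expanded squared norm and of the quadratic forms $B'(\cdots)[z]B$ through the tensor bracket notation, and confirming that the constant and linear coefficients align term-by-term with (\ref{eq: constr ODE for A})–(\ref{eq: constr ODE for B}) (in particular that the cross term $2[\Sigma^z\rho\Sigma^{-1}(\cdots)]'B$ lands on the $g_0$- and $g_1$-terms with the correct transposes). The one genuine structural point to get right is that the inner infimum over $\lambda$ is attained at $\hat{\lambda}^{\ast}$ after the $2(1-b)$ rescaling, which is exactly what makes Condition \ref{cond: Exp Separability optimal lambda*}, stated in terms of $\hat{\lambda}^{\ast}$, the precise hypothesis needed for exponential affine separability.
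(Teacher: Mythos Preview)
Your proposal is correct and follows essentially the same route as the paper: compute the derivatives of the exponentially affine ansatz, substitute into the second (sup-resolved) form of the dual HJBI PDE, factor out the positive scalar $bG$ to identify the minimizer as $\hat{\lambda}^{\ast}$, expand the squared norm, invoke Condition \ref{cond: Exp Separability optimal lambda*} to render everything affine in $z$, and match the constant and linear parts against the ODEs (\ref{eq: constr ODE for A})--(\ref{eq: constr ODE for B}); the appeal to Lemma \ref{lem: dual HJB PDE} then transfers the result to the primal HJB PDE. If anything you are slightly more explicit than the paper in verifying the terminal condition and the concavity/monotonicity hypotheses needed to invoke Lemma \ref{lem: dual HJB PDE}.
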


\begin{remark}\label{rem: Solvability of ODEs}
	Although we can by no means provide explicit solutions to the ODEs (\ref{eq: constr ODE for A}) and (\ref{eq: constr ODE for B}) in general, at least the local existence of a solution is guaranteed by the existence theorems of Peano (and Picard-Lindel\"of) if their respective right-hand sides are continuous (Lipschitz-continuous). In particular, we can then obtain an approximate solution to the dual HJBI PDE (\ref{eq: dual HJB PDE general}) for small $\tau=T-t$ by approximating $A$ and $B,$ by e.g. the Euler method. 
\end{remark}

If Condition \ref{cond: Exp Separability optimal lambda*} is satisfied, then we can extend the verification approach used in \cite{Escobar2021} to formally verify the optimality of the obtained candidate optimal portfolio $\ubar{\pi}^{\ast}.$ Unlike in Lemma \ref{lem: weak verification theorem dual HJB PDE}, we only need to assume that  $\ubar{\pi}^{\ast}$ satisfies a uniform integrability condition and not make any assumption about other portfolios $\pi \in \Lambda_K.$ This is possible because we can exploit the additional knowledge that $G$ is exponentially affine due to Condition \ref{cond: Exp Separability optimal lambda*}.

\begin{theorem}[Verification Theorem Primal Problem]\label{thm: verification theorem primal OP}
	\textcolor{white}{1}\\
	Let Condition \ref{cond: Exp Separability optimal lambda*} be satisfied, let $A$ and $B$ be solutions to the ODEs (\ref{eq: constr ODE for A}) and (\ref{eq: constr ODE for B}) with initial condition $A(0)=0$ and $B(0)= 0$ and let $G$ be defined as in (\ref{eq: exp affine separable value function}). 
	Define $\ubar{\lambda}^{\ast}(t,v,z) := \hat{\lambda}^{\ast}(t,z,B(T-t))$ and 
	\begin{align}
		\ubar{\pi}^{\ast}(t,v,z) := \frac{1}{1-b}\left(\Sigma\Sigma'\right)^{-1}\left[\mu + \ubar{\lambda}^{\ast}(t,v,z) - r \1 + \left(\Sigma^z\rho\Sigma'\right)'B(T-t) \right] \label{eq: optimal constr. pi (general)}.
	\end{align}
	 If $G,$ $\ubar{\pi}^{\ast},$ $\lambda \equiv 0$ satisfy Condition \ref{cond: uniform integrability condition for dual}, then 
	 \begin{align}
	 	G(t,v,z) &= \mathbbm{E}\left[U(V^{v_0, \ubar{\pi}^{\ast}}(T))\  \big | \ V^{v_0, \ubar{\pi}^{\ast}}(t) = v, \ z(t) =z  \right] \label{eq: G attains value function for optimal strategy} \\
	 	&\geq  \mathbbm{E}\left[U(V^{v_0, \pi}(T))\  \big | \ V^{v_0, \pi}(t) = v, \ z(t) =z  \right] \quad \forall \pi \in \Lambda_K(t). \label{eq: pi star dominates all other constrained pi}
	 \end{align}
	 In particular, $G(t,v,z) = \Phi(t,v,z)$, for all $(t,v,z) = [0,T]\times (0,\infty)\times \R^m$ and $\ubar{\pi}^{\ast}$ is optimal for $\mathbf{(P)}.$
\end{theorem}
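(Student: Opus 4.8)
The plan is to prove the two inequalities (\ref{eq: G attains value function for optimal strategy}) and (\ref{eq: pi star dominates all other constrained pi}) separately and then combine them. By Theorem~\ref{thm: solution to dual HJB PDE given condition EAS} the exponentially affine $G$ solves both the primal HJB PDE (\ref{eq: constr. HJB PDE Factor Model}) and the dual HJBI PDE (\ref{eq: dual HJB PDE general}), and I would use each representation for one of the two bounds.

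First I would confirm that $\ubar{\lambda}^{\ast}$ and $\ubar{\pi}^{\ast}$ are the inner minimiser and maximiser in (\ref{eq: dual HJB PDE general}), which is the standing hypothesis of Lemma~\ref{lem: weak verification theorem dual HJB PDE}. Since $G=\frac{1}{b}v^b\exp(A+B'z)$ one has $G_v=v^{b-1}\exp(A+B'z)$, $\nabla_z(G_v)=G_vB$ and $vG_{vv}=(b-1)G_v$; substituting these into the first-order condition of the strictly concave inner supremum over $\pi$ returns exactly the feedback law (\ref{eq: optimal constr. pi (general)}), and inserting this maximiser back into the braces reduces the residual minimisation over $\lambda$, after dividing by $G_v>0$ and multiplying by $2(1-b)>0$, to the defining minimisation (\ref{eq: minimization w.r.t. lambda}) of $\hat{\lambda}^{\ast}$. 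Because $G$ also solves the primal HJB PDE, the common saddle value equals the constrained supremum $\sup_{\pi\in K}\{\cdots\}$, so the equivalence (\ref{eq: optimal constrained x as unconstrained maximizer of general dual 1})--(\ref{eq: optimal constrained x as unconstrained maximizer of general dual 2}) of Lemma~\ref{lem: constrained real optimization} (applied pointwise, with the support-function term rescaled by $G_v>0$ via the positive homogeneity of $\delta_K$) yields $\ubar{\pi}^{\ast}\in K$ and that $\ubar{\pi}^{\ast}$ attains the constrained primal supremum. Continuity of the market coefficients and boundedness of $B$ on $[0,T]$ then give the integrability in (\ref{eq: def. admissible strategy Heston}), so $\ubar{\pi}^{\ast}\in\Lambda_K$.

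For the upper bound (\ref{eq: pi star dominates all other constrained pi}) I would fix an arbitrary $\pi\in\Lambda_K(t)$ and apply Itô's formula to $s\mapsto G(s,V^{v_0,\pi}(s),z(s))$. Writing $\mathcal{A}^{\pi}G$ for the resulting drift, $\mathcal{A}^{\pi}G$ coincides with the argument of the supremum in (\ref{eq: constr. HJB PDE Factor Model}); since $\pi(s)\in K$ and $G$ solves (\ref{eq: constr. HJB PDE Factor Model}), this drift is $\le 0$, so $G(\cdot,V^{v_0,\pi}(\cdot),z(\cdot))$ is a local supermartingale. In the regime $b\in(0,1)$ the exponentially affine form forces $G\ge 0$, and a nonnegative local supermartingale is a true supermartingale (conditional Fatou along a localising sequence), which gives $G(t,v,z)\ge\E[G(T,V^{v_0,\pi}(T),z(T))\mid\mathcal{F}_t]=\E[U(V^{v_0,\pi}(T))\mid V^{v_0,\pi}(t)=v,z(t)=z]$ with no integrability requirement on $\pi$. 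This is exactly where the explicit (hence sign-definite) exponentially affine form of $G$, guaranteed by Condition~\ref{cond: Exp Separability optimal lambda*}, removes the per-strategy uniform integrability that a general verification theorem would demand.

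To obtain the matching bound (\ref{eq: G attains value function for optimal strategy}) I would apply Lemma~\ref{lem: weak verification theorem dual HJB PDE}(ii) with the constant dual control $\lambda\equiv 0\in\mathcal{D}$ and the feedback $\ubar{\pi}^{\ast}\in\ubar{\Lambda}$: since $\delta_K(0)=0$, the dual wealth $V^{v_0,\ubar{\pi}^{\ast}}_{\lambda}$ with $\lambda\equiv 0$ coincides with the primal wealth $V^{v_0,\ubar{\pi}^{\ast}}$, and the standing hypothesis that $G,\ubar{\pi}^{\ast},\lambda\equiv 0$ satisfy Condition~\ref{cond: uniform integrability condition for dual} supplies, through Remark~\ref{rem: consequance condition UI}, the dominated convergence that yields $G(t,v,z)\le\E[U(V^{v_0,\ubar{\pi}^{\ast}}(T))\mid\cdots]$. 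Combining this with the upper bound of the previous step applied to the admissible $\ubar{\pi}^{\ast}$ turns (\ref{eq: G attains value function for optimal strategy}) into an equality; since (\ref{eq: pi star dominates all other constrained pi}) then dominates every $\pi\in\Lambda_K(t)$ and is attained by $\ubar{\pi}^{\ast}$, it follows that $G=\Phi$ and $\ubar{\pi}^{\ast}$ is optimal for $\mathbf{(P)}$. The hard part will be the localisation in the upper bound when $b<0$: there $G<0$, a nonnegative local submartingale need not be a submartingale, and the naive limit along the localising sequence produces the wrong inequality; I would instead exploit the multiplicative form $G(s,V^{v_0,\pi}(s),z(s))=G(t,v,z)\,\mathcal{E}(s)\exp(\int_t^s \mathcal{A}^{\pi}G/G\,du)$ with a nonnegative local-martingale factor $\mathcal{E}$ and a non-decreasing exponent, together with the inequality $\lambda'\pi+\delta_K(\lambda)\ge 0$ valid for all $\pi\in K$, which is the verification refinement adapted from \cite{Escobar2021}.
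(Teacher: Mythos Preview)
Your treatment of the equality (\ref{eq: G attains value function for optimal strategy}) is essentially the paper's: apply It\^{o} to $G$ along $(V^{v_0,\ubar{\pi}^{\ast}},z)$, observe that the drift vanishes because $(\ubar{\pi}^{\ast},\ubar{\lambda}^{\ast})$ attains the saddle value in (\ref{eq: dual HJB PDE general}), and pass to the limit through Condition~\ref{cond: uniform integrability condition for dual}. (A minor caveat: Lemma~\ref{lem: weak verification theorem dual HJB PDE} is stated for nonnegative $G$, so when $b<0$ you should repeat its It\^{o}/UI argument directly rather than cite the lemma.)

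For the upper bound (\ref{eq: pi star dominates all other constrained pi}) your direct supermartingale argument is fine when $0<b<1$, and is in fact simpler than the paper's. The genuine gap is the case $b<0$, which you flag but do not close. Your proposed multiplicative decomposition
\[
G(s,V^{v_0,\pi}(s),z(s))=G(t,v,z)\,\mathcal{E}(s)\exp\!\Big(\int_t^s \tfrac{\mathcal{A}^{\pi}G}{G}\,du\Big)
\]
does not deliver the inequality: for $b<0$ one has $G<0$, the drift $\mathcal{A}^{\pi}G\le 0$ gives $\mathcal{A}^{\pi}G/G\ge 0$ so the exponential factor is $\ge 1$, while $\mathcal{E}$ is a positive local martingale with $\E[\mathcal{E}(s)\mid\mathcal{F}_t]\le 1$. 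The two factors pull in opposite directions, and you cannot conclude $\E[\mathcal{E}(s)\exp(\cdots)\mid\mathcal{F}_t]\ge 1$, which is what is needed after multiplying by $G(t,v,z)<0$. The slackness inequality $\lambda'\pi+\delta_K(\lambda)\ge 0$ does not enter this decomposition at all, because the drift $\mathcal{A}^{\pi}G$ already sits inside the exponent and there is no $\lambda$ present.

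The paper handles both signs of $b$ uniformly by a different device: it introduces the \emph{comparison process}
\[
L(u)=b\,\frac{V^{v_0,\pi}(u)}{V^{v_0,\ubar{\pi}^{\ast}}(u)}\,G\big(u,V^{v_0,\ubar{\pi}^{\ast}}(u),z(u)\big),
\]
which is always strictly positive because $bG>0$. An It\^{o} computation (using that $G$ along the optimal wealth has zero drift) shows
\[
dL(u)=L(u)\Big[-\big(\delta_K(\ubar{\lambda}^{\ast}(u))+\ubar{\lambda}^{\ast}(u)'\pi(u)\big)\,du+\text{local-martingale terms}\Big],
\]
where the drift coefficient is $\le 0$ precisely by the slackness identity $\delta_K(\ubar{\lambda}^{\ast})+\ubar{\lambda}^{\ast}{}'\ubar{\pi}^{\ast}=0$ (Remark~\ref{rem: slackness condition}) together with $\pi(u)\in K$. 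Hence $L$ is a positive local supermartingale, so a true supermartingale, for every $\pi\in\Lambda_K$ and every $b$. The concavity bound $U(y)\le U(x)+U'(x)(y-x)$ with $x=V^{v_0,\ubar{\pi}^{\ast}}(T)$, $y=V^{v_0,\pi}(T)$ then identifies the cross term with $L(T)-bU(V^{v_0,\ubar{\pi}^{\ast}}(T))$, and the supermartingale property of $L$ plus the already-established equality (\ref{eq: G attains value function for optimal strategy}) yield (\ref{eq: pi star dominates all other constrained pi}) after conditioning on $V^{v_0,\pi}(t)=V^{v_0,\ubar{\pi}^{\ast}}(t)=v$. This is where the slackness inequality is actually used, and it is the missing idea in your sketch.
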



\section{Examples}\label{sec: Examples}
We consider three different choices of models for which Condition \ref{cond: Exp Separability optimal lambda*} can be verified and an explicit expression for the ODEs (\ref{eq: constr ODE for A}) and (\ref{eq: constr ODE for B}) can be derived. Throughout the examples, we always follow the same steps in chronological order:
\begin{itemize}
	\item[(i)] Define the underlying financial market model, by choosing the market coefficients $\mu^z,$ $\Sigma^z,$ $\rho,$ $r,$ $\mu$ and $\Sigma.$
	\item[(ii)] Derive an explicit representation of the minimizer $\hat{\lambda}^{\ast}$ of (\ref{eq: minimization w.r.t. lambda}) in the given market.
	\item[(iii)] Verify that Condition \ref{cond: Exp Separability optimal lambda*} is satisfied for the given market.
	\item[(iv)] Derive an explicit representation for the ODEs (\ref{eq: constr ODE for A}), (\ref{eq: constr ODE for B}) and the candidate optimal portfolio $\ubar{\pi}^{\ast}$ in terms of the market coefficients and $\hat{\lambda}^{\ast}.$
	\item[(v)] \emph{If possible:} \\
	Formally verify the optimality of $\ubar{\pi}^{\ast}$ for $\mathbf{(P)}$ by proving that $\ubar{\pi}^{\ast},$ $G$ (as in (\ref{eq: exp affine separable value function})) and $\lambda \equiv 0$ satisfy Condition \ref{cond: uniform integrability condition for dual}.
\end{itemize}

\subsection{Black-Scholes Model}\label{subsec: Black-Scholes Model}
First, we consider a $d$-dimensional Black-Scholes model $\mathcal{M}_{BS}$ with time-dependent coefficients, which is exactly the setting of Section 15 in \cite{cvitanic1992}. 

\begin{definition}[$\mathcal{M}_{BS}$]\label{def: Black-Scholes market}
	\textcolor{white}{1}\\
	Let $m=1$ and $d\in \N.\footnote{We could equivalently consider $m=0$ and completely disregard the stochastic factor $z$ and its drift, diffusion and correlation coefficients in the definition of the Black-Scholes model $\mathcal{M}_{BS}.$}$ Consider continuous functions ${r:[0,T]\rightarrow \R,}$ $\eta:[0,T]\rightarrow \R^d,$ and $\sigma:[0,T]\rightarrow \R^{d\times d}$ such that the inverse $\sigma(t)^{-1}$ exists for all $t\in [0,T].$ Then, the $d$-dimensional Black-Scholes market $\mathcal{M}_{BS}$ is defined by the market coefficients
	\begin{align*}
		&z_0 = \mu^z(t,z)=\Sigma^z(t,z)=\rho(t,z)= 0, \\
		\text{and} \quad &r(t,z) = r(t), \quad \mu(t,z) = r(t)\1+\eta(t), \quad \Sigma(t,z)=\sigma(t).
	\end{align*}
\end{definition}

We can directly apply the duality theory developed in Section \ref{sec: The Dual HJBI PDE} to verify Condition \ref{cond: Exp Separability optimal lambda*} and obtain a solution to the HJBI PDE (\ref{eq: dual HJB PDE general}) in $\mathcal{M}_{BS}.$

\begin{lemma}[Dual ODEs in $\mathcal{M}_{BS}$]\label{lem: solution to dual HJBI PDE B.S. model}
\textcolor{white}{1}\\
Consider the financial market $\mathcal{M}_{BS}.$  Then Condition \ref{cond: Exp Separability optimal lambda*} is satisfied. \\
Let 
\begin{align}\label{eq: optimal lambda B.S. market}
	\la(t)= \underset{\lambda \in \R^d}{\text{argmin}}\left \{ 2(1-b)\delta_K(\lambda) + \left \Vert \sigma(t)^{-1}(\eta(t)+\lambda) \right \Vert^2 \right \}
\end{align}
and $A:[0,T]\rightarrow \R$ satisfy $A(0)=0$ and
\begin{align*}
	A_{\tau}(\tau) = br(T-\tau) + \frac{1}{2}\frac{b}{1-b}\inf_{\lambda \in \R^d}\left \{ 2(1-b)\delta_K(\lambda) + \left \Vert \sigma(T-\tau)^{-1}(\eta(T-\tau)+\lambda) \right \Vert^2 \right \}.
\end{align*}
Then, 
$$
G(t,v,z) = \frac{1}{b}v^b\exp(A(T-t))
$$
is a solution to the dual HJBI PDE (\ref{eq: dual HJB PDE general}) and the corresponding candidate optimal portfolio is 
\begin{align*}
	\ubar{\pi}^{\ast}(t,v,z) = \frac{1}{1-b}\left(\sigma(t)\sigma(t)'\right)^{-1}\left(\eta(t)+\la(t) \right).
\end{align*}
\end{lemma}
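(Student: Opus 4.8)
The plan is to verify that $\mathcal{M}_{BS}$ satisfies Condition \ref{cond: Exp Separability optimal lambda*} and then simply read off the claimed solution $G$, the reduced ODE for $A$, and the optimal portfolio $\ubar{\pi}^{\ast}$ from Theorem \ref{thm: solution to dual HJB PDE given condition EAS} and formula (\ref{eq: optimal constr. pi (general)}). Since $\mathcal{M}_{BS}$ is fully deterministic and decoupled from the (degenerate) stochastic factor, the bulk of the argument is a careful bookkeeping exercise in which every $z$-dependent coefficient in Condition \ref{cond: Exp Separability optimal lambda*} vanishes.

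The first and only genuinely substantive step is to substitute the coefficients of Definition \ref{def: Black-Scholes market} into the minimization (\ref{eq: minimization w.r.t. lambda}) defining $\hat{\lambda}^{\ast}$. Because $\Sigma^z = \rho = 0$ (hence $\Sigma^z\rho = 0$), together with $\mu - r\1 = \eta(t)$ and $\Sigma = \sigma(t)$, the perturbation term $(\Sigma^z\rho)'B$ disappears and the objective collapses to $2(1-b)\delta_K(\lambda) + \Vert \sigma(t)^{-1}(\eta(t)+\lambda)\Vert^2$. I would therefore conclude that $\hat{\lambda}^{\ast}(t,z,B) = \la(t)$ as defined in (\ref{eq: optimal lambda B.S. market}), \emph{independent of $z$ and $B$}. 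With this in hand, each line of Condition \ref{cond: Exp Separability optimal lambda*} is checked directly: $\mu^z = 0$, $\Sigma^z(\Sigma^z)' = 0$ and $\Sigma^z\rho(\Sigma^z\rho)' - \Sigma^z(\Sigma^z)' = 0$ give $k_0=k_1=h_0=h_1=l_0=l_1=0$; the entries $r + \delta_K(\hat{\lambda}^{\ast}) = r(t)+\delta_K(\la(t))$ and $\Vert\Sigma^{-1}(\mu+\hat{\lambda}^{\ast}-r\1)\Vert^2 = \Vert\sigma(t)^{-1}(\eta(t)+\la(t))\Vert^2$ are $z$-free, so $p_0(t,B)=r(t)+\delta_K(\la(t))$, $q_0(t,B)=\Vert\sigma(t)^{-1}(\eta(t)+\la(t))\Vert^2$, and $p_1=q_1=0$; and $\Sigma^z\rho\Sigma^{-1}(\mu+\hat{\lambda}^{\ast}-r\1)=0$ yields $g_0=g_1=0$. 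Thus Condition \ref{cond: Exp Separability optimal lambda*} holds.

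Next I would invoke Theorem \ref{thm: solution to dual HJB PDE given condition EAS} and specialise the ODEs (\ref{eq: constr ODE for A})--(\ref{eq: constr ODE for B}). Every coefficient appearing on the right-hand side of the $B$-ODE (\ref{eq: constr ODE for B}) is zero, so $B_\tau \equiv 0$, which with $B(0)=0$ forces $B\equiv 0$; this already justifies writing $G(t,v,z)=\tfrac{1}{b}v^b\exp(A(T-t))$ with no $z$-dependence. Feeding $B\equiv 0$ and the vanishing coefficients into the $A$-ODE (\ref{eq: constr ODE for A}) leaves $A_\tau(\tau) = b\,p_0(T-\tau,0) + \tfrac12\tfrac{b}{1-b}q_0(T-\tau,0) = b[r(T-\tau)+\delta_K(\la(T-\tau))] + \tfrac12\tfrac{b}{1-b}\Vert\sigma(T-\tau)^{-1}(\eta(T-\tau)+\la(T-\tau))\Vert^2$. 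A one-line rearrangement, using $b\,\delta_K(\la) = \tfrac12\tfrac{b}{1-b}\cdot 2(1-b)\delta_K(\la)$ and the fact that $\la$ attains the infimum, shows this equals the bracketed $\inf_{\lambda}$ expression in the lemma statement, matching the claimed $A$-ODE. Finally, substituting $\Sigma=\sigma$, $\mu-r\1=\eta$, $\ubar{\lambda}^{\ast}=\la(t)$, $\Sigma^z\rho=0$ and $B\equiv 0$ into (\ref{eq: optimal constr. pi (general)}) gives $\ubar{\pi}^{\ast}(t,v,z)=\tfrac{1}{1-b}(\sigma(t)\sigma(t)')^{-1}(\eta(t)+\la(t))$, as asserted.

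I expect no serious obstacle here: the heavy lifting has already been done in Theorem \ref{thm: solution to dual HJB PDE given condition EAS}, and the present lemma is essentially its deterministic specialisation. The only points requiring care are (a) confirming that the minimizer $\la(t)$ in (\ref{eq: optimal lambda B.S. market}) is well defined — which follows since $\delta_K$ is convex and the quadratic $\Vert\sigma^{-1}(\eta+\lambda)\Vert^2$ is strictly convex and coercive in $\lambda$, so the objective is strictly convex with a unique minimizer as in Remark \ref{rem: constrained quadratic optimization} — and (b) the algebraic reconciliation of the two equivalent presentations of the $A$-ODE (the $bp_0 + \tfrac12\tfrac{b}{1-b}q_0$ form versus the $\inf_\lambda$ form), which is purely a rewriting and not a source of difficulty.
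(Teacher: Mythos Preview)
Your proposal is correct and follows essentially the same approach as the paper: compute $\hat{\lambda}^{\ast}$ in $\mathcal{M}_{BS}$, observe it is independent of $(z,B)$, identify the nonzero coefficients $p_0,q_0$ while setting all others to zero to verify Condition \ref{cond: Exp Separability optimal lambda*}, and then specialise Theorem \ref{thm: solution to dual HJB PDE given condition EAS} and (\ref{eq: optimal constr. pi (general)}) to obtain $B\equiv 0$, the stated $A$-ODE, $G$, and $\ubar{\pi}^{\ast}$. Your additional remarks on the well-definedness of $\la(t)$ and the rewriting of the $A$-ODE into its $\inf_\lambda$ form are welcome clarifications but do not deviate from the paper's argument.
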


Unsurprisingly, the candidate optimal portfolio process $\ubar{\pi}^{\ast}$ proposed by Lemma \ref{lem: solution to dual HJBI PDE B.S. model} is the same as that obtained in Example 15.2 by \cite{cvitanic1992} via the auxiliary markets methodology. Moreover, due to the simplicity of this set-up, we can even formally verify the optimality of $\ubar{\pi}^{\ast}$ by showing that Condition \ref{cond: uniform integrability condition for dual} is satisfied.

\begin{corollary}\label{cor: uniform integrability B.S. model}
	Consider the financial market $\mathcal{M}_{BS}.$ Then, $G,$ $\ubar{\pi}^{\ast}$ as in Lemma \ref{lem: solution to dual HJBI PDE B.S. model} and $\lambda \equiv 0$ satisfy Condition \ref{cond: uniform integrability condition for dual}. In particular, $\ubar{\pi}^{\ast}$ is optimal for $\mathbf{(P)}.$
\end{corollary}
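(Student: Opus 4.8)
The plan is to verify Condition \ref{cond: uniform integrability condition for dual} directly for the triple $(G, \ubar{\pi}^{\ast}, \lambda\equiv 0)$ and then read off optimality from Theorem \ref{thm: verification theorem primal OP}. First I would specialise Condition \ref{cond: uniform integrability condition for dual} to $\mathcal{M}_{BS}$. Since $\lambda\equiv 0$ and $\delta_K(0)=0$, the auxiliary and genuine wealth processes coincide, $V_{\lambda}^{v_0,\ubar{\pi}^{\ast}}=V^{v_0,\ubar{\pi}^{\ast}}$. Because $\Sigma^z=0$ in $\mathcal{M}_{BS}$, the second localising integral in $\hat{\tau}^{\lambda}_{n,t}$ vanishes identically and is irrelevant, while the first integrand, using $G_v(t,v,z)=v^{b-1}\exp(A(T-t))$, reduces to the process $V^{2b}\,\Vert\sigma'\ubar{\pi}^{\ast}\Vert^2\,\exp(2A(T-\cdot))$, which is continuous and hence $Q$-a.s.\ pathwise bounded on $[t,T]$. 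Its integral is therefore a.s.\ finite, so $\tau_{n,t}\to T$ $Q$-a.s., and it remains only to establish uniform integrability of $\big(G(\tau_{n,t},V^{v_0,\ubar{\pi}^{\ast}}(\tau_{n,t}),z(\tau_{n,t}))\big)_{n}$.

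The key structural observation is that under the \emph{deterministic} control $\ubar{\pi}^{\ast}$ the wealth process is an explicit geometric Brownian motion with deterministic, continuous (hence bounded) coefficients. From the closed form (\ref{eq: wealth process explicitly original market}) one writes, for any real exponent $p$,
\begin{align*}
\left(V^{v_0,\ubar{\pi}^{\ast}}(s)\right)^p = D_p(s)\,\mathcal{E}_p(s), \qquad \mathcal{E}_p(s) := \exp\!\Big(\int_0^s p\,(\ubar{\pi}^{\ast})'\sigma\, dW - \tfrac12\int_0^s p^2\Vert\sigma'\ubar{\pi}^{\ast}\Vert^2\,du\Big),
\end{align*}
where $D_p$ is deterministic and continuous on $[0,T]$, hence bounded by some $\bar{D}_p$, and $\mathcal{E}_p$ is the Dol\'eans-Dade exponential of a deterministic, bounded integrand. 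The latter is a strictly positive true martingale on $[0,T]$ for \emph{every} real $p$, since Novikov's condition holds trivially (its exponent is a finite deterministic constant). This is precisely what makes all moments of $V^{v_0,\ubar{\pi}^{\ast}}(s)$ finite and uniformly controlled in $s$, irrespective of the sign of $b$.

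I would then obtain uniform integrability from an $L^{1+\epsilon}$ bound. Fix $\epsilon>0$ and set $p=b(1+\epsilon)$. Writing $\bar{A}:=\sup_{[0,T]}|A|$, the factor $|\tfrac1b\exp(A(T-\cdot))|^{1+\epsilon}$ is bounded, so
\begin{align*}
\E\!\left[\left|G(\tau_{n,t},V^{v_0,\ubar{\pi}^{\ast}}(\tau_{n,t}),z(\tau_{n,t}))\right|^{1+\epsilon}\right] \leq |b|^{-(1+\epsilon)}e^{(1+\epsilon)\bar{A}}\,\E\!\left[D_p(\tau_{n,t})\,\mathcal{E}_p(\tau_{n,t})\right] \leq |b|^{-(1+\epsilon)}e^{(1+\epsilon)\bar{A}}\,\bar{D}_p\,\E\!\left[\mathcal{E}_p(\tau_{n,t})\right].
\end{align*}
Since $\tau_{n,t}\leq T$ is a bounded stopping time and $\mathcal{E}_p$ is a nonnegative martingale, optional stopping gives $\E[\mathcal{E}_p(\tau_{n,t})]=\mathcal{E}_p(0)=1$, so the right-hand side is a finite constant independent of $n$. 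The sequence is thus bounded in $L^{1+\epsilon}$ and therefore uniformly integrable, which is exactly Condition \ref{cond: uniform integrability condition for dual}; optimality of $\ubar{\pi}^{\ast}$ for $\mathbf{(P)}$ then follows immediately from Theorem \ref{thm: verification theorem primal OP}.

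I expect the main (and rather minor) obstacle to be purely bookkeeping: packaging the power $V^p$ as a bounded deterministic factor times a genuine martingale, and justifying the optional-stopping step at the bounded random times $\tau_{n,t}$ uniformly in $n$. Two points warrant care. First, uniform integrability concerns $|G|$ only, so the sign of $b$ (hence of $G$) is immaterial and the argument covers $0<b<1$ and $b<0$ simultaneously. Second, one must exploit boundedness — not merely finiteness — of the deterministic factor $D_p$ over the compact horizon in order to pull it out of the expectation; this is exactly where the continuity of $r,\eta,\sigma$ and $A$ on $[0,T]$ is used.
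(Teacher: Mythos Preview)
Your proposal is correct and follows essentially the same route as the paper: bound $|G|^{q}$ by a deterministic constant times a stochastic exponential with bounded deterministic integrand, then use optional stopping to obtain a uniform $L^{q}$ bound and hence uniform integrability, after which Theorem \ref{thm: verification theorem primal OP} yields optimality. The only cosmetic difference is that you invoke Novikov to treat $\mathcal{E}_p$ as a true martingale and get $\E[\mathcal{E}_p(\tau_{n,t})]=1$, whereas the paper simply observes that a nonnegative local martingale is a supermartingale and applies Doob's optional sampling inequality; both arguments are equivalent here.
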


\subsection{Multi-Factor Stochastic Covariance of CIR-Type}\label{subsec: Multi-Factor Stochastic Volatility of CIR Type}
Next, we consider a financial market model with a stochastic covariance matrix, which depends on $m$ independent CIR-processes. More specifically, we assume that the covariance matrix $\Sigma(t,z)$ is a block-diagonal matrix, whose diagonal blocks $\Sigma_i$ are scaled proportionally to the $i$-th CIR-process $z_i.$ One may think of the underlying financial market $\mathcal{M}_{CIR}$ as consisting of risky assets from $m$ unrelated asset classes, where the covariances within each asset class are driven by one of $m$ independent stochastic (CIR) risk factors. Special cases of this model are the Heston model (\cite{Heston1993}) for $m=d=1$ and the PCSV model with independent assets (\cite{Rubtsov2017}) for $m=d\in \N$ and $d_i=1$ for all $i=1,...,m$.\footnote{We will later see in Lemma \ref{lem: solution to dual HJBI PDE CIR model} that Condition \ref{cond: Exp Separability optimal lambda*} is only satisfied in $\mathcal{M}_{CIR}$ if the structure of the allocation constraints allows for a convenient separabiltiy in (\ref{eq: minimization w.r.t. lambda}). In the definition of $\mathcal{M}_{CIR},$ we have intentionally limited the covariance $\Sigma$ of risky assets to be of block-diagonal structure to facilitate the presentation of this fact. However, we can in principle also choose more complex models for $\Sigma,$ such as the general PCSV model, and adjust the allocation constraints accordingly without changing the underlying theory in a significant way.}\\

For notational convenience in the following discussion, we introduce the element-wise product between any two real vectors $x,y$ of identical dimension as $x\odot y$.
\begin{definition}[$\mathcal{M}_{CIR}$]\label{def: CIR market}
	\textcolor{white}{1}\\
	Let $m,d,d_1,...,d_m \in \N$ such that $m\leq d$ and $\sum_{i=1}^m d_i = d.$ Consider constants $\kappa,\theta, \sigma \in (0,\infty)^d$ such that
	\begin{align}\label{eq: Fellers Condition in m-factor CIR market}
		2\kappa_i\theta_i > \sigma_i^2 \quad \forall i=1,...,m.
	\end{align}
Moreover, let $r \in \R$ and $\rho_i\in (-1,1)^{d_i},$ $\eta_i \in \R^{d_i},$ and non-singular $\Sigma_i\in \R^{d_i\times d_i}$ be given for $i=1,...,m.$ Then, the $d$-dimensional market $\mathcal{M}_{CIR}$ with $m$-factor volatility of CIR-type is defined by the market coefficients
\begin{align*}
	&\mu^z(t,z) = \kappa \odot \left(\theta - z\right), \quad \Sigma^z(t,z) = 
	\begin{pmatrix} \sigma_1 \sqrt{z_1} & & 0\\
		& \ddots & \\
		 0& & \sigma_m \sqrt{z_m}
	\end{pmatrix}, \quad \rho(t,z) = \begin{pmatrix}
	\rho_1' & & 0 \\
	& \ddots & \\
	0 & & \rho_m'
\end{pmatrix}\in \R^{m \times d}, \\
& r(t,z) = r, \quad \mu(t,z) = r(t,z)\1 + \begin{pmatrix}
	\eta_1 z_1 \\
	\vdots\\
	\eta_m z_m
\end{pmatrix}, 
\quad \Sigma(t,z) = \begin{pmatrix}
\Sigma_1 \sqrt{z_1} & & 0 \\
& \ddots & \\
0 & & \Sigma_m \sqrt{z_m}
\end{pmatrix}.
\end{align*}
\end{definition}
In $\mathcal{M}_{CIR},$ the minimization (\ref{eq: minimization w.r.t. lambda}) can be equivalently rewritten as\footnote{Compare to the derivation of (\ref{eq: optimizer in CIR market}) in the proof of the subsequent Lemma \ref{lem: solution to dual HJBI PDE CIR model} for details.}
\begin{align*}
	 \underset{\substack{\lambda = \left(\lambda_1,...,\lambda_m\right)' \\	\lambda_i \in \R^{d_i}}}{\text{argmin}}\left \{ 2(1-b)\delta_{K}\left(\lambda\right) + \sum_{i=1}^m \left( 2\left(\Sigma_i^{-1}\lambda_i\right)'\left(\Sigma_i^{-1}\eta_i + \sigma_i B_i \rho_i \right) + \left \Vert \Sigma_i^{-1}\lambda_i \right \Vert^2 z_i \right)\right \}.
\end{align*}
However, as the underlying financial market model $\mathcal{M}_{CIR}$ consists of $m$ independent asset classes, it is natural to assume a certain independence with respect to the allocation constraints, too. This independence can be expressed in mathematical terms by assuming that $K$ can be written as the Cartesian product of $m$ constraints $K_1,...,K_m$ on the individual asset classes.
\begin{lemma}[Dual ODEs in $\mathcal{M}_{CIR}$]\label{lem: solution to dual HJBI PDE CIR model}
	\textcolor{white}{1}\\
	Consider the financial market $\mathcal{M}_{CIR}.$ If $K=\bigtimes_{i=1}^m K_i$ with $K_i \subset \R^{d_i}$ closed convex and non-empty interior for every $i=1,..,m,$ then Condition \ref{cond: Exp Separability optimal lambda*} is satisfied.\\
	Let
	\begin{align*}
		\la(t,z,B):= \begin{pmatrix}
			\la_1(B_1)z_1 \\
			\vdots \\
			\la_m(B_m)z_m
		\end{pmatrix},
	\end{align*}
	where
	\begin{align*}
		\la_i(B_i)=\underset{\lambda_i \in \R^{d_i}}{\text{argmin}}\left\{2(1-b)\delta_{K_i}\left(\lambda_i\right) + \left \Vert   \Sigma_i^{-1}\left(\eta_i + \lambda_i\right) +  \sigma_i B_i \rho_i \right \Vert^2 \right \}
	\end{align*}
	and $A:[0,T]\rightarrow \R,$ $B:[0,T]\rightarrow \R^m$ satisfy $A(0)=0,$ $B(0)=0$ and
	\begin{align}
		A_{\tau}(\tau) &= br + \left(\kappa \odot \theta \right)'B(\tau)\nonumber \\
		\left(B_{\tau}\right)_i(\tau)&=-\kappa_i B_i(\tau) + \frac{1}{2}\sigma_i^2\left(B_i(\tau)\right)^2  +\frac{1}{2}\frac{b}{1-b}\inf_{\lambda_i \in \R^{d_i}}\left\{2(1-b)\delta_{K_i}\left(\lambda_i\right) + \left \Vert   \Sigma_i^{-1}\left(\eta_i + \lambda_i\right) +  \sigma_i B_i \rho_i \right \Vert^2 \right \}. \nonumber
	\end{align}
	Then, 
	$$
	G(t,v,z) = \frac{1}{b}v^b\exp(A(T-t)+B(T-t)'z)
	$$
	is a solution to the dual HJBI PDE (\ref{eq: dual HJB PDE general}) and the corresponding candidate optimal portfolio is
	\begin{align*}
		\ubar{\pi}^{\ast}(t,v,z) = \begin{pmatrix}
			\pi^{\ast}_1(B_1(T-t))\\
			\vdots\\
			\pi^{\ast}_m(B_m(T-t))
		\end{pmatrix},
	\end{align*}
with
	\begin{align*}
	 \pi^{\ast}_i(B_i(T-t)) = \frac{1}{1-b}\left(\Sigma_i\Sigma_i'\right)^{-1}\Big(\eta_i + \la_i(B_i(T-t)) + \sigma_iB_i(T-t)\Sigma_i\rho_i\Big).
	\end{align*}
\end{lemma}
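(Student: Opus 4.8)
The plan is to verify that $\mathcal{M}_{CIR}$ together with the product constraint $K=\bigtimes_{i=1}^m K_i$ satisfies Condition \ref{cond: Exp Separability optimal lambda*}, to identify the resulting affine coefficient functions, and then to obtain the ODE system and the portfolio by specialising the general ODEs (\ref{eq: constr ODE for A}), (\ref{eq: constr ODE for B}) and formula (\ref{eq: optimal constr. pi (general)}) of Theorem \ref{thm: solution to dual HJB PDE given condition EAS}. The crux is the explicit computation of the minimiser $\hat{\lambda}^{\ast}$ of (\ref{eq: minimization w.r.t. lambda}). First I would exploit the block-diagonal structure and the strict positivity of the CIR factors (guaranteed by the Feller condition (\ref{eq: Fellers Condition in m-factor CIR market})) to substitute $\lambda_i = \tilde{\lambda}_i z_i$ in each block. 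Writing $\Sigma^{-1}$, $\mu-r\1$ and $(\Sigma^z\rho)'B$ block-wise, the $i$-th block of $\Sigma^{-1}(\mu-r\1+\lambda)+(\Sigma^z\rho)'B$ becomes $\sqrt{z_i}\big[\Sigma_i^{-1}(\eta_i+\tilde{\lambda}_i)+\sigma_iB_i\rho_i\big]$. Since $K=\bigtimes_i K_i$ gives $\delta_K(\lambda)=\sum_i\delta_{K_i}(\lambda_i)$ and each $\delta_{K_i}$ is positively homogeneous of degree one, the whole objective factors as $\sum_i z_i\big[2(1-b)\delta_{K_i}(\tilde{\lambda}_i)+\Vert\Sigma_i^{-1}(\eta_i+\tilde{\lambda}_i)+\sigma_iB_i\rho_i\Vert^2\big]$. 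As every $z_i>0$, the minimisation decouples across blocks, yielding $\tilde{\lambda}_i=\la_i(B_i)$ and hence $\hat{\lambda}^{\ast}_i(t,z,B)=\la_i(B_i)z_i$; existence and uniqueness of each $\la_i(B_i)$ follow from the strict convexity and coercivity of the quadratic term (as $\Sigma_i$ is invertible) together with the lower semicontinuity of $\delta_{K_i}$.

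Next I would match the market coefficients to the affine templates in Condition \ref{cond: Exp Separability optimal lambda*}. Writing $\mu^z=\kappa\odot\theta-\diag(\kappa)z$, $\Sigma^z(\Sigma^z)'=\diag(\sigma_i^2z_i)$ and $\Sigma^z\rho(\Sigma^z\rho)'=\diag(\sigma_i^2\Vert\rho_i\Vert^2 z_i)$ — the cross blocks vanishing because the $\rho_i$ occupy disjoint coordinates — identifies $k_0,k_1$, $h_0=0,h_1$ and $l_0=0,l_1$ directly. For the three identities involving $\hat{\lambda}^{\ast}$, I would use $\hat{\lambda}^{\ast}_i=\la_i(B_i)z_i$ and homogeneity to get $r+\delta_K(\hat{\lambda}^{\ast})=r+\sum_i\delta_{K_i}(\la_i(B_i))z_i$; the $i$-th block of $\Sigma^{-1}(\mu-r\1+\hat{\lambda}^{\ast})$ equals $\sqrt{z_i}\,\Sigma_i^{-1}(\eta_i+\la_i(B_i))$, so $\Vert\Sigma^{-1}(\cdots)\Vert^2=\sum_i z_i\Vert\Sigma_i^{-1}(\eta_i+\la_i(B_i))\Vert^2$; and $\Sigma^z\rho\,\Sigma^{-1}(\mu-r\1+\hat{\lambda}^{\ast})$ has $i$-th entry $\sigma_i z_i\rho_i'\Sigma_i^{-1}(\eta_i+\la_i(B_i))$. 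All three are affine in $z$ with $q_0=g_0=0$, so Condition \ref{cond: Exp Separability optimal lambda*} is satisfied.

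Finally I would substitute these coefficients into (\ref{eq: constr ODE for A}) and (\ref{eq: constr ODE for B}). The $A$-ODE collapses immediately to $A_{\tau}=br+(\kappa\odot\theta)'B$ since $h_0,q_0,g_0,l_0$ all vanish. For the $B$-ODE, the componentwise computation produces $-\kappa_iB_i+\tfrac12\sigma_i^2B_i^2$ from the drift and variance terms, while the remaining terms combine as $b\,\delta_{K_i}(\la_i(B_i))+\tfrac12\tfrac{b}{1-b}\big[\,\Vert\Sigma_i^{-1}(\eta_i+\la_i)\Vert^2+2\sigma_iB_i\rho_i'\Sigma_i^{-1}(\eta_i+\la_i)+\sigma_i^2\Vert\rho_i\Vert^2B_i^2\,\big]$; completing the square rewrites the bracket as $\Vert\Sigma_i^{-1}(\eta_i+\la_i)+\sigma_iB_i\rho_i\Vert^2$, and since $\la_i(B_i)$ is the minimiser this equals the stated infimum, giving exactly $(B_{\tau})_i$. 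Theorem \ref{thm: solution to dual HJB PDE given condition EAS} then yields that $G$ solves both the primal and dual PDEs, and substituting the block structure into (\ref{eq: optimal constr. pi (general)}) — where the $i$-th block of $\mu+\hat{\lambda}^{\ast}-r\1+(\Sigma^z\rho\Sigma')'B$ is $z_i(\eta_i+\la_i+\sigma_iB_i\Sigma_i\rho_i)$ and $(\Sigma\Sigma')^{-1}$ contributes $\tfrac1{z_i}(\Sigma_i\Sigma_i')^{-1}$ — produces the claimed $\pi_i^{\ast}(B_i(T-t))$.

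The one genuinely delicate point is the rescaling argument for $\hat{\lambda}^{\ast}$: one must verify that the substitution $\lambda_i\mapsto\tilde{\lambda}_i z_i$ is a legitimate bijection (which requires $z_i>0$, hence the Feller condition) and that the degree-one homogeneity of $\delta_{K_i}$ together with the product structure of $K$ is precisely what makes both $\delta_K(\hat{\lambda}^{\ast})$ and the quadratic penalty factor out a common $z_i$, so that all six identities of Condition \ref{cond: Exp Separability optimal lambda*} become simultaneously affine in $z$. Once this reduction is secured, everything else is routine block-diagonal bookkeeping.
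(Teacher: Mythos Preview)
Your proposal is correct and follows essentially the same approach as the paper: decompose the support function via the product structure of $K$, use Feller's condition to justify the substitution $\lambda_i=\tilde{\lambda}_i z_i$ (the paper writes it as $\hat{\lambda}_i=\lambda_i/z_i$), exploit the degree-one homogeneity of $\delta_{K_i}$ to factor out $z_i$ from each block, read off the affine coefficients in Condition~\ref{cond: Exp Separability optimal lambda*}, and then specialise the general ODEs and (\ref{eq: optimal constr. pi (general)}). The only stylistic difference is that you work directly with the completed-square form of (\ref{eq: minimization w.r.t. lambda}) whereas the paper expands it first, but the substance is identical.
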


The ODEs for $B_i$ in Lemma \ref{lem: solution to dual HJBI PDE CIR model} do not admit a general closed-form solution, as the right-hand side of the ODE still depends on a convex minimization problem. Note however, as long as all minimizer $\la_i(B_i)$ are continuous in $B_i$ (e.g. if each $K_i$ are compact sets), then the right-hand side of each ODE for $B_i$ is continuous in $B_i$ and therefore admits at least a local solution.\\
Due to this lack of an explicit representation for $B,$ we also lack an explicit representation for $\ubar{\pi}^{\ast}.$  In addition, the quadratic variations of $\ln\left(V^{v_0, \ubar{\pi}^{\ast}}\right)$ and $z$ are stochastic and we can thus no longer follow the approach from Corollary \ref{cor: uniform integrability B.S. model} to formally verify the optimality of $\ubar{\pi}^{\ast}.$ A thorough investigation of this topic is beyond the scope of this paper and will therefore be left for future research.  

\subsection{Multi-Factor Short Rate of OU-Type}\label{subsec: Multi-Factor Short Rate of OU-Type}
Lastly, we consider a financial market $\mathcal{M}_{OU}$ with a stochastic short rate $r,$ which is driven by an $m$-dimensional Ornstein-Uhlenbeck process and $d=m$ zero-coupon bonds with maturities $T_1,...,T_m > T$ as primary traded assets (similar models were discussed e.g. in Section 7.3 in \cite{Kwok2008} for derivatives pricing, in \cite{Zagst2007} for economic scenario generation as well as in \cite{Sorensen1999} and \cite{Liu2006} in a portfolio optimization context). For this purpose, we define $\mu^z$ and $\Sigma^z$ as
\begin{align}\label{eq: Ohrnstein-Uhlenbeck dynamics z}
	\mu^z(t,z) = \kappa\odot \left[\theta-z\right], \quad \Sigma^z(t,z) = \sigma,
\end{align}
for arbitrary constants $\kappa \in (0,\infty)^m$, $\theta \in \R^m,$ and a non-singular matrix $\sigma \in \R^{m\times m}.$ For two weights $w_0 \in \R,$ $w_1 \in \R^m$ we then define the short rate $r$ through
\begin{align*}
	r(t,z) = w_0 + w_1'z.
\end{align*}
In particular, the $Q$-dynamics of the short rate are given as 
\begin{align*}
dr(t,z(t)) &= d(w_0 + w_1'z(t)) = w_1'dz(t) = w_1'\left(\kappa \odot \left[\theta -z(t)\right] \right)dt + w_1'\sigma dW^z(t) \\
\end{align*}
To formally define $\mathcal{M}_{OU},$ we still need to explicitly determine the dynamics of the traded zero-coupon bonds. We determine these dynamics via risk-neutral pricing. Assuming a constant market price of risk $\eta \in \R^m,$ we can define the equivalent martingale measure $\tilde{Q}$ through its Radon-Nikodym derivative
\begin{align*}
	\frac{d\tilde{Q}}{dQ} = \exp\left( -\frac{1}{2} \Vert \eta \Vert^2T - \int_0^T\eta'dW^z(t)\right).
\end{align*}
Then, according to Girsanov's theorem, there exists $\tilde{Q}$-Wiener process $\tilde{W}^z$ such that the $\tilde{Q}$-dyamics of the short rate are given as 
\begin{align*}
	dr(t,z(t)) = \left(\kappa \odot \left[\theta -z(t)\right] - w_1'\sigma\eta \right)dt + w_1'\sigma d\tilde{W}^z(t).
\end{align*}
Moreover, we can now make use of risk-neutral pricing to determine the arbitrage-free prices of zero-coupon bonds with different maturities. The financial market $\mathcal{M}_{OU}$ belongs to the group of affine factor models (without stochastic volatility) studied in \cite{Duffie1996}. Hence, there exist suitable deterministic continuously differentiable functions $a:(0,\infty)\rightarrow \R,$ $b:(0,\infty)\rightarrow \R^m$ such that the price of a zero-coupon bond with maturity $T_i$ at time $t\in[0,T].$
\begin{align}\label{eq: ZCB Exponential Affine}
	P(t,T_i) &= \E_{\tilde{Q}}\left[\exp\left(-\int_t^{T_i}r(s,z(s))ds\right) \ \Big | \mathcal{F}_t\right] = \exp\left(a(T_i-t) + b(T_i-t)'z(t)\right).
\end{align}
 By applying It\^{o}'s formula and noting that the discounted price processes \linebreak $\left(\nicefrac{P(t,T_i)}{P_0(t)}\right)_{t \in [0,T]}$ are martingales with respect to $\tilde{Q},$ we see that the $Q$-dynamics of $\left(P(t,T_i)\right)_{t\in [0,T]}$ are 
\begin{align*}
	dP(t,T_i) = P(t,T_i)\left(\left[r(t,z(t)) + b(T_i-t)'\sigma \eta \right]dt +  b(T_i-t)'\sigma dW^z(t)\right).
\end{align*}
These zero-coupon bonds $P(t,T_1),...,P(t,T_m)$ constitute the primary traded assets of the financial market $\mathcal{M}_{OU}$ which is formally defined below.

\begin{definition}[$\mathcal{M}_{OU}$]\label{def: OU-Market}
	\textcolor{white}{1}\\
	Let $m=d\in \N.$ Consider constants $w_0\in \R,$ $\kappa \in (0,\infty)^m,$ $w_1, \theta, \eta \in \R^m,$ a non-singular matrix $\sigma \in \R^{m\times m},$ maturities $\hat{T}=(T_1,...,T_m)'\in (T,\infty)^m$ and a continuously differentiable function $b:(0,\infty)\rightarrow \R^m$ such that the matrix
	$$
	b(t;\hat{T})=\left(b(T_1-t),...,b(T_m-t)\right)\in \R^{m\times m}
	$$
	has an inverse $b(t;\hat{T})^{-1}$ for all $t\in [0,T].$ Then, the $m$-dimensional Bond market $\mathcal{M}_{OU}$ with OU short rate is defined by the market coefficients
	\begin{align*}
		&\mu^z(t,z) = \kappa \odot\left(\theta-z\right), \quad \Sigma^z(t,z) = \sigma, \quad \rho(t,z) = I_m \\
		&r(t,z) = w_0 + w_1'z, \quad \mu(t,z) = r(t,z)\1 + b(t;\hat{T})'\sigma \eta, \quad \Sigma(t,z) = b(t;\hat{T})'\sigma.
	\end{align*}
\end{definition}
Despite the stochastic short rate, the market $\mathcal{M}_{OU}$ is surprisingly tractable. Specifically, none of the terms involved in the minimization (\ref{eq: minimization w.r.t. lambda}) are dependent on the stochastic factor $z,$ which results in a time-dependent but deterministic candidate optimal portfolio $\ubar{\pi}^{\ast}.$
\begin{lemma}[Dual ODEs in $\mathcal{M}_{OU}$]\label{lem: solution to dual HJBI PDE OU model}
	\textcolor{white}{1}\\
	Consider the financial market $\mathcal{M}_{OU}.$ Then Condition \ref{cond: Exp Separability optimal lambda*} is satisfied. \\
	Let 
	\begin{align*}
		\la(t,B)= \underset{\lambda \in \R^d}{\text{argmin}}\left \{ 2(1-b)\delta_K(\lambda) + \left \Vert \eta + \left(b(t;\hat{T})'\sigma\right)^{-1}\lambda + \sigma'B \right \Vert^2 \right \}
	\end{align*}
	and $A:[0,T]\rightarrow \R,$ $B:[0,T]\rightarrow \R^m$ satisfy $A(0)=0,$ $B(0)=0$ and
	\begin{align}
		A_{\tau}(\tau) &= bw_0+\left(\kappa \odot \theta \right)'B(\tau) + \frac{1}{2}\Vert \sigma'B(\tau) \Vert^2  \nonumber \\
		&\quad  + \frac{1}{2}\frac{b}{1-b}\inf_{\lambda \in \R^d}\left \{ 2(1-b)\delta_K(\lambda) + \left \Vert \eta + \left(b(T-\tau;\hat{T})'\sigma\right)^{-1}\lambda + \sigma'B \right \Vert^2 \right \} \nonumber \\
		B_{\tau}(\tau)&= bw_1 - \kappa \odot B(\tau) \label{eq: ODE for B OU Model}.
	\end{align}
	Then, 
	$$
	G(t,v,z) = \frac{1}{b}v^b\exp(A(T-t)+B(T-t)'z)
	$$
	is a solution to the dual HJBI PDE (\ref{eq: dual HJB PDE general}) and the corresponding candidate optimal portfolio $\ubar{\pi}^{\ast}$ (as in (\ref{eq: optimal constr. pi (general)})) is given as
	\begin{align*}
		\ubar{\pi}^{\ast}(t,v,z) = \frac{1}{1-b}\left(\sigma' b(t;\hat{T})\right)^{-1}\Big(\eta + \left(b(t;\hat{T})'\sigma\right)^{-1}\la(t,B(T-t)) + \sigma'B(T-t)\Big).
	\end{align*}
\end{lemma}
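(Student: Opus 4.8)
The plan is to apply Theorem~\ref{thm: solution to dual HJB PDE given condition EAS}, so the heart of the argument is verifying Condition~\ref{cond: Exp Separability optimal lambda*} for the coefficients of $\mathcal{M}_{OU}$; the stated ODEs and portfolio formula then follow by specialising the general ODEs~(\ref{eq: constr ODE for A}), (\ref{eq: constr ODE for B}) and the general allocation~(\ref{eq: optimal constr. pi (general)}) to this market. First I would record the two algebraic simplifications that drive everything. Writing $\Sigma = b(t;\hat{T})'\sigma$, we have $\mu - r\1 = b(t;\hat{T})'\sigma\eta$ and hence $\Sigma^{-1}(\mu - r\1) = \eta$, while $\rho = I_m$ gives $\Sigma^z\rho = \sigma$ and thus $(\Sigma^z\rho)'B = \sigma'B$. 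Substituting these into the minimisation~(\ref{eq: minimization w.r.t. lambda}) shows its objective equals $2(1-b)\delta_K(\lambda) + \Vert \eta + (b(t;\hat{T})'\sigma)^{-1}\lambda + \sigma'B\Vert^2$, which is entirely free of $z$. Consequently $\hat{\lambda}^{\ast}(t,z,B) = \la(t,B)$ is independent of $z$ and coincides with the minimiser stated in the lemma; this is the observation that makes the whole example tractable.

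With this in hand, verifying Condition~\ref{cond: Exp Separability optimal lambda*} reduces to inspection, since the only coefficient carrying $z$-dependence is the affine short rate $r = w_0 + w_1'z$. Explicitly, $\mu^z = \kappa\odot\theta - \kappa\odot z$ yields $k_0 = \kappa\odot\theta$, $k_1 = -\diag(\kappa)$; $\Sigma^z(\Sigma^z)' = \sigma\sigma'$ yields $h_0 = \sigma\sigma'$, $h_1 = 0$; and $\Sigma^z\rho = \Sigma^z$ forces $l_0 = l_1 = 0$. The remaining three quantities are $z$-free precisely because $\hat{\lambda}^{\ast}$ is, so $r + \delta_K(\hat{\lambda}^{\ast})$ is affine with $p_0 = w_0 + \delta_K(\la(t,B))$, $p_1 = w_1$, while the squared-norm and cross terms give $q_1 = 0$, $g_1 = 0$ and the constants $q_0 = \Vert\eta + (b(t;\hat{T})'\sigma)^{-1}\la\Vert^2$, $g_0 = \sigma(\eta + (b(t;\hat{T})'\sigma)^{-1}\la)$. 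Theorem~\ref{thm: solution to dual HJB PDE given condition EAS} then certifies that $G$ solves both the primal and dual PDEs.

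It remains to specialise the ODEs and the allocation. Inserting $h_1 = l_1 = q_1 = g_1 = 0$ and $p_1 = w_1$, $k_1 = -\diag(\kappa)$ into~(\ref{eq: constr ODE for B}) collapses it immediately to $B_\tau = bw_1 - \kappa\odot B$. For~(\ref{eq: constr ODE for A}) the one non-routine step is to recognise the bracket $q_0 + 2g_0'B + B'(l_0+h_0)B$ as a perfect square: with $\xi := \eta + (b(t;\hat{T})'\sigma)^{-1}\la$ one has $q_0 = \Vert\xi\Vert^2$, $2g_0'B = 2\xi'\sigma'B$ and $B'h_0B = \Vert\sigma'B\Vert^2$, so the bracket equals $\Vert\xi + \sigma'B\Vert^2$; using $\tfrac{1}{2}\tfrac{b}{1-b}\cdot 2(1-b)\delta_K(\la) = b\delta_K(\la)$ to reabsorb the support-function term then reproduces the stated $A$-ODE with its $\inf_\lambda$ form. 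Finally, substituting $(\Sigma\Sigma')^{-1} = b(t;\hat{T})^{-1}(\sigma\sigma')^{-1}(b(t;\hat{T})')^{-1}$ and $(\Sigma^z\rho\Sigma')'B = b(t;\hat{T})'\sigma\sigma'B$ into~(\ref{eq: optimal constr. pi (general)}) and cancelling matrix factors via $(\sigma\sigma')^{-1}\sigma = (\sigma')^{-1}$ reduces $\ubar{\pi}^{\ast}$ to $\tfrac{1}{1-b}(\sigma'b(t;\hat{T}))^{-1}(\eta + (b(t;\hat{T})'\sigma)^{-1}\la + \sigma'B)$. I expect the main obstacle to be purely bookkeeping, namely keeping the nested inverses of $b(t;\hat{T})'\sigma$ straight through the portfolio simplification and establishing the $z$-independence of $\hat{\lambda}^{\ast}$ cleanly, since every other step is a direct substitution into results already proved.
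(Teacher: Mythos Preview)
Your proposal is correct and follows essentially the same route as the paper's proof: both first show that the minimiser $\hat{\lambda}^{\ast}$ is $z$-independent, then identify the coefficients $k_0,k_1,h_0,l_0,l_1,p_0,p_1,q_0,q_1,g_0,g_1$ to verify Condition~\ref{cond: Exp Separability optimal lambda*}, invoke Theorem~\ref{thm: solution to dual HJB PDE given condition EAS}, and reduce the general ODEs and allocation formula by direct substitution. Your perfect-square observation $q_0 + 2g_0'B + B'h_0B = \Vert \xi + \sigma'B\Vert^2$ and the reabsorption of $b\delta_K(\la)$ into the bracket are exactly the manipulations the paper performs to recover the $\inf_\lambda$ form of the $A$-ODE.
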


\begin{remark}\label{rem: explicit solution B OU Model}
	The solution to the ODE (\ref{eq: ODE for B OU Model}) is known in closed-form (see e.g. equations (6) and (7) in Chapter 1, Section \S 2 of \cite{Walter1998}) and is given by $B(\tau)= (B_1(\tau),...,B_m(\tau))'$ with 
	$$
	B_i(\tau) = \left(w_1\right)_i b e^{-\kappa_i \tau}\int_0^\tau e^{\kappa_i s}ds  = \frac{\left(w_1\right)_ib}{\kappa_i}\left(1-e^{-\kappa_i \tau}\right).
	$$
\end{remark}
Unlike in $\mathcal{M}_{CIR},$ the quadratic variations of $\ln\left(V^{v_0, \ubar{\pi}^{\ast}}\right)$ and $z$ are deterministic and bounded in $\mathcal{M}_{OU}.$ Therefore, it is straightforward to adapt the proof of Corollary \ref{cor: uniform integrability B.S. model} to formally verify the optimality of $\ubar{\pi}^{\ast}$ for $\mathbf{(P)}.$
\begin{corollary}\label{cor: uniform integrability OU model}
	Consider the financial market $\mathcal{M}_{OU}.$ Then, $G,$ $\ubar{\pi}^{\ast}$ as in Lemma \ref{lem: solution to dual HJBI PDE OU model} and $\lambda \equiv 0$ satisfy Condition \ref{cond: uniform integrability condition for dual}. In particular, $\ubar{\pi}^{\ast}$ is optimal for $\mathbf{(P)}.$
\end{corollary}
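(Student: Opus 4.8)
The plan is to adapt the argument behind Corollary~\ref{cor: uniform integrability B.S. model} by exploiting the Gaussian structure that $\mathcal{M}_{OU}$ inherits from Definition~\ref{def: OU-Market}. First I would record the structural simplifications peculiar to $\mathcal{M}_{OU}$: since $\rho \equiv I_m$ and $m=d$, the driving Wiener processes coincide, $W=W^z$; the diffusion $\Sigma(t,z)=b(t;\hat{T})'\sigma$ and the excess return $\mu(t,z)-r(t,z)\1 = b(t;\hat{T})'\sigma\eta$ are deterministic; and the candidate $\ubar{\pi}^{\ast}(t,v,z)$ from Lemma~\ref{lem: solution to dual HJBI PDE OU model} depends on $t$ alone. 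Together with the explicit, bounded solution $B$ from Remark~\ref{rem: explicit solution B OU Model} and the continuity (hence boundedness on $[0,T]$) of $A$, this makes every coefficient entering the wealth dynamics deterministic except for the affine short-rate contribution $r(t,z)=w_0+w_1'z$.

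Consequently $z$ is an OU (Gaussian) process and the stochastic integral $\int_0^\cdot \ubar{\pi}^{\ast}(s)'\Sigma\, dW^z(s)$ has a deterministic, bounded integrand, so $\ln V^{v_0,\ubar{\pi}^{\ast}}$ and $z$ form a jointly Gaussian pair with mean and covariance functions bounded on $[0,T]$ (the drift of $\ln V^{v_0,\ubar{\pi}^{\ast}}$ is affine in the Gaussian $z$ and its quadratic variation is deterministic and bounded). I would then set
$$M(u):=b\ln V^{v_0,\ubar{\pi}^{\ast}}(u)+A(T-u)+B(T-u)'z(u),$$
so that $|G(u,V^{v_0,\ubar{\pi}^{\ast}}(u),z(u))|=\tfrac{1}{|b|}e^{M(u)}$ (with $\lambda\equiv 0$ we have $\delta_K(0)=0$, whence $V^{v_0,\ubar{\pi}^{\ast}}_{\lambda}=V^{v_0,\ubar{\pi}^{\ast}}$). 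Since every stopped value $G(\tau^{0}_{n,t},\cdot,\cdot)$ is dominated by $\sup_{u\in[t,T]}|G(u,V^{v_0,\ubar{\pi}^{\ast}}(u),z(u))|$, Condition~\ref{cond: uniform integrability condition for dual} follows once I show that this supremum is integrable, i.e. $\E\big[\,\sup_{u\in[t,T]}e^{M(u)}\big]<\infty$, because a family dominated by a single integrable random variable is uniformly integrable.

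The remaining step, which I expect to be the main obstacle, is the exponential moment bound on $\sup_{u\in[t,T]}M(u)$. Because $M$ is a continuous Gaussian process on a compact interval with bounded mean function, its running supremum is a.s.\ finite and possesses Gaussian upper tails by the Borell--TIS inequality (equivalently Fernique's theorem), giving $\E[e^{\alpha\sup_u M(u)}]<\infty$ for every $\alpha>0$ and in particular the required integrability. Should one prefer to avoid Gaussian-process machinery and stay closer to the computation in Corollary~\ref{cor: uniform integrability B.S. model}, the same conclusion can be reached by writing $e^{M(u)}$ as a bounded deterministic factor times an exponential local martingale whose quadratic variation is deterministic and bounded, estimating its second moment directly (a Novikov-type bound is immediate since the integrand is deterministic) and controlling the supremum through Doob's maximal inequality. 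Either route reduces to bookkeeping the deterministic, bounded coefficient structure identified above. With the uniform integrability established, Theorem~\ref{thm: verification theorem primal OP} applies and yields $G(t,v,z)=\Phi(t,v,z)$ together with the optimality of $\ubar{\pi}^{\ast}$ for $\mathbf{(P)}$.
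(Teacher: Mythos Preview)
Your proposal is correct and follows a genuinely different route from the paper. You exploit the joint Gaussianity of $(\ln V^{v_0,\ubar{\pi}^{\ast}},z)$ directly: since $M(u)$ is a continuous Gaussian process on $[t,T]$ with bounded mean and variance, Borell--TIS (or Fernique) yields exponential moments for $\sup_u M(u)$, and dominating every stopped value by this supremum gives uniform integrability in one stroke. The paper instead stays purely within martingale calculus: it bounds $|G|^q$ by $\tilde{C}_q\exp(X_t-\tfrac12\langle X\rangle_t)$, expands $B(T-t)'z(t)$ via It\^{o}'s product rule to see that $\langle X\rangle$ is deterministic and bounded (and, implicitly, that the $z$-dependent drift cancels thanks to the ODE $B_\tau=bw_1-\kappa\odot B$), and then applies Doob's optional sampling to the resulting exponential supermartingale to obtain $L^q$-boundedness of the stopped sequence. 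Your argument is cleaner and model-robust---it uses only the Gaussian structure, not the specific ODE for $B$---at the price of importing a Gaussian concentration inequality; the paper's computation is more elementary but requires the bookkeeping that makes the drift disappear. Your secondary route (deterministic quadratic variation plus Doob's maximal inequality) is essentially the paper's argument with optional sampling replaced by the maximal inequality, which is an equally valid variant.
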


\section{Conclusion}\label{sec: conclusion}
In this paper, we examined a portfolio optimization problem in a financial market where asset dynamics depend on a stochastic factor. In the spirit of \cite{Liu2006}, we were able to derive Condition \ref{cond: Exp Separability optimal lambda*} which guarantees that the solution to the HJB PDE for an allocation constrained portfolio optimization problem is exponentially affine and separable in wealth and the stochastic factor. We were able to use Condition \ref{cond: Exp Separability optimal lambda*} to characterize the optimal allocation constrained portfolio up to the solution of a deterministic optimization problem and the solution of Riccati ODEs in a market with stochastic volatility of CIR-type and in a market with stochastic short rate of OU-type. Special examples of these models include the Heston model, the PCSV model and the Vasicek model. We derived a formal verification result for the market with stochastic short rate and a general verification result up to a uniform integrability condition for arbitrary markets satisfying Condition \ref{cond: Exp Separability optimal lambda*}. The proposed methodology is general enough to derive and study the optimal allocation constrained portfolios for several financial markets with dynamics depending on a stochastic factor. Moreover, it would be interesting to continue investigating the case of CIR-type  and related stochastic volatility models, which are extremely common in the mathematical finance literature. Areas worth investigating include the solvability of the related Riccati ODEs, properties of the optimal portfolios as well as the formal verification of their optimality. We leave such analysis for future research.

\section*{Data Availability Statement}
Data sharing not applicable to this article as no datasets were generated or analysed during the current study.

\printbibliography

\newpage

\section*{Appendix: Proofs}\label{sec: appendix proofs}

\begin{proof}[Proof of Lemma \ref{lem: saddle point implies strong duality}]\textcolor{white}{123}\\
\enquote{(i) $\Rightarrow$ (ii)}: 
Let (i) hold.  Then, 
\begin{align*}
    \Phi_P &\hspace{10.5mm}=\hspace{10.5mm} \sup_{x\in \R^d}f(x) \overset{x^{\ast} \ \text{optimal for} \ \mathbf{(P)}}{=} f(x^{\ast}) = F(x^{\ast},0) = F^{\ast \ast}(x^{\ast},0)\\
   & \hspace{10.5mm} = \hspace{10.5mm}\inf_{\lambda \in \R^d}\big(\lambda \cdot 0 + \underbrace{F^{\ast}(x^{\ast},\lambda)}_{L(x^{\ast},\lambda)}\big) = \inf_{\lambda \in \R^d}L(x^{\ast},\lambda) 
    \leq L(x^{\ast},\lambda^{\ast}) \leq \sup_{x\in \R^d} L(x,\la) \\
    &\overset{\la \ \text{optimal for} \ \mathbf{(D)}}{=} \inf_{\lambda \in \R^d}\sup_{x\in \R^d}L(x,\lambda) = \Psi_D = \Phi_P.
\end{align*}
This yields
$$\sup_{x\in \R^d}L(x,\la) = L(x^{\ast},\la) = \inf_{\lambda \in \R^d}L(x^{\ast},\lambda)$$ 
and therefore $(x^{\ast},\la)$ is a saddle-point. \\

\enquote{(ii) $\Rightarrow$ (i)}: Let (ii) hold.   As $(x^{\ast},\lambda^{\ast})$ is a saddle-point of $L$, we have 
$$\sup_{x\in \R^d} L(x,\lambda^{\ast}) = L(x^{\ast},\lambda^{\ast}) \quad \text{and} \quad \inf_{\lambda \in \R^d}L(x^{\ast},\lambda) = L(x^{\ast},\lambda^{\ast}).$$
However, as we additionally have
\begin{align*}
L(x^{\ast},\lambda^{\ast}) = \sup_{x\in \R^d}L(x,\lambda^{\ast}) \geq \sup_{x\in \R^d}\inf_{\lambda \in \R^d} L(x,\lambda) \geq \inf_{\lambda \in \R^d} L(x^{\ast},\lambda) = L(x^{\ast},\lambda^{\ast})
\end{align*}
and
\begin{align*}
    L(x^{\ast},\lambda^{\ast}) = \inf_{\lambda \in \R^d} L(x^{\ast},\lambda) \leq \inf_{\lambda \in \R^d} \sup_{x\in \R^d} L(x,\lambda) \leq  \sup_{x\in \R^d}L(x,\lambda^{\ast}) = L(x^{\ast},\lambda^{\ast}),
\end{align*}
we obtain
\begin{align}\label{eq: right side of strong duality}
    \Psi_D = \inf_{\lambda \in \R^d} \sup_{x\in \R^d} L(x,\lambda) = \sup_{x\in \R^d} L(x,\lambda^{\ast}),
\end{align}
i.e. $\lambda^{\ast}$ is optimal for $\mathbf{(D)}$. Moreover, as $F(x^{\ast},\cdot)$ is u.s.c. and concave in $u$ by assumption, we get
\begin{align}\label{eq: left side of strong duality}
    \sup_{x\in \R^d} f(x) &= \sup_{x\in \R^d}F(x,0) = \sup_{x\in \R} F^{\ast \ast}(x,0) = \sup_{x\in \R^d} \inf_{\lambda \in \R^d}\big(\lambda ' 0 + F^{\ast}(x,\lambda)\big) \nonumber \\
    &= \sup_{x\in \R^d} \inf_{\lambda \in \R^d}L(x,\lambda) = \inf_{\lambda \in \R^d} L(x^{\ast},\lambda) =\inf_{\lambda \in \R^d}\big(\lambda'0 + F^{\ast}(x^{\ast},\lambda)\big)   \nonumber \\
    &= F^{\ast \ast}(x^{\ast},0) = F(x^{\ast},0) = f(x^{\ast}), 
\end{align}
i.e. $x^{\ast}$ is optimal for $\mathbf{(P)}$. Finally, this yields
\begin{align*} 
\Phi_P &= \sup_{x\in \R^d} \inf_{ \lambda \in \R^d}L(x,\lambda) \overset{(\ref{eq: left side of strong duality})}{=}\inf_{\lambda \in \R^d} L(x^{\ast},\lambda) =L(x^{\ast},\lambda^{\ast}) = \sup_{x\in \R^d} L(x,\lambda^{\ast})\\
&\overset{(\ref{eq: right side of strong duality})}{=}\inf_{\lambda  \in \R^d} \sup_{x \in \R^d} L(x,\lambda) = \Psi_D.
\end{align*}
\end{proof}
\vspace{0.5cm}
\begin{proof}[Proof of Lemma \ref{lem: Slater's condition}]
	Follows immediately by considering $-F$ in Theorem 18a) in \cite{Rockafellar1974}.
\end{proof}
\vspace{0.5cm}
\begin{proof}[Proof of Lemma \ref{lem: constrained real optimization}]
	Define $\Chi_K:\R^d\rightarrow \{0,-\infty\}$ as $\Chi_K(x) = 0$ if $x\in K$ and $\Chi_K(x) = -\infty$ if $x\notin K$. Further, set $f(x) = \tilde{f}(x) + \Chi_K(x)$ and  $F(x,u) = \tilde{f}(x) +  \Chi_K(x+u)$, which yields
	\begin{align*}
		\sup_{x\in K}\tilde{f}(x) = \sup_{x\in \R^d}f(x) \overset{F(x,0) = f(x)}{=} \sup_{x\in \R^d}F(x,0),
	\end{align*}
	i.e. $F$ is a pertubation of $f$. The corresponding Lagrangian $L$ can be computed as
	\begin{align*}
		L(x,\lambda) &= F^{\ast}(x,\lambda) = \sup_{u \in \R^d} \big( F(x,u) - u' \lambda\big) \\
		&= \sup_{u \in \R^d} \big( \tilde{f}(x) + \Chi_K(x+u)- u'\lambda\big) \\
		&= \tilde{f}(x) + \sup_{u \in \R^d} \big(\Chi_K(\underbrace{x+u}_{=v \in \R^d}) - (\underbrace{x+ u}_{=v \in \R^d})'\lambda  +  x'  \lambda \big) \\
		&= \tilde{f}(x) +  x' \lambda  + \sup_{v \in \R^d} \big(\Chi_K(v) - v' \lambda  \big) \\
		&= \tilde{f}(x) +  x'\lambda  + \underbrace{\sup_{v \in K} \big(- v' \lambda  \big)}_{= \delta_K(\lambda)} \\
		&= \tilde{f}(x) +  x' \lambda + \delta_K(\lambda).
	\end{align*}
	
	We verify that  $F$ satisfies Slater's condition from Lemma \ref{lem: Slater's condition}: \\
	By assumption, $\tilde{f}$ is concave and u.s.c.. The constraint set $K\subset \R^d$ is closed-convex and thus $\Chi_K(x+u)$ is u.s.c. and concave in $(x,u)$. Hence, $F(x,u) = \tilde{f}(x)+\Chi_K(x+u)$ is u.s.c. and concave in $(x,u)$. \\
	Further, $f$ is continuous and there exists $\hat{x}\in int \ K$. Thus, there exists a bounded environment $B_{\epsilon}(0)$ around $\bar{u}=0$ such that $\hat{x}+u \in K$ for all $u\in B_{\epsilon}(0)$. Hence, $F(\hat{x},u)= \tilde{f}(\hat{x})$ for all $u\in B_{\epsilon}(0)$, i.e. $F(\hat{x},\cdot)$ is bounded from  below by $\tilde{f}(\hat{x})$ on $ B_{\epsilon}(0)$.\\
	
	Thus, Slater's condition is satisfied and we have by Lemma \ref{lem: Slater's condition}
	\begin{align*}
		\sup_{x\in K}\tilde{f}(x) &= \sup_{x\in \R^d}f(x) = \Phi_P = \Psi_D = \inf_{\lambda \in \R^d}\sup_{x \in R^d}L(x,\lambda) = \inf_{\lambda \in \R^d}\sup_{x \in \R^d} \tilde{f}(x) +  x' \lambda + \delta_K(\lambda).
	\end{align*}
	Moreover, let now $\la \in \R^d$ be the minimizer of $\sup_{x\in \R^d}L(x,\lambda)$. By assumption, $\tilde{f}$ has a unique maximizer $x^{\ast}$ over $K$, which is simultaneously the unique global maximizer of $f$ on $K$. Similarly, the Lagrangian $L(\la,x) = \tilde{f}(x) + x'\la 
	+ \delta_K(\la)$ admits a unique maximizer $\hat{x}$. Thus, as both maximizers $x^{\ast}$ and $\hat{x}$ are unique, Lemma \ref{lem: saddle point implies strong duality} implies that they are equal. Hence, (\ref{eq: optimal constrained x as unconstrained maximizer of general dual 1}) and (\ref{eq: optimal constrained x as unconstrained maximizer of general dual 2}) are equivalent.
\end{proof}	
\vspace{0.5cm}
\begin{proof}[Proof of Lemma \ref{lem: dual HJB PDE}]
From Lemma \ref{lem: constrained real optimization} and Remark \ref{rem: constrained quadratic optimization} it is known that 
\begin{align*}
	0= G_t &+ v r G_v + \left(\mu^z\right)'\left(\nabla_z G\right) + \frac{1}{2}\text{Trace}\left[\Sigma^z\left( \Sigma^z\right) '\nabla^2_zG\right] \nonumber \\
	& + v \sup_{\pi \in K}\Big \{ (\mu  - r \1)'\pi G_v  + \left(\Sigma^z \rho \Sigma'\pi\right)' \nabla_z \left(G_v\right) +  \frac{1}{2}v \Vert \Sigma'\pi \Vert^2 G_{vv} \Big \} \\
	= G_t &+ v r G_v + \left(\mu^z\right)'\left(\nabla_z G\right) + \frac{1}{2}\text{Trace}\left[\Sigma^z\left( \Sigma^z\right) '\nabla^2_zG\right] \nonumber \\
	& + v \inf_{\lambda \in \R^d}\sup_{\pi \in \R^d}\Big \{\delta_K(\lambda)+\lambda'\pi +(\mu  - r \1)'\pi G_v   +  \left(\Sigma^z \rho \Sigma'\pi\right)' \nabla_z \left(G_v\right) +  \frac{1}{2}v \Vert \Sigma'\pi \Vert^2 G_{vv} \Big \} \\
	= G_t &+ v r G_v + \left(\mu^z\right)'\left(\nabla_z G\right) + \frac{1}{2}\text{Trace}\left[\Sigma^z\left( \Sigma^z\right) '\nabla^2_zG\right] \nonumber \\
	& + v \inf_{\lambda \in \R^d}\sup_{\pi \in \R^d}\Big \{\delta_K(\underbrace{G_v}_{>0}\cdot\lambda)+\underbrace{G_v}_{>0}\cdot\lambda'\pi +(\mu  - r \1)'\pi G_v +  \left(\Sigma^z \rho \Sigma'\pi\right)' \nabla_z \left(G_v\right) +  \frac{1}{2}v \Vert \Sigma'\pi \Vert^2 G_{vv} \Big \}
\end{align*}
By assumption, $v$ and $G_v(t,v,z)$ are positive\footnote{Note that this assumption is valid due to the exponential structure of the wealth process $V^{v_0,\pi}$ and the strict monotonicity of the utility function $U$.}. Using that $\delta_K$ is a support function and thus positive homogenous of order 1, yields
\begin{align}
0= G_t &+ v r G_v + \left(\mu^z\right)'\left(\nabla_z G\right) + \frac{1}{2}\text{Trace}\left[\Sigma^z\left( \Sigma^z\right) '\nabla^2_zG\right] \nonumber \\
& + v \inf_{\lambda \in \R^d}\sup_{\pi \in \R^d}\Big \{ \left[ \delta_K(\lambda) +(\mu + \lambda - r \1)'\pi \right]G_v   +  \left(\Sigma^z \rho \Sigma'\pi\right)' \nabla_z \left(G_v\right) +  \frac{1}{2}v \Vert \Sigma'\pi \Vert^2 G_{vv} \Big \}. \label{eq: dual HJBI before plugging in optimal pi}
\end{align}

Further, for every fixed $\lambda \in \R^d,$ the first-order optimality condition for the maximization over $\pi \in \R^d$ yields the candidate optimizer $\pi_{\lambda}$ through
\begin{align*}
	0&\overset{!}{=} \left[\mu + \lambda - r\1\right]G_v + \left( \Sigma^z \rho \Sigma'\right)'\nabla_z\left(G_v\right) + v\Sigma \Sigma'\pi_{\lambda} G_{vv} \\
	\Leftrightarrow \quad \pi_{\lambda}&= -\frac{1}{vG_{vv}}\left(\Sigma \Sigma'\right)^{-1}\left( \left[\mu + \lambda - r\1\right]G_v + \left( \Sigma^z \rho \Sigma'\right)'\nabla_z\left(G_v\right)\right).
\end{align*}
Note that
\begin{align*}
\left[\mu + \lambda - r\1\right]'\pi_{\lambda}G_v +  \left(\Sigma^z \rho \Sigma'\pi_{\lambda}\right)' \nabla_z \left(G_v\right) &=\pi_{\lambda}'\left[\left(\mu + \lambda - r\1\right)G_v + \left(\Sigma^z\rho\Sigma'\right)'\nabla_z\left(G_v\right)\right]\\
&=\pi_{\lambda}'\left[-vG_{vv}\Sigma\Sigma'\pi_{\lambda}\right] \\
&\quad = -v\Vert \Sigma' \pi_{\lambda}\Vert^2G_{vv}.
\end{align*}
Since we are maximizing a quadratic function with respect to $\pi,$ the first-order optimality condition is both necessary and sufficient. Hence, plugging $\pi_{\lambda}$  into (\ref{eq: dual HJBI before plugging in optimal pi}) finally yields
\begin{align*}
0= G_t &+ v r G_v + \left(\mu^z\right)'\left(\nabla_z G\right) + \frac{1}{2}\text{Trace}\left[\Sigma^z\left( \Sigma^z\right) '\nabla^2_zG\right] + v \inf_{\lambda \in \R^d}\Big \{ \delta_K(\lambda) -\frac{1}{2}v \Vert \Sigma'\pi_{\lambda} \Vert^2 G_{vv} \Big \} \\
= G_t &+ v r G_v + \left(\mu^z\right)'\left(\nabla_z G\right) + \frac{1}{2}\text{Trace}\left[\Sigma^z\left( \Sigma^z\right) '\nabla^2_zG\right] \nonumber \\
& + v \inf_{\lambda \in \R^d}\Big \{ \delta_K (\lambda) G_v- \frac{1}{2}\frac{1}{vG_{vv}}\Vert \Sigma^{-1}\left[\mu + \lambda - r\1\right]G_v + \left(\Sigma^z\rho\right)'\nabla_z\left(G_v\right)\Vert^2 \Big \}.
\end{align*}
\end{proof}
\vspace{0.5cm}
\begin{proof}[Proof of Lemma \ref{lem: weak verification theorem dual HJB PDE}]
For any $\pi \in \Lambda$ ($\ubar{\pi}\in \ubar{\Lambda}$) and $\lambda \in \mathcal{D}$ ($\ubar{\lambda}\in \ubar{\mathcal{D}}$) define the operator
\begin{align*}
    \mathcal{H}^{\pi}_{\lambda}G(t,v,z) &= G_t + v r G_v + \left(\mu^z\right)'\left(\nabla_z G\right) + \frac{1}{2}\text{Trace}\left[\Sigma^z\left( \Sigma^z\right) '\nabla^2_zG\right] \\
    & \quad + v \Big \{ \left[ \delta_K(\lambda) +(\mu + \lambda - r \1)'\pi \right]G_v   +  \left(\Sigma^z \rho \Sigma'\pi\right)' \nabla_z \left(G_v\right) +  \frac{1}{2}v \Vert \Sigma'\pi \Vert^2 G_{vv} \Big \},
\end{align*}
Let $(\ubar{\lambda}^{\ast},\pi)\in \ubar{\mathcal{D}}\times \Lambda$ satisfy Condition \ref{cond: uniform integrability condition for dual}. Then,  by It\^{o}'s lemma and the boundedness of the integrands, due to the definition of $\tau^{\ubar{\lambda}^{\ast}}_{n,t}$, we have
\begin{align*}
    \E\Big[G\left(\tau^{\ubar{\lambda}^{\ast}}_{n,t}, V_{\ubar{\lambda}^{\ast}}^{v_0, \pi}(\tau^{\ubar{\lambda}^{\ast}}_{n,t}),z(\tau^{\ubar{\lambda}^{\ast}}_{n,t})\right)\ \Big | \ \mathcal{F}_t\Big]    &= G\left(t,V_{\ubar{\lambda}^{\ast}}^{v_0, \pi}(t), z(t)\right) + \E\Big[\int_t^{\tau^{\ubar{\lambda}^{\ast}}_{n,t}} \underbrace{\mathcal{H}^{\pi}_{\ubar{\lambda}^{\ast}}G\left(s,V_{\ubar{\lambda}^{\ast}}^{v_0, \pi}(s), z(s)\right)}_{\leq \mathcal{H}^{\ubar{\pi}^{\ast}}_{\ubar{\lambda}^{\ast}}G\left(s,V_{\ubar{\lambda}^{\ast}}^{v_0, \ubar{\pi}^{\ast}}(s), z(s)\right) = 0}ds\ \Big | \ \mathcal{F}_t\Big] \\
    &\quad + \underbrace{\E\Big[\int_t^{\tau^{\ubar{\lambda}^{\ast}}_{n,t}} V_{\ubar{\lambda}^{\ast}}^{v_0, \pi}(s)G_v\left(s,V^{v_0, \pi}_{\ubar{\lambda}^{\ast}}(s),z(s)\right)\cdot \pi(s)'\Sigma(s,z(s)) dW(s)\ \Big | \ \mathcal{F}_t\Big]}_{= 0}\\
    & \quad + \underbrace{\E\Big[ \int_t^{\tau^{\ubar{\lambda}^{\ast}}_{n,t}} \nabla_z\left(G\right)\left(s,V_{\ubar{\lambda}^{\ast}}^{v_0, \pi}(s),z(s)\right)'\Sigma^z(s,z(s))dW^z(s)\ \Big | \ \mathcal{F}_t\Big]}_{= 0} \\
    &\leq G(t,V_{\ubar{\lambda}^{\ast}}^{v_0, \pi}(t), z(t))
\end{align*}
Taking conditional expectations and the limit $n\rightarrow \infty$ on both sides as well as recalling Remark \ref{rem: consequance condition UI} yields
\begin{align*}
      \mathbbm{E}\Big[U(V_{\ubar{\lambda}^{\ast}}^{v_0, \pi}(T))\ \Big | \ \mathcal{F}_t \Big] &\hspace{1.5 mm} =  \mathbbm{E}\Big[G(T,V_{\ubar{\lambda}^{\ast}}^{v_0, \pi}(T),z(T))\ \Big | \ \mathcal{F}_t \Big] \\
    & \hspace{1.5 mm} = \mathbbm{E}\Big[\lim_{n\rightarrow \infty}G(\tau^{\ubar{\lambda}^{\ast}}_{n,t},V_{\ubar{\lambda}^{\ast}}^{v_0, \pi}(\tau^{\ubar{\lambda}^{\ast}}_{n,t}),z(\tau^{\ubar{\lambda}^{\ast}}_{n,t}))\ \Big | \ \mathcal{F}_t \Big] \\
    &\overset{\text{\ref{cond: uniform integrability condition for dual}}}{=} 
    \lim_{n\rightarrow \infty} \mathbbm{E}\Big[G(\tau^{\ubar{\lambda}^{\ast}}_{n,t},V_{\ubar{\lambda}^{\ast}}^{v_0, \pi}(\tau^{\ubar{\lambda}^{\ast}}_{n,t}),z(\tau^{\ubar{\lambda}^{\ast}}_{n,t}))\ \Big | \ \mathcal{F}_t \Big] \\
    &\hspace{1.5 mm} \leq  \lim_{n\rightarrow \infty}G\big(t,V_{\ubar{\lambda}^{\ast}}^{v_0, \pi}(t), z(t)\big) \\
   &\hspace{1.5 mm} = G\big(t,V_{\ubar{\lambda}^{\ast}}^{v_0, \pi}(t), z(t)\big).
\end{align*}
Conditioning on  $V_{\ubar{\lambda}^{\ast}}^{v_0, \pi}(t)=v$, $z(t)=z$ leads to (\ref{eq: dual optimality la}).\\
Letting $(\lambda, \ubar{\pi}^{\ast})\in \mathcal{D}\times \ubar{\Lambda}$ satisfy Condition \ref{cond: uniform integrability condition for dual} and following the analogous steps as before, we can prove equation (\ref{eq: dual optimality pi ast}). \\
As all previous inequalities become equalities if we instead consider $(\ubar{\lambda}^{\ast}, \ubar{\pi}^{\ast})\in \ubar{\mathcal{D}}\times \ubar{\Lambda}$ satisfying Condition \ref{cond: uniform integrability condition for dual}, equation (\ref{eq: dual equality la and pi}) follows immediately.
\end{proof}
\vspace{0.5cm}
\begin{proof}[Proof of Theorem \ref{thm: solution to dual HJB PDE given condition EAS}]
	We first determine the derivatives of $G=G(t,v,z)$ in terms of $A=A(T-t)$ and $B=B(T-t)$ as
	\begin{equation}
		\begin{aligned}\label{eq: derivatives exponential affine separability}
			G_t &= -\left(A_{\tau} + B_{\tau}'z\right)G, \quad G_v = \frac{b}{v}G, \quad G_{vv} = \frac{b(b-1)}{v^2}G \\
			\nabla_zG &= GB, \quad \nabla^2_z G = GBB', \quad \nabla_z\left(G_v\right)= \frac{b}{v}G B.
		\end{aligned}
	\end{equation}
	Plugging the derivatives into the dual HJBI PDE (\ref{eq: dual HJB PDE general}), factoring $\frac{bG}{2v(1-b)}>0$ out of the minimization, dividing by $G\neq 0$  and plugging in the optimizer $\hat{\lambda}^{\ast} = \hat{\lambda}^{\ast}(t,z,B)$ yields 
	\begin{align*}
		(\ast) :&=G_t + v r G_v + \left(\mu^z\right)'\left(\nabla_z G\right) + \frac{1}{2}\text{Trace}\left[\Sigma^z\left( \Sigma^z\right) '\nabla^2_zG\right] \nonumber \\
		& \quad + v \inf_{\lambda \in \R^d}\Big \{ \delta_K (\lambda) G_v- \frac{1}{2}\frac{1}{vG_{vv}}\Vert \Sigma^{-1}\left[\mu + \lambda - r\1\right]G_v + \left(\Sigma^z\rho\right)'\nabla_z\left(G_v\right)\Vert^2 \Big \} \\
		&= -\left(A_{\tau} - B_{\tau}'z\right)G + brG + (\mu^z)'BG+ \frac{1}{2}\text{Trace}\left[\Sigma^z\left( \Sigma^z\right) 'BB'\right]G  \\
		& \quad + v \inf_{\lambda \in \R^d}\Big \{ \delta_K (\lambda) \frac{b}{v}G- \frac{1}{2}\frac{v}{b(b-1)G} \left \Vert \Sigma^{-1}\left[\mu + \lambda - r\1\right]\frac{b}{v}G + \left(\Sigma^z\rho\right)'\frac{b}{v}BG \right \Vert^2 \Big \} \\
		&= -\left(A_{\tau} - B_{\tau}'z\right)G + brG + (\mu^z)'BG+ \frac{1}{2}\text{Trace}\left[\Sigma^z\left( \Sigma^z\right) 'BB'\right]G  \\
		& \quad + \frac{1}{2}\frac{bG}{1-b} \inf_{\lambda \in \R^d}\Big \{2(1-b) \delta_K (\lambda) +  \left \Vert \Sigma^{-1}\left[\mu + \lambda - r\1\right] + \left(\Sigma^z\rho\right)'B \right \Vert^2 \Big \} \\
		&= -A_{\tau} - B_{\tau}'z + br + (\mu^z)'B+ \frac{1}{2}\text{Trace}\left[\Sigma^z\left( \Sigma^z\right) 'BB'\right]  \\
		& \quad + \frac{1}{2}\frac{b}{1-b} \inf_{\lambda \in \R^d}\Big \{ 2(1-b)\delta_K (\lambda) +  \left \Vert \Sigma^{-1}\left[\mu + \lambda - r\1\right] + \left(\Sigma^z\rho\right)'B \right \Vert^2 \Big \}\\
		&= -A_{\tau} - B_{\tau}'z + br + (\mu^z)'B+ \frac{1}{2}\text{Trace}\left[\Sigma^z\left( \Sigma^z\right) 'BB'\right]  \\
		& \quad + \frac{1}{2}\frac{b}{1-b} \inf_{\lambda \in \R^d}\Big \{ 2(1-b)\delta_K (\lambda) + 2\lambda'\left(\Sigma\Sigma'\right)^{-1}\left(\mu  - r\1   + \left(\Sigma^z\rho\Sigma'\right)'B\right) + \left \Vert \Sigma^{-1} \lambda \right \Vert^2 \Big \}\\
		& \quad + \frac{1}{2}\frac{b}{1-b}\left \Vert \Sigma^{-1}\left[\mu  - r\1\right] + \left(\Sigma^z\rho\right)'B \right \Vert^2 \\
		&= -A_{\tau} - B_{\tau}'z + br + (\mu^z)'B+ \frac{1}{2}\text{Trace}\left[\Sigma^z\left( \Sigma^z\right) 'BB'\right]  \\
		& \quad + \frac{1}{2}\frac{b}{1-b} \left( 2(1-b)\delta_K\big(\hat{\lambda}^{\ast}\big) + 2\big(\hat{\lambda}^{\ast}\big)'\left(\Sigma\Sigma'\right)^{-1}\left(\mu  - r\1   + \left(\Sigma^z\rho\Sigma'\right)'B\right) + \left \Vert \Sigma^{-1} \hat{\lambda}^{\ast} \right \Vert^2 \right)\\
		& \quad + \frac{1}{2}\frac{b}{1-b}\left \Vert \Sigma^{-1}\left[\mu  - r\1\right] + \left(\Sigma^z\rho\right)'B \right \Vert^2 \\
		&= -A_{\tau} - B_{\tau}'z + b\left(r+ \delta_K\big(\hat{\lambda}^{\ast}\big)\right)+ (\mu^z)'B+ \frac{1}{2}\text{Trace}\left[\Sigma^z\left( \Sigma^z\right) 'BB'\right]  \\
		&\quad + \frac{1}{2}\frac{b}{1-b} \left( \left \Vert \Sigma^{-1}\big(\mu + \hat{\lambda}^{\ast}-r\1\big) \right \Vert^2 + 2B'\Sigma^z\rho \Sigma^{-1}\left(\mu+\hat{\lambda}^{\ast}-r\1\right)+B'\Sigma^z \rho\left(\Sigma^z \rho\right)'B\right)
	\end{align*}
	Condition \ref{cond: Exp Separability optimal lambda*} allows us to replace the market coefficients by affine functions in $z$. Thus, we obtain
	\begin{align*}
		(\ast)&= -A_{\tau} - B_{\tau}'z + b\left(p_0 + p_1'z\right)+ \left(k_0 + k_1z\right)'B+ \frac{1}{2}\text{Trace}\left[\left(h_0 + h_1[z]\right)BB'\right]  \\
		&\quad + \frac{1}{2}\frac{b}{1-b} \Big( q_0 + q_1'z + 2B'\left(g_0 + g_1z\right)+  B'\left(l_0+h_0 +l_1[z]+h_1[z]\right)B\Big) \\
		&= - A_{\tau} +bp_0 +k_0'B + \frac{1}{2}\text{Trace}\left[h_0BB'\right] + \frac{1}{2}\frac{b}{1-b} \Big( q_0 +2g_0'B+B'\left(l_0+h_0\right)B \Big)\\
		&\quad -B_{\tau}'z +bp_1'z+B'k_1z+\frac{1}{2}\text{Trace}\left[h_1[z]BB'\right] +\frac{1}{2}\frac{b}{1-b} \Big(q_1'z+2B'g_1z+B'\left(l_1[z]+h_1[z]\right)B\Big)
	\end{align*}
	Making use of the commutive property of the trace of a matrix as well as the matrix representation of $h_1[\cdot]$ and  $l_1[\cdot]$, we have
	\begin{align*}
		\text{Trace}\left[h_0BB'\right] &= \text{Trace}\big[\underbrace{B'h_0B}_{\in \R}\big] = B'h_0B\\
		\text{Trace}\left[h_1[z]BB'\right] &= \text{Trace}\big[\underbrace{B'h_1[z]B}_{\in \R}\big] = B'h_1[z]B = \left(B'h_1[\cdot]B\right)'z\\
		B'\left(l_1[z]+h_1[z]\right)B&= \left(B'\left(l_1[\cdot]+h_1[\cdot]\right)B\right)'z.
	\end{align*}
	Thus, as $A$ and $B$ are solutions to the ODEs (\ref{eq: constr ODE for A}) and (\ref{eq: constr ODE for B})
	\begin{align*}
		(\ast)&=- A_{\tau} +bp_0 +k_0'B + \frac{1}{2}B'h_0B+ \frac{1}{2}\frac{b}{1-b} \Big( q_0 +2g_0'B+B'\left(l_0+h_0\right)B \Big) \\
		&\quad + \left(-B_{\tau} + bp_1 + k_1'B + \frac{1}{2}B'h_1[\cdot]B + \frac{1}{2}\frac{b}{1-b} \Big(q_1 +2g_1'B+B'\left(l_1[\cdot]+h_1[\cdot]\right)B\Big)\right)'z \\
		&= 0.
	\end{align*}
	Hence, $G$ is a solution to the dual HJBI PDE (\ref{eq: dual HJB PDE general}) and thereby, according to Lemma \ref{lem: dual HJB PDE}, also a solution to the primal HJB PDE (\ref{eq: constr. HJB PDE Factor Model}).
\end{proof}
\vspace{0.5cm}
\begin{proof}[Proof of Theorem \ref{thm: verification theorem primal OP}]
	We first derive the explicit expression (\ref{eq: optimal constr. pi (general)}) for $\ubar{\pi}^{\ast}$ in terms of $\ubar{\lambda}^{\ast}$ and $B$. Following the arguments in the proof of Lemma \ref{lem: dual HJB PDE}, we realize that the candidate optimal portfolio $\ubar{\pi}^{\ast}$ is given as the minimizing argument $\pi_{\la},$ i.e.
	\begin{align}
		\ubar{\pi}^{\ast} = \pi_{\la} =  -\frac{1}{vG_{vv}}\left(\Sigma\Sigma'\right)^{-1}\left[G_v(\mu + \ubar{\lambda}^{\ast} - r \1) + \Sigma \rho'\left(\Sigma^z\right)'\nabla_z\left(G_v\right) \right]. \label{eq: optimal pi before plugging in seperability}
	\end{align}
	On the other hand, since Condition \ref{cond: Exp Separability optimal lambda*} is satisfied, the exponentially affine structure of $G$ from (\ref{eq: exp affine separable value function}) implies that the derivatives of $G$ are given by (\ref{eq: derivatives exponential affine separability}). Plugging these derivatives into (\ref{eq: optimal pi before plugging in seperability}) yields the candidate optimal portfolio
	\begin{align*}
		\ubar{\pi}^{\ast} &= -\left(\frac{b(b-1)}{v}G\right)^{-1}\left(\Sigma\Sigma'\right)^{-1}\left[\frac{b}{v}G(\mu + \ubar{\lambda}^{\ast} - r \1) + \Sigma \rho'\left(\Sigma^z\right)' \frac{b}{v}G B  \right] \\
		&= \frac{1}{1-b}\left(\Sigma\Sigma'\right)^{-1}\left[\mu + \ubar{\lambda}^{\ast} - r \1 + \left(\Sigma^z\rho\Sigma'\right)'B\right].
	\end{align*}
	We continue by proving  (\ref{eq: G attains value function for optimal strategy}). As $G$ is a solution to the (dual) HJB PDE (\ref{eq: dual HJB PDE general}) and  $\ubar{\pi}^{\ast},$ $\ubar{\lambda}^{\ast}$ attain the optimum in (\ref{eq: dual HJB PDE general}), we know for every $t\in [0,T]$
	\begin{align}\label{eq: SDE wealth process in value function}
		dG(t,V^{v_0,\ubar{\pi}^{\ast}}(t),z(t)) & = V^{v_0,\ubar{\pi}^{\ast}}(t)G_v\left(t,V^{v_0,\ubar{\pi}^{\ast}}(t),z(t)\right) \left(\ubar{\pi}^{\ast}(t,V^{v_0,\ubar{\pi}^{\ast}}(t),z(t))\right)'\Sigma dW(t) \nonumber \\
		&  \qquad + \left(\nabla_z\left(G\right)(t,V^{v_0,\ubar{\pi}^{\ast}}(t),z(t))\right)'\Sigma^zdW^z(t). 
	\end{align}
	In particular, noting that $V^{v_0,\ubar{\pi}^{\ast}} = V_0^{v_0,\ubar{\pi}^{\ast}}$ and considering the stopping times $\tau^{0}_{n,t}$ for $n\in \N$ from Condition $(\text{UI}_0)$ yields
	\begin{align*}
	 \mathbbm{E}\Big[U(V^{v_0,\ubar{\pi}^{\ast}}(T))\ \Big | \ \mathcal{F}_t \Big] & \hspace{1.7 mm} =  \mathbbm{E}\Big[G\left(T,V^{v_0,\ubar{\pi}^{\ast}}(T),z(T)\right)\ \Big |  \mathcal{F}_t \Big] \\
		& \hspace{1.7 mm} = \mathbbm{E}\Big[\lim_{n\rightarrow \infty}G\left(\tau^{0}_{n,t},V^{v_0,\ubar{\pi}^{\ast}}(\tau^{0}_{n,t}),z(\tau^{0}_{n,t})\right) \ \Big |  \mathcal{F}_t \Big] \\
		&\overset{(\text{UI}_0)}{=} 
		\lim_{n\rightarrow \infty} \mathbbm{E}\Big[G\left(\tau^{0}_{n,t},V^{v_0,\ubar{\pi}^{\ast}}(\tau^{0}_{n,t}),z(\tau^{0}_{n,t})\right) \ \Big |  \mathcal{F}_t \Big] \\
		&\hspace{1.7 mm} = \lim_{n\rightarrow \infty} \Big(G\left(t,V^{v_0,\ubar{\pi}^{\ast}}(t),z(t)\right) \\
		& \quad \hspace{1.7 mm} + \underbrace{\mathbbm{E}\Big[\int_t^{\tau^{0}_{n,t}} V^{v_0,\ubar{\pi}^{\ast}}(u)G_v\left(u,V^{v_0,\ubar{\pi}^{\ast}}(u),z(u)\right) \left(\ubar{\pi}^{\ast}(u,V^{v_0,\ubar{\pi}^{\ast}}(u),z(u))\right)'\Sigma(u,z(u)) dW(u)\Big |  \mathcal{F}_t \Big]}_{= 0, \ \text{by choice of }\tau^{0}_{n,t}}\\
		& \quad \hspace{1.7 mm}+ \underbrace{\mathbbm{E}\Big[\int_t^{\tau^{0}_{n,t}}\left(\nabla_z\left(G\right)(u,V^{v_0,\ubar{\pi}^{\ast}}(u),z(u))\right)'\Sigma^z(u,z(u)) dW^z(u)\Big |  \mathcal{F}_t \Big]}_{= 0, \ \text{by choice of }\tau^{0}_{n,t}} \Big) \\
		&\hspace{1.7 mm}= G\left(t,V^{v_0,\ubar{\pi}^{\ast}}(t),z(t)\right).
	\end{align*}
	Conditioning on $V^{v_0,\ubar{\pi}^{\ast}}(t)=v$, $z(t)=z$ leads to equation (\ref{eq: G attains value function for optimal strategy}). \\
	
	It remains to show that $\ubar{\pi}^{\ast}$ dominates all other $\pi \in \Lambda_K$, i.e. we need to verify inequality (\ref{eq: pi star dominates all other constrained pi}). The proof idea is similar to that of Theorem 4.3 in \cite{Escobar2021}, but adapted for our constrained setting. Let $t\in [0,T]$ and $\pi \in \Lambda_K(t)$ be arbitrary but fixed. Define the process $L = \left(L(u)\right)_{u\in [t,T]}$ as
	\begin{align*}
		L(u) = b \frac{V^{v_0,\pi}(u)}{V^{v_0,\ubar{\pi}^{\ast}}(u)}G(u, V^{v_0,\ubar{\pi}^{\ast}}(u), z(u)).
	\end{align*} 
	We proceed by first deriving the SDE of $L$ and then showing that $L$ is a supermartingale. By It\^{o}'s product rule we have with
	$\pi^{\ast}(u) = \ubar{\pi}^{\ast}(u,V^{v_0,\ubar{\pi}^{\ast}}(u),z(u))$
	\begin{align*}
		d&\left(\frac{V^{v_0,\pi}(u)}{V^{v_0,\ubar{\pi}^{\ast}}(u)}\right) = \frac{1}{V^{v_0,\ubar{\pi}^{\ast}}(u)}dV^{v_0,\pi}(u) + V^{v_0,\pi}(u) d\left(\frac{1}{V^{v_0,\ubar{\pi}^{\ast}}(u)}\right) + d\langle V^{v_0,\pi}(u) ,\frac{1}{V^{v_0,\ubar{\pi}^{\ast}}(u)}\rangle_u \\
		&= \left(\frac{V^{v_0,\pi}(u)}{V^{v_0,\ubar{\pi}^{\ast}}(u)}\right) \Bigg(\left[r + (\mu-r\1)'\pi(u)\right]du + \pi(u)'\Sigma dW(u) \\
		&\hspace{30mm}-\left[r + (\mu-r\1)'\pi^{\ast}(u)-\Vert \Sigma'\pi^{\ast}(u)\Vert^2  \right]du - \pi^{\ast}(u)'\Sigma dW(u) - \pi^{\ast}(u)'\Sigma\Sigma' \pi(u)du	\Bigg) \\
		&= \left(\frac{V^{v_0,\pi}(u)}{V^{v_0,\ubar{\pi}^{\ast}}(u)}\right) \Bigg(\Big[(\mu-r\1)'(\pi(u)-\pi^{\ast}(u)) \underbrace{- \pi^{\ast}(u)'\Sigma\Sigma' \pi(u) + \Vert \Sigma'\pi^{\ast}(u)\Vert^2}_{= - \pi^{\ast}(u)'\Sigma\Sigma'(\pi(u)-\pi^{\ast}(u))} \Big]du + (\pi(u)-\pi^{\ast}(u))'\Sigma dW(u)   \Bigg) \\
		&= \left(\frac{V^{v_0,\pi}(u)}{V^{v_0,\ubar{\pi}^{\ast}}(u)}\right) (\pi(u)-\pi^{\ast}(u))' \Bigg(\left[\mu-r\1 - \Sigma \Sigma'\pi^{\ast}(u)\right]du + \Sigma dW(u)\Bigg).
	\end{align*}
	Moreover, due to (\ref{eq: derivatives exponential affine separability}) and (\ref{eq: SDE wealth process in value function}),
	\begin{align*}
		dG(u,V^{v_0,\ubar{\pi}^{\ast}}(u),z(u)) \overset{(\ref{eq: SDE wealth process in value function})}{=} G(u,V^{v_0,\ubar{\pi}^{\ast}}(u),z(u))\Big( b\pi^{\ast}(u)'\Sigma dW(u) + B(T-u)'\Sigma^zdW^z(u)\Big)
	\end{align*}
	and 
	\begin{align*}
	 d\langle \frac{V^{v_0,\pi}}{V^{v_0,\ubar{\pi}^{\ast}}} , G(\cdot,V^{v_0,\ubar{\pi}^{\ast}},z)\rangle_u  = \frac{V^{v_0,\pi}(u)}{V^{v_0,\ubar{\pi}^{\ast}}(u)}G(u,V^{v_0,\ubar{\pi}^{\ast}}(u),z(u))(\pi(u)-\pi^{\ast}(u))'\Sigma \left(b\Sigma'\pi^{\ast}(u) + \left(\Sigma^z \rho\right)'B(T-u)\right)du.
	\end{align*}
	In total, this yields with $\la(u) = \ubar{\lambda}^{\ast}(u,V^{v_0,\ubar{\pi}^{\ast}}(u), z(u))$
	\begin{align*}
		dL(u) &= bd\left(\frac{V^{v_0,\pi}(u)}{V^{v_0,\ubar{\pi}^{\ast}}(u)}G(u,V^{v_0,\ubar{\pi}^{\ast}}(u),z(u))\right) \\
		&= b\Big[\frac{V^{v_0,\pi}(u)}{V^{v_0,\ubar{\pi}^{\ast}}(u)}dG(u,V^{v_0,\ubar{\pi}^{\ast}}(u),z(u))   G(u,V^{v_0,\ubar{\pi}^{\ast}}(u),z(u)) d\left(\frac{V^{v_0,\pi}(u)}{V^{v_0,\ubar{\pi}^{\ast}}(u)}\right) + d\langle \frac{V^{v_0,\pi}}{V^{v_0,\ubar{\pi}^{\ast}}} , G(\cdot,V^{v_0,\ubar{\pi}^{\ast}},z)\rangle_u\Big] \\
		&= L(u) \Big[ (\pi(u)-\pi^{\ast}(u))'\left[\mu-r\1 - \Sigma \Sigma'\pi^{\ast}(u)\right]du + \Sigma dW(u)  +  b\pi^{\ast}(u)'\Sigma dW(u) + B(T-u)'\Sigma^zdW^z(u) \\
		&\qquad  \qquad + (\pi(u)-\pi^{\ast}(u))'\Sigma \left(b\Sigma'\pi^{\ast}(u) + \left(\Sigma^z \rho\right)'B(T-u)\right)du \Big] \\
		&= L(u)(\pi(u)-\pi^{\ast}(u))'\underbrace{\Big[\mu - r\1 + \left(\Sigma^z\rho\Sigma'\right)'B(T-u)- (1-b)\Sigma\Sigma'\pi^{\ast}(u)\Big]}_{\overset{(\ref{eq: optimal constr. pi (general)})}{=}-\la(u)}du \\
		& \quad + L(u)\Big((\pi(u)-(1-b)\pi^{\ast}(u))'\Sigma dW(u) + B(T-u)'\Sigma^zdW^z(u)\Big). \\
		&= L(u) \Big( (\pi^{\ast}(u)-\pi(u))'\la(u)du + (\pi(u)-(1-b)\pi^{\ast}(u))'\Sigma dW(u) + B(T-u)'\Sigma^zdW^z(u)\Big)
	\end{align*}
	However, Remark \ref{rem: slackness condition} implies
	\begin{align*}
		&\delta_K(\ubar{\lambda}^{\ast}(u,v,z))+\ubar{\pi}^{\ast}(u,v,z)'\ubar{\lambda}^{\ast}(u,v,z) = 0 \quad \forall (u,v,z) \in [0,T]\times (0,\infty)\times \R^m \\
		\Rightarrow \ & \delta_K(\la(u)) + \la(u)'\pi^{\ast}(u) = 0 \quad \mathcal{L}[t,T]\otimes Q-\text{a.e.}. 
	\end{align*}
	Moreover, as $\pi\in \Lambda_K(t)$, $\pi(u)\in K$ holds $\mathcal{L}[t,T]\otimes Q-\text{a.e.}$ and thus
	\begin{align*}
		\delta_K(\la(u)) + \la(u)'\pi(u) = -\underbrace{\inf_{x \in K}\left(\la(u)'x\right)}_{\leq \la(u)'\pi(u)}+ \la(u)'\pi(u) \geq 0 \quad \mathcal{L}[t,T]\otimes Q-\text{a.e.}.
	\end{align*}
	We finally obtain
	\begin{align*}
		dL(u) &= L(u)\Big(\underbrace{ -\left(\delta_K(\la(u)) + \la(u)'\pi(u)\right)}_{= (\ast) }du + (\pi(u)-(1-b)\pi^{\ast}(u))'\Sigma dW(u) + B(T-u)'\Sigma^zdW^z(u)\Big)
	\end{align*}
	and for any $s \in [t,T]$ , $L(s)$ can therefore be expressed as
	\begin{align*}
		L(s) = L(t)\exp\left(-\int_t^s \delta_K(\la(u)) + \la(u)'\pi(u) du\right)M(s),
	\end{align*}
	for a supermartingale $M = \left(M(u)\right)_{u\in [t,T]}$ which satisfies the SDE
	$$
	dM(u) = M(u)\left((\pi(u)-(1-b)\pi^{\ast}(u))'\Sigma dW(u) + B(T-u)'\Sigma^zdW^z(u)\right), \quad M(t)=1.
	$$
	Hence, as $(\ast)\leq 0$, $L$ is a supermartingale, too. \\
	
	To conclude the proof, recall that $U$ is concave and therefore $U(y)\leq U(x) + U'(x)(y-x)$ for all $x,y \in (0,\infty)$. This leads to 
	\begin{align*}
		\E\left[U(V^{v_0, \pi}(T))\  \big | \ \mathcal{F}_t  \right]  & \leq \E \left[U(V^{v_0, \ubar{\pi}^{\ast}}(T))\  \big | \ \mathcal{F}_t  \right] +  \E \left[U'(V^{v_0, \ubar{\pi}^{\ast}}(T))\left(V^{v_0, \pi}(T) - V^{v_0, \ubar{\pi}^{\ast}}(T) \right) \ \big | \ \mathcal{F}_t  \right] \\
		&= \E \left[U(V^{v_0, \ubar{\pi}^{\ast}}(T))\  \big | \ \mathcal{F}_t  \right] +  \underbrace{\E \left[L(T) \  \big | \ \mathcal{F}_t  \right]}_{\leq L(t)} -  \underbrace{\E \left[ bU(V^{v_0, \ubar{\pi}^{\ast}}(T)) \ \big | \ \mathcal{F}_t \right]}_{= bG(t,V^{v_0, \ubar{\pi}^{\ast}}(t),z(t))} \\
		& \leq \E \left[U(V^{v_0, \ubar{\pi}^{\ast}}(T))\  \big | \ \mathcal{F}_t  \right] + L(t)-bG(t,V^{v_0, \ubar{\pi}^{\ast}}(t),z(t)) \\
		&= \E \left[U(V^{v_0, \ubar{\pi}^{\ast}}(T))\  \big | \ \mathcal{F}_t  \right] + b\left(G(t,V^{v_0, \pi}(t),z(t))\frac{V^{v_0, \pi}(t)}{V^{v_0, \ubar{\pi}^{\ast}}(t)}-G(t,V^{v_0, \ubar{\pi}^{\ast}}(t),z(t))\right).
	\end{align*}
	Finally, conditioning on $V^{v_0, \pi}(t) = V^{v_0, \ubar{\pi}^{\ast}}(t) = v$, $z(t)=z$ yields (\ref{eq: pi star dominates all other constrained pi}).
\end{proof}
\vspace{0.5cm}
\begin{proof}[Proof of Lemma \ref{lem: solution to dual HJBI PDE B.S. model}]
	In $\mathcal{M}_{BS}$, the optimizer $\hat{\lambda}^{\ast}$ from (\ref{eq: minimization w.r.t. lambda}) is given as
	\begin{align*}
		\hat{\lambda}^{\ast}(t,z,B) &= \underset{\lambda \in \R^d}{\text{argmin}}\Bigg \{ 2(1-b)\delta_K(\lambda) + \Big \Vert \underbrace{\Sigma(t,z)^{-1}}_{\sigma(t)^{-1}}\big(\underbrace{\mu(t,z)-r(t,z)\1}_{=\eta(t)} + \lambda \big) + \underbrace{\left(\Sigma^z(t,z)\rho(t,z)\right)'}_{= 0}B \Big\Vert^2 \Bigg \} \\
		&= \underset{\lambda \in \R^d}{\text{argmin}}\left \{ 2(1-b)\delta_K(\lambda) + \left \Vert \sigma(t)^{-1}\left( \eta(t) + \lambda \right) \right \Vert^2\right \} \\
		&= \la(t),
	\end{align*}
	i.e. $\hat{\lambda}^{\ast}(t,v,z) = \la(t)$ is a deterministic function independent of $z$ and $B$. Moreover, by defining
	\begin{align*}
		p_0(t,B) &= r(t) + \delta_K\left(\la(t)\right) \\
		q_0(t,B) &= \left \Vert \sigma(t)^{-1}\left(\eta(t)+\la(t)\right) \right \Vert^2 
	\end{align*}
	and setting the remaining coefficients $k_0,$  $k_1,$ $h_0,$ $h_1,$ $l_0,$ $l_1,$ $p_1,$ $q_1,$ $g_0$ and $g_1$ to zero, Condition \ref{cond: Exp Separability optimal lambda*} is satisfied. The corresponding ODEs (\ref{eq: constr ODE for A}) and (\ref{eq: constr ODE for B}) simplify to
	\begin{align*}
		A_{\tau}(\tau) &= b\left(r(T-\tau) + \delta_K\left(\la(T-\tau)\right)\right)+\frac{1}{2}\frac{b}{1-b} \left \Vert \sigma(t)^{-1}\left( \eta(T-\tau) + \la(T-\tau)\right) \right \Vert^2  \\
		&= br(T-\tau)+\frac{1}{2}\frac{b}{1-b} \inf_{\lambda \in \R^d}\Big\{ 2(1-b)\delta_K(\lambda) + \left \Vert \sigma(T-\tau)^{-1}\left(\eta(T-\tau)+\lambda \right)\right \Vert^2 \Big \}\\
		B_{\tau}(\tau) &=0.
	\end{align*}
	Hence, $B \equiv 0$ and $A$ can be obtained through simple integration. In particular, the candidate optimal portfolio (\ref{eq: optimal constr. pi (general)}) is given through
	\begin{align*}
		\ubar{\pi}^{\ast}(t,v,z)= \frac{1}{1-b}\left(\sigma(t)\sigma(t)'\right)^{-1}\left(\eta(t) + \la(t)\right).
	\end{align*}
\end{proof}
\vspace{0.5cm}
\begin{proof}[Proof of Corollary \ref{cor: uniform integrability B.S. model}]
	We verify Condition \ref{cond: uniform integrability condition for dual} by showing the $L^q$ boundedness of $G\left(\tau^0_{n,t}, V^{v_0, \ubar{\pi}^{\ast}}_{0}(\tau^0_{n,t}),z(\tau^0_{n,t})\right)$ in $n\in \N$ for arbitrary $q>1.$ \\
	The market coefficients $\sigma(t),$ $\eta(t)$ and $r(t)$ are continuous and therefore uniformly bounded in $t\in [0,T].$ In particular, this has the consequence that $\la(t)$ is uniformly bounded in $t\in[0,T],$ since the quadratic term in (\ref{eq: minimization w.r.t. lambda}) dominates for large $\Vert \lambda \Vert.$ Hence, $A(T-t)$ and the candidate optimal portfolio $\pi^{\ast}(t):= \ubar{\pi}^{\ast}(t,v,z)$ are uniformly bounded in $t\in [0,T],$ too. 
	For arbitrary $q>1,$ we can thus find a constant $C_q > 0$ such that for all $t\in [0,T]$
	\begin{align}
		\Big | G&(t,V^{v_0, \ubar{\pi}^{\ast}}(t),z(t)) \Big |^q \overset{\text{Th. \ref{thm: solution to dual HJB PDE given condition EAS}}}{=}\left | \frac{1}{b}\exp\left(b \ln\left(V^{v_0,\ubar{\pi}^{\ast}}(t)\right)+A(T-t)+B(T-t)'z(t)\right)\right |^q\nonumber \\
		&= \frac{1}{|b|}\exp \Bigg(bq \int_0^t r(s) + \eta(s)'\pi^{\ast}(s)-\frac{1}{2}\Vert \sigma(s)'\pi^{\ast}(s)\Vert^2ds  + bq \int_0^t \pi^{\ast}(s)'\sigma(s)dW(s) + qA(T-t)\Bigg) \nonumber \\
		&=  \frac{1}{|b|}\exp \Bigg(bq \int_0^t r(s) + \eta(s)'\pi^{\ast}(s)-\frac{1-bq}{2}\Vert \sigma(s)'\pi^{\ast}(s)\Vert^2ds + q A(T-t)\nonumber \\
		& \hspace{2.5cm}- \frac{1}{2}\int_0^t b^2q^2\Vert \sigma(s)'\pi^{\ast}(s)\Vert^2ds + bq \int_0^t \pi^{\ast}(s)'\sigma(s)dW(s) \Bigg) \nonumber \\
		&\leq C_q \underbrace{\exp \left(- \frac{1}{2}\int_0^t b^2q^2\Vert \sigma(s)'\pi^{\ast}(s)\Vert^2ds + bq \int_0^t \pi^{\ast}(s)'\sigma(s)dW(s) \right)}_{=: M_t} \nonumber\\
		&= C_q M_t. \label{eq: bound by exp. supermartingale BS}
	\end{align}
	The process $M = \left(M_t\right)_{t\in [0,T]}$ is a non-negative local martingale and thus a supermartingale. Doob's optional sampling theorem (\enquote{O.S.}) implies
	\begin{align*}
		\sup_{n\in \N} \E \left[ \left |G\left(\tau^0_{n,t}, V^{v_0, \ubar{\pi}^{\ast}}_{0}(\tau^0_{n,t}),z(\tau^0_{n,t})\right) \right |^q\right] \overset{(\ref{eq: bound by exp. supermartingale BS})}{\leq}C_q \sup_{n\in \N} \E \left[ M_{\tau^0_{n,t}} \right] \overset{O.S.}{\leq} C_q M_0 = C_q < \infty.
	\end{align*}
	Hence, 
	$$
	\left(G\left(\tau^0_{n,t}, V^{v_0, \ubar{\pi}^{\ast}}_{0}(\tau^0_{n,t}),z(\tau^0_{n,t})\right)\right)_{n\in \N}
	$$
	is bounded in $L^q$ for any $q>1$ and $t\in [0,T]$ and is thus uniformly integrable for any $t\in [0,T]$ (see Theorem 4.5.9 in \cite{Bogachev2007} with $G(t) = t^q$). Hence, Condition \ref{cond: uniform integrability condition for dual} is satisfied and $\ubar{\pi}^{\ast}$ is optimal for $\mathbf{(P)}$ by virtue of Theorem \ref{thm: verification theorem primal OP}.
\end{proof}
\vspace{0.5cm}
\begin{proof}[Proof of Lemma \ref{lem: solution to dual HJBI PDE CIR model}]
	We may rewrite any $\lambda \in \R^d$ as 
	\begin{align*}
		\lambda = \begin{pmatrix}
			\lambda_1 \\
			\vdots \\
			\lambda_m
		\end{pmatrix},
		\quad \text{with} \quad \lambda_i \in \R^{d_i}, \quad i=1,...,m.
	\end{align*}
	Hence, we obtain for every $\lambda \in \R^d$
	\begin{align*}
		\delta_K(\lambda) = -\inf_{x\in K}\left(x'\lambda\right)= -\inf_{\substack{x_i \in K_i \\ i=1,...,m}}\left(\sum_{i=1}^m x_i'\lambda_i \right) = -\sum_{i=1}^m \inf_{x_i \in K_i} \left(x_i'\lambda_i\right) = \sum_{i=1}^m \delta_{K_i}(\lambda_i).
	\end{align*}
	Note that we may restrict the minimization (\ref{eq: minimization w.r.t. lambda}) to $z\in (0,\infty)^m$ because Feller's condition (\ref{eq: Fellers Condition in m-factor CIR market}) ensures that the $m$-dimensonal CIR process has positive components $\mathcal{L}[0,T]\otimes Q$-a.e.. Hence, for any $(t,z,B)\in [0,T]\times (0,\infty)^m\times \R^m$ we have
	\begin{align*}
		\left[\left(\Sigma(t,z)\Sigma(t,z)'\right)^{-1}\lambda\right]'\left[\mu(t,z)-r(t,z)\1\right] 	&= \begin{pmatrix}
			\frac{1}{z_1}\left(\Sigma_1\Sigma_1'\right)^{-1}\lambda_1 \\
			\vdots \\
			\frac{1}{z_m}\left(\Sigma_m\Sigma_m'\right)^{-1}\lambda_m 
		\end{pmatrix}'\begin{pmatrix}
			\eta_1 z_1 \\
			\vdots \\
			\eta_m z_m
		\end{pmatrix} \\
		&  = \sum_{i=1}^m \lambda_i \left(\Sigma_i\Sigma_i'\right)^{-1}\eta_i,
	\end{align*}
	\begin{align*}
		\left[\Sigma(t,z)^{-1}\lambda\right]'\rho'\Sigma^z(t,z)B &= \begin{pmatrix}
			\frac{1}{\sqrt{z_1}}\Sigma_1^{-1}\lambda_1 \\
			\vdots \\
			\frac{1}{\sqrt{z_m}}\Sigma_m^{-1}\lambda_m 
		\end{pmatrix}'\begin{pmatrix}
			\rho_1 & & 0 \\
			& \ddots & \\
			0 & & \rho_m
		\end{pmatrix}\begin{pmatrix}
			\sigma_1 \sqrt{z_1}B_1 \\
			\vdots \\
			\sigma_m \sqrt{z_m}B_m
		\end{pmatrix}  \\
		& =\begin{pmatrix}
			\frac{1}{\sqrt{z_1}}\Sigma_1^{-1}\lambda_1 \\
			\vdots \\
			\frac{1}{\sqrt{z_m}}\Sigma_m^{-1}\lambda_m 
		\end{pmatrix}'\begin{pmatrix}
			\sigma_1 \sqrt{z_1}B_1 \rho_1 \\
			\vdots \\
			\sigma_m \sqrt{z_1}B_m \rho_m
		\end{pmatrix} \\
		&= \sum_{i=1}^m \sigma_i B_i \left(\Sigma_i^{-1}\lambda_i\right)'\rho_i,
	\end{align*}
	and
	\begin{align*}
		\lambda'\left(\Sigma(t,z)\Sigma(t,z)'\right)^{-1}\lambda &= \sum_{i=1}^m \left \Vert \Sigma_i^{-1}\lambda_i \right \Vert^2 z_i.
	\end{align*}
	Hence, the minimizer of (\ref{eq: minimization w.r.t. lambda}) can be rewritten as
	\begin{align}
		\hat{\lambda}^{\ast}(t,z,B) &=  \underset{\lambda \in \R^d}{\text{argmin}}\Big \{ 2(1-b)\delta_K(\lambda) + 2\lambda'\left(\Sigma(t,z)\Sigma(t,z)'\right)^{-1}\left[\mu(t,z) - r(t,z) \1\right] \nonumber \\
		& \hspace{2cm} +  2\lambda'\left(\Sigma(t,z)\Sigma(t,z)'\right)^{-1}(\Sigma^z(t,z)\rho(t,z)\Sigma(t,z)')'B + \left \Vert \Sigma(t,z)^{-1} \lambda \right \Vert^2\Big \} \nonumber \\
		&=  \underset{\substack{\lambda = \left(\lambda_1,...,\lambda_m\right)' \\	\lambda_i \in \R^{d_i}}}{\text{argmin}}\Big \{ \sum_{i=1}^m 2(1-b)\delta_{K_i}(\lambda_i) + 2\left(\Sigma_i^{-1}\lambda_i\right)'\left(\Sigma_i^{-1}\eta_i + \sigma_i B_i \rho_i \right) + \left \Vert \Sigma_i^{-1}\lambda_i \right \Vert^2 z_i \Big \} \nonumber \\
		&=  \underset{\substack{\lambda = \left(\lambda_1,...,\lambda_m\right)' \\		\lambda_i \in \R^{d_i}}}{\text{argmin}}\Bigg \{ 
		\sum_{i=1}^m  z_i \Bigg[ 2(1-b)\delta_{K_i}\left(\frac{\lambda_i}{z_i}\right) + 2\left(\Sigma_i^{-1}\frac{\lambda_i}{z_i}\right)'\left(\Sigma_i^{-1}\eta_i + \sigma_i B_i \rho_i \right) + \left \Vert \Sigma_i^{-1}\frac{\lambda_i}{z_i} \right \Vert^2 \Bigg]\Bigg \} \nonumber \\
		&= \underset{\substack{\lambda = \left(\lambda_1,...,\lambda_m\right)' \\		\lambda_i \in \R^{d_i}}}{\text{argmin}}\Bigg \{ \sum_{i=1}^m  z_i \Bigg[ 2(1-b)\delta_{K_i}\left(\frac{\lambda_i}{z_i}\right) + \left \Vert \Sigma_i^{-1}\left(\eta_i + \frac{\lambda_i}{z_i}\right) + \sigma_i B_i \rho_i\right \Vert^2 \Bigg] \Bigg \} \label{eq: optimizer in CIR market}
	\end{align}
	Using the change of control $\hat{\lambda}_i = \frac{\lambda_i}{z_i},$ we see that $\hat{\lambda}^{\ast}(t,z,B) = \la(t,z,B)$ from the statement of the Lemma.
	Letting $e_i$ denote the $i$-th unit vector in $\R^m,$ we can express the market coefficients in Condition \ref{cond: Exp Separability optimal lambda*} as
	\begin{align*}
		&\mu^{z}(t,z) = \kappa\odot \left(\theta-z\right) = \underbrace{\kappa\odot \theta}_{=:k_0(t)}+\underbrace{\begin{pmatrix}
				-\kappa_1 & & 0 \\
				& \ddots & \\
				0 & & - \kappa_m 
		\end{pmatrix}}_{=: k_1(t)}z \\
		&\Sigma^z(t,z)\Sigma^z(t,z)= \begin{pmatrix}
			\sigma_1^2z_1 & & 0 \\
			& \ddots & \\
			0 & & \sigma_m^2 z_m 
		\end{pmatrix} = \underbrace{\begin{pmatrix}
				z'(\sigma_1^2 e_1) & & 0 \\
				& \ddots & \\
				0 & & z'(\sigma_m^2 e_m) \end{pmatrix}}_{=: h_1(t)[z]} \\
		&\Sigma^z(t,z)\rho(t,z)\left(\Sigma^z(t,z) \rho(t,z)\right)'-	\Sigma^z(t,z)\Sigma^z(t,z)' \\
		& \qquad = \begin{pmatrix}
			\sigma_1 \sqrt{z_1}\rho_1' & & 0 \\
			& \ddots & \\
			0 & & \sigma_m \sqrt{z_m}\rho_m'
		\end{pmatrix}\begin{pmatrix}
			\sigma_1 \sqrt{z_1}\rho_1 & & 0 \\
			& \ddots & \\
			0 & & \sigma_m \sqrt{z_m}\rho_m 
		\end{pmatrix} - \begin{pmatrix}
			\sigma_1^2z_1 & & 0 \\
			& \ddots & \\
			0 & & \sigma_m^2 z_m 
		\end{pmatrix} \\
		& \qquad = \begin{pmatrix}
			\sigma_1^2z_1 \left(\Vert \rho_1 \Vert^2-1\right) & & 0 \\
			& \ddots & \\
			0 & & \sigma_m^2 z_m  \left(\Vert \rho_m \Vert^2-1\right)
		\end{pmatrix} \\
		& \qquad = \underbrace{\begin{pmatrix}
				z'\left(\sigma_1^2\left(\Vert \rho_1 \Vert^2-1\right)e_1\right) & & 0 \\
				& \ddots & \\
				0 & & z'\left(\sigma_m^2\left(\Vert \rho_m \Vert^2-1\right)e_m\right)
		\end{pmatrix}}_{=:l_1(t)[z]} \\
		&r(t,z) + \delta_K(\hat{\lambda}^{\ast}(t,z,B)) = \underbrace{r}_{=:p_0(t,B)} + \sum_{i=1}^m \underbrace{\delta_{K_i}\left(\la_i(B_i)\right)}_{=: \left(p_1(t,B)\right)_i}z_i  \\
		&\left\Vert\Sigma^{-1}(t,z)\left(\mu(t,z)+\hat{\lambda}^{\ast}(t,z,B)-r(t,z)\1\right) \right \Vert^2 = \sum_{i=1}^m \underbrace{\left \Vert \Sigma_i^{-1}\left(\eta_i + \la_i(B_i)\right) \right\Vert^2}_{=: \left(q_1(t,B)\right)_{i}} z_i\\
		&\Sigma^z(t,z)\rho(t,z)\Sigma^{-1}(t,z)\left(\mu(t,z)+\hat{\lambda}^{\ast}(t,z,B)-r(t,z)\1\right) \\
		&\qquad = \begin{pmatrix}
			\sigma_1 \sqrt{z_1} & & 0 \\
			& \ddots & \\
			0 & & \sigma_m \sqrt{z_m}
		\end{pmatrix}\begin{pmatrix}
			\rho_1' & & 0 \\
			& \ddots & \\
			0 & & \rho_m'
		\end{pmatrix}\begin{pmatrix}
			\left(\Sigma_1 \sqrt{z_1} \right)^{-1} & & 0 \\
			& \ddots & \\
			0 & & \left(\Sigma_m \sqrt{z_m} \right)^{-1}
		\end{pmatrix} \begin{pmatrix}
			\left(\eta_1 + \la_1(B_1)\right)z_1 \\
			\vdots \\
			\left(\eta_m + \la_m(B_m)\right)z_m 
		\end{pmatrix}\\
		&\qquad = \underbrace{\begin{pmatrix}
				\sigma_1 \rho_1'\Sigma_1^{-1}\left(\eta_1 +  \la_1(B_1)\right) &  0\\
				&\ddots \hfill \\
				0 &  \sigma_m \rho_m'\Sigma_m^{-1}\left(\eta_m +  \la_m(B_m)\right)
		\end{pmatrix}}_{=: g_1(t,B)}z.
	\end{align*}
	By setting the remaining coefficients $h_0,$ $l_0,$ $q_0,$ and $g_0$ as zero, Condition \ref{cond: Exp Separability optimal lambda*} is satisfied. Moreover, the ODEs (\ref{eq: constr ODE for A}) and (\ref{eq: constr ODE for B}) simplify to
	\begin{align*}
		A_{\tau}\left(\tau\right) &= br + \left(\kappa \odot \theta\right)'B(\tau) \\
		\left(B_{\tau}\left(\tau\right)\right)_i&= b\left(p_1(T-\tau,B(\tau))\right)_i + \left(k_1(T-\tau)B(\tau)\right)_i + \frac{1}{2}\left(B(\tau)'h_1[\cdot]B(\tau)\right)_i \\
		& \quad + \frac{1}{2}\frac{b}{1-b}\big[q_1(T-\tau, B(\tau)) + 2g_1(T-\tau, B(\tau))B(\tau) + B(\tau)'\left(l_1[\cdot]+ h_1[\cdot]\right)B(\tau)\big]_i \\
		&= b\delta_{K_i}\left(\la_i(B_i(\tau))\right)-\kappa_iB_i(\tau) + \frac{1}{2}\sigma_i^2\left(B_i(\tau)\right)^2 \\
		& \quad + \frac{1}{2}\frac{b}{1-b}\left[ \left \Vert \Sigma_i^{-1}\left(\eta_i + \la_i(B_i(\tau))\right) \right \Vert^2 + 2\sigma_iB_i(\tau)\rho_i'\Sigma_i^{-1}\left(\eta_i + \la_i(B_i(\tau))\right) + \sigma_i^2\Vert \rho_i \Vert^2 B_i(\tau)^2 \right] \\
		&= -\kappa_iB_i(\tau) + \frac{1}{2}\sigma_i^2\left(B_i(\tau)\right)^2 \\
		&\quad + \frac{1}{2}\frac{b}{1-b}\left[2(1-b)\delta_{K_i}\left(\la_i(B_i(\tau))\right) + \left \Vert \Sigma_i^{-1}\left(\eta_i + \la_i(B_i(\tau))\right) + \sigma_iB_i(\tau)\rho_i \right \Vert^2 \right] \\
		&= -\kappa_iB_i(\tau) + \frac{1}{2}\sigma_i^2\left(B_i(\tau)\right)^2 \\
		&\quad + \frac{1}{2}\frac{b}{1-b}\inf_{\lambda_i \in \R^{d_i}}\left \{ 2(1-b)\delta_{K_i}(\lambda_i) + \left \Vert \Sigma_i^{-1}\left(\eta_i + \lambda_i\right) + \sigma_iB_i(\tau)\rho_i \right \Vert^2  \right \}.
	\end{align*}
	Hence, according to Theorem \ref{thm: solution to dual HJB PDE given condition EAS},
	$$
	G(t,v,z) = \frac{1}{b}v^b\exp(A(T-t) + B(T-t)'z)
	$$
	is a solution to the dual HJBI PDE (\ref{eq: dual HJB PDE general}) and the corresponding candidate optimal portfolio (\ref{eq: optimal constr. pi (general)}) is given as in the statement of the lemma.\\
	Furthermore,
		\begin{align*}
			\big(&\Sigma(t,z)\Sigma(t,z)' \big)^{-1} \left(\Sigma^z(t,z)\rho(t,z)\Sigma(t,z)'\right)'B(T-t) \\
			&= \left(\Sigma(t,z)'\right)^{-1}\rho(t,z)'\Sigma^z(t,z)'B(T-t) \\
			&= \begin{pmatrix}
				\left(\Sigma_1'\sqrt{z_1}\right)^{-1} & & 0 \\
				& \ddots & \\
				0 & & 	\left(\Sigma_m'\sqrt{z_m}\right)^{-1}
			\end{pmatrix}\begin{pmatrix}
				\rho_1 & & 0 \\
				& \ddots & \\
				0 & & \rho_m
			\end{pmatrix}\begin{pmatrix}
				\sigma_1 \sqrt{z_1}B_1(T-t) \\
				\vdots \\
				\sigma_m \sqrt{z_m}B_m(T-t)
			\end{pmatrix} \\
			&= \begin{pmatrix}
				\sigma_1 B_1(T-t)\left(\Sigma_1^{-1}\right)'\rho_1 \\
				\vdots \\
				\sigma_m B_m(T-t)\left(\Sigma_m^{-1}\right)'\rho_m
			\end{pmatrix}
		\end{align*}
		and therefore	
		\begin{align*}
			\ubar{\pi}^{\ast}(t,v,z)&= \frac{1}{1-b}\left(\Sigma(t,z)\Sigma(t,z)'\right)^{-1}\left[\mu(t,z) + \ubar{\lambda}^{\ast}(t,v,z) - r(t,z) \1\right] \\
			& \quad 
			+\frac{1}{1-b}\left(\Sigma(t,z)\Sigma(t,z)'\right)^{-1}\left(\Sigma^z(t,z)\rho(t,z)\Sigma(t,z)'\right)'B(T-t)\\
			&= \frac{1}{1-b}\begin{pmatrix}
				(\Sigma_1\Sigma_1'z_1)^{-1}& & 0 \\
				& \ddots & \\
				0 & & (\Sigma_m\Sigma_m'z_m)^{-1}
			\end{pmatrix}
			\begin{pmatrix}
				\left(\eta_1  + \la_1(B_1(T-t))\right)z_1\\
				\vdots \\
				\left(\eta_m  + \la_m(B_m(T-t))\right)z_m
			\end{pmatrix} \\
			& \qquad + \frac{1}{1-b}\begin{pmatrix}
				\sigma_1 B_1(T-t)\left(\Sigma_1^{-1}\right)'\rho_1 \\
				\vdots \\
				\sigma_m B_m(T-t)\left(\Sigma_m^{-1}\right)'\rho_m
			\end{pmatrix} \\
			&= \frac{1}{1-b}
			\begin{pmatrix}
				(\Sigma_1\Sigma_1')^{-1}\Big(\eta_1  + \la_1(B_1(T-t)) + \sigma_1 B_1(T-t)\Sigma_1\rho_1 \Big) \\
				\vdots \\
				(\Sigma_m\Sigma_m')^{-1}\Big(\eta_m  + \la_m(B_m(T-t))+ \sigma_m B_m(T-t)\Sigma_m\rho_m \Big)
			\end{pmatrix}
	\end{align*}.
\end{proof}
\vspace{0.5cm}
\begin{proof}[Proof of Lemma \ref{lem: solution to dual HJBI PDE OU model}]
	In $\mathcal{M}_{OU},$ the minimizer $\hat{\lambda}^{\ast}$ from (\ref{eq: minimization w.r.t. lambda}) is given as
	\begin{align*}
		\hat{\lambda}^{\ast}(t,z,B) &= 	 \underset{\lambda \in \R^d}{\text{argmin}}\Bigg \{ 2(1-b)\delta_K(\lambda) + \Big \Vert \underbrace{\Sigma(t,z)^{-1}\big(\mu(t,z)-r(t,z)\1 + \lambda \big)}_{\eta + \left(b(t;\hat{T})'\sigma\right)^{-1}\lambda} + \underbrace{\left(\Sigma^z(t,z)\rho(t,z)\right)'}_{= \sigma'}B \Big\Vert^2 \Bigg \} \\ 
		&=  \underset{\lambda \in \R^d}{\text{argmin}}\left \{ 2(1-b)\delta_K(\lambda) +  \left \Vert \eta + \sigma'B + \left(b(t;\hat{T})'\sigma\right)^{-1}\lambda \right \Vert^2 \right \} \\*
		&= \la(t,B),
	\end{align*}
	i.e. $\hat{\lambda}^{\ast}(t,v,z) = \la(t,B)$ is a deterministic function independent of $z$, but dependent on $B.$ Considering the market coefficients in Condition \ref{cond: Exp Separability optimal lambda*}, we get
	\begin{align*}
		&\mu^{z}(t,z) = \kappa\odot \left(\theta-z\right) = \underbrace{\kappa\odot \theta}_{=:k_0(t)}+\underbrace{\begin{pmatrix}
				-\kappa_1 & & 0 \\
				& \ddots & \\
				0 & & - \kappa_m 
		\end{pmatrix}}_{=: k_1(t)}z \\
		&\Sigma^z(t,z)\Sigma^z(t,z)= \underbrace{\sigma \sigma'}_{=:h_0(t)} \\
		&\Sigma^z(t,z)\underbrace{\rho(t,z)}_{= I_m}\left(\Sigma^z(t,z) \underbrace{\rho(t,z)}_{=I_m}\right)'-	\Sigma^z(t,z)\Sigma^z(t,z)'= 0\\
		&r(t,z) + \delta_K(\hat{\lambda}^{\ast}(t,z,B)) = \underbrace{w_0 + \delta_K(\la(t,B))}_{=:p_0(t,B)} + \underbrace{w_1'}_{=: p_1(t,B)'}z \\
		&\left\Vert\Sigma^{-1}(t,z)\left(\mu(t,z)+\hat{\lambda}^{\ast}(t,z,B)-r(t,z)\1\right) \right \Vert^2 = \underbrace{\Vert\eta + \left(b(t; \hat{T})'\sigma\right)^{-1}\la(t,B)\Vert^2}_{=: q_0(t,B)}\\
		&\underbrace{\Sigma^z(t,z)}_{=\sigma}\underbrace{\rho(t,z)}_{=I_m}\underbrace{\Sigma^{-1}(t,z)\left(\mu(t,z)+\hat{\lambda}^{\ast}(t,z,B)-r(t,z)\1\right)}_{=\eta + \left(b(t;\hat{T})'\sigma\right)^{-1}\la(t,B)} = \underbrace{\sigma\eta + \left(b(t;\hat{T})'\right)^{-1}\la(t,B)}_{=:g_0(t,B)}.\\
	\end{align*}
	By setting the remaining coefficients $h_1,$ $l_0,$ $l_1,$ $q_1$ and $g_1$ as zero, Condition \ref{cond: Exp Separability optimal lambda*} is satisfied. Moreover, the ODEs (\ref{eq: constr ODE for A}) and (\ref{eq: constr ODE for B}) simplify to
	\begin{align*}
		A_{\tau}(\tau)&= b\left(w_0 + \delta_K(\la(T-\tau, B(\tau)))\right) + (\kappa \odot \theta)'B(\tau) + \frac{1}{2}\Vert \sigma'B(\tau)\Vert^2 \\
		& \qquad + \frac{1}{2}\frac{b}{1-b}\Bigg( \left \Vert\eta + \big(b(T-\tau; \hat{T})'\sigma\big)^{-1}\la(T-\tau,B(\tau))\right \Vert^2 \\
		&\hspace{2.5cm} +2 \left(\sigma\eta + \left(b(t;\hat{T})'\right)^{-1}\la(t,B(\tau))\right)'B(\tau) + \left \Vert \sigma'B(\tau) \right \Vert^2 \Bigg) \\
		&= bw_0 + (\kappa \odot \theta)'B(\tau) + \frac{1}{2}\Vert \sigma'B(\tau)\Vert^2 \\
		& \qquad + \frac{1}{2}\frac{b}{1-b}\Bigg(2(1-b) \delta_K(\la(T-\tau, B(\tau))) + \left \Vert \eta + \sigma'B(\tau) + \left(b(T-\tau;\hat{T})'\right)^{-1}\la(T-\tau, B(\tau))  \right \Vert^2 \Bigg) \\
		&= bw_0 + (\kappa \odot \theta)'B(\tau) + \frac{1}{2}\Vert \sigma'B(\tau)\Vert^2  +  \frac{1}{2}\frac{b}{1-b} \inf_{\lambda \in \R^d}\left(2(1-b) \delta_K(\lambda) + \left \Vert \eta + \sigma'B(\tau) + \left(b(T-\tau;\hat{T})'\right)^{-1}\lambda  \right \Vert^2 \right)
	\end{align*}
	and
	\begin{align*}
		B_{\tau}(\tau) &= bw_1 - \kappa\odot B(\tau).
	\end{align*}
	Hence, according to Theorem \ref{thm: solution to dual HJB PDE given condition EAS},
	$$
	G(t,v,z) = \frac{1}{b}v^b\exp(A(T-t) + B(T-t)'z)
	$$
	is a solution to the dual HJBI PDE (\ref{eq: dual HJB PDE general}). The corresponding candidate optimal portfolio (\ref{eq: optimal constr. pi (general)}) is given through
	\begin{align*}
		\ubar{\pi}^{\ast}(t,v,z)&= \frac{1}{1-b}\left(\Sigma(t,z)\Sigma(t,z)'\right)^{-1}\Big[\mu(t,z) + \ubar{\lambda}^{\ast}(t,v,z) - r(t,z) \1 + \left(\Sigma^z(t,z)\rho(t,z)\Sigma(t,z)'\right)'B(T-t) \Big]\\
		&= \frac{1}{1-b}\left(\sigma' b(t;\hat{T})\right)^{-1}\Big(\eta + \left( b(t;\hat{T})'\sigma\right)^{-1} \la(t,B(T-t)) + \sigma'B(T-t)\Big).
	\end{align*}
\end{proof}
\vspace{0.5cm}
\begin{proof}[Proof of Corollary \ref{cor: uniform integrability OU model}]
We again verify Condition \ref{cond: uniform integrability condition for dual} by showing the $L^q$ boundedness of $G\left(\tau^0_{n,t}, V^{v_0, \ubar{\pi}^{\ast}}_{0}(\tau^0_{n,t}),z(\tau^0_{n,t})\right)$ in $n\in \N$ for arbitrary $q>1.$ \\
As per Remark \ref{rem: explicit solution B OU Model}, there exists a closed-form expression for $B$ which is continuously differentiable. Moreover, the matrix $b(t;\hat{T})$ is continuously differentiable in $t$ and non-singular for all $t \in [0,T]$ and therefore uniformly bounded in $t\in [0,T].$ Following the same arguments as in the proof of Corollary \ref{cor: uniform integrability B.S. model}, this has the consequence that the minimizer $\la(t,B(T-t)),$ $A(T-t)$ and the candidate optimal portfolio $\pi^{\ast}(t):= \ubar{\pi}^{\ast}(t,v,z)$ are uniformly bounded in $t\in [0,T].$ 

For arbitrary $q>1,$ we can thus find a constant $C_q > 0$ such that for all $t\in [0,T]$
\begin{align}
	\Big | G(t,V^{v_0, \ubar{\pi}^{\ast}}(t),z(t)) \Big |^q &= \frac{1}{|b|}\exp\Bigg(bq\int_0^t w_0+w_1'z(s) + \eta'\sigma'b(s;\hat{T})\pi^{\ast}(s) - \frac{1}{2}\Vert \sigma'b(s;\hat{T})\pi^{\ast}(s) \Vert^2 ds \nonumber \\
	& \hspace{2cm}+ bq \int_0^t \pi^{\ast}(s)'b(s;\hat{T})'\sigma dW(s) + qA(T-t) + qB(T-t)'z(t)\Bigg) \nonumber \\
	&\leq C_q \exp \left(\underbrace{ bq \int_0^tw_1'z(s)ds + bq\int_0^t \pi^{\ast}(s)'b(s;\hat{T})'\sigma dW(s) + qB(T-t)'z(t)}_{=: X_t}\right) \nonumber \\*
	&= C_q \exp \left(X_t\right) \label{eq: first Lq bound value function OU}.
\end{align}
Since, $B$ is continuously differentiable, we can use It\^{o}'s product rule to rewrite 
\begin{align}
	B(T-t)'z(t) &= B(T)'z_0 + \int_0^t B(T-s)'dz(s) + \int_0^t z(s)'d\big(B(T-s)\big) + \underbrace{ \langle z, B(T-\cdot)\rangle_t}_{= 0} \nonumber\\
	&= B(T)'z_0 + \int_0^t B(T-s)'\kappa\odot \left[\theta - z(s)\right]-z(s)'B_{\tau}(T-s)ds  + \int_0^t B(T-s)'\sigma dW^z(s) \label{eq: expansion of B(T-t)z(t)}
\end{align}
Due to $\rho(t,z) = I_m,$ we know that $W^z(t) = W(t)$ holds $\mathcal{L}[0,T]\otimes Q$-a.e.. Hence, using (\ref{eq: expansion of B(T-t)z(t)}) and disregarding terms of finite variation, the quadratic variation of $X$ can be computed as
\begin{align*}
	\langle X \rangle_t &= \Bigg \langle  bq\int_0^tw_1'z(s)ds + bq\int_0^{\cdot} \pi^{\ast}(s)'b(s;\hat{T})'\sigma \underbrace{dW(s)}_{=dW^z(s)} + qB(T-\cdot)'z(\cdot) \Bigg \rangle_t \\
	&=  \left \langle  bq\int_0^{\cdot} \pi^{\ast}(s)'b(s;\hat{T})'\sigma +  \frac{1}{b}B(T-s)'\sigma dW^z(s)    \right \rangle_t \\
	&= b^2q^2 \int_0^t \left \Vert \sigma'\left(b(s;\hat{T})\pi^{\ast}(s) +\frac{1}{b} B(T-s)\right) \right \Vert^2  ds.
\end{align*}
Since all involved funtions are bounded and deterministic,  $\langle X \rangle_t \leq \langle X \rangle_T < \infty$ yields a deterministic upper bound on $\langle X \rangle_t$ for all $t\in [0,T].$ Therefore, we can continue equation (\ref{eq: first Lq bound value function OU}) to obtain for all $t\in [0,T]$
\begin{align}
	\Big | G(t,V^{v_0, \ubar{\pi}^{\ast}}(t),z(t)) \Big |^q &\overset{(\ref{eq: first Lq bound value function OU})}{\leq} C_q \exp(X_t) \leq \underbrace{C_q\exp\left(\frac{1}{2}\langle X \rangle_T\right)}_{=:\tilde{C}_q}\underbrace{\exp\left(X_t-\frac{1}{2}\langle X \rangle_t\right)}_{=:M_t}  = \tilde{C}_qM_t \label{eq: second Lq bound value function OU}
\end{align}
The process $M = \left(M_t\right)_{t\in [0,T]}$ is a non-negative local martingale and thus a supermartingale. Doob's optional sampling theorem (\enquote{O.S.}) implies
\begin{align*}
	\sup_{n\in \N} \E \left[ \left |G\left(\tau^0_{n,t}, V^{v_0, \ubar{\pi}^{\ast}}_{0}(\tau^0_{n,t}),z(\tau^0_{n,t})\right) \right |^q\right] \overset{(\ref{eq: second Lq bound value function OU})}{\leq}\tilde{C}_q \sup_{n\in \N} \E \left[ M_{\tau^0_{n,t}} \right] \overset{O.S.}{\leq} \tilde{C}_q M_0 = \tilde{C}_q < \infty.
\end{align*}
Hence, 
$$
\left(G\left(\tau^0_{n,t}, V^{v_0, \ubar{\pi}^{\ast}}_{0}(\tau^0_{n,t}),z(\tau^0_{n,t})\right)\right)_{n\in \N}
$$
is bounded in $L^q$ for any $q>1$ and $t\in [0,T]$ and is thus uniformly integrable for any $t\in [0,T]$ (see Theorem 4.5.9 in \cite{Bogachev2007} with $G(t) = t^q$).  Hence, Condition \ref{cond: uniform integrability condition for dual} is satisfied and $\ubar{\pi}^{\ast}$ is optimal for $\mathbf{(P)}$ by virtue of Theorem \ref{thm: verification theorem primal OP}.
\end{proof}
\vspace{0.5cm}

\newpage

\theendnotes

\end{document}